%% file: main.tex
\documentclass[12pt]{article}
\usepackage{arxiv}
\usepackage{hyperref}

\usepackage{times}
\usepackage{fullpage}

\usepackage{xurl}

\usepackage{amsmath, amsthm}
\newtheorem{theorem}{Theorem}
\theoremstyle{definition}
\newtheorem{definition}{Definition}

\usepackage{pifont}
\usepackage{url,tikz}
\usepackage{graphicx}
\usepackage{booktabs}
\usepackage{amsfonts}       
\usepackage{nicefrac}      
\usepackage{xcolor}
\usepackage{savesym}
\savesymbol{Bbbk}% colors
\usepackage{graphicx, amsfonts, amsmath,amsthm, amssymb} 

\usepackage{xcolor}
\usepackage{algorithmicx}
\usepackage{mathrsfs, comment,csquotes}
\usepackage{framed}
\usepackage{varwidth}
\usepackage{nccmath, mathtools}
\usepackage{subcaption}
\usepackage{comment}
\usepackage{multirow, multicol}
\usepackage{subcaption}
\usepackage{tikz}
\usetikzlibrary{positioning, arrows.meta, shapes.geometric, fit, backgrounds, decorations.pathmorphing}
\usepackage{fontawesome}  % for \faUser; alternatively swap for an emoji or a generic person symbol
\excludecomment{comment}

\newtheorem{remark}{Remark}

\usepackage{tcolorbox}
\usepackage{framed}

\newcommand{\mygraybox}[1]{%
  \fcolorbox{gray!80}{gray!10}{%
    \parbox{0.97\columnwidth}{#1}%
  }%
}

\newcommand{\sysname}{RootGuard}

\author{
 Divyam Anshumaan \\
  University of Wisconsin-Madison\\
   \And
 Sarthak Choudhary \\
  University of Wisconsin-Madison\\
   \And
Nils Palumbo \\
  University of Wisconsin-Madison\\
   \And
   Somesh Jha \\
  University of Wisconsin-Madison\\
}

\title{Dependency-Aware Privacy for Multi-Turn LLM Agents}
\begin{document}
\maketitle
\begin{abstract}

LLM-based agents routinely release private data across multiple services and interactions. Existing prompt sanitizers based on metric differential privacy treat each release independently. In multi-turn workflows, an adversary can combine these independent releases to recover non-trivial information about the user's private attributes, causing the effective privacy guarantee to degrade with every additional release. We show that this degradation is fundamental: if the user's private attributes are the \emph{roots} of a computation graph, independently noising a value derived from a root amplifies the root's distinguishability by up to the deriving function's Lipschitz constant $L$, which can greatly exceed the nominal privacy parameter for the nonlinear functions common in medical and financial workflows.

To address this, we propose RootGuard, which sanitizes root values once and computes all subsequent releases deterministically from the noised roots. By the post-processing theorem of differential privacy, the privacy guarantee depends only on the initial root sanitization --- regardless of the adversary's choice of functions or number of interaction turns. We show that derived values inherit privacy at zero marginal cost. RootGuard further exploits structural domain knowledge when available, such as dependencies among the declared roots (e.g., BMI is a function of height and weight) or knowledge of the task's target function, to allocate privacy budget across roots and improve the privacy-utility tradeoff.

A worst-case multi-turn adversary forcing $t$ turns of interaction increases the total privacy budget $B = t \cdot \varepsilon$ available to the user. RootGuard distributes this larger budget across roots (reducing noise per root), while independent noising spends $\varepsilon$ per release and gives the adversary $t$ independent observations to combine via MAP reconstruction. The result is a \emph{double asymmetry}: additional turns simultaneously improve RootGuard's utility and weaken independent noising's privacy. On eight medical diagnostic templates drawn from NHANES (the CDC's National Health and Nutrition Examination Survey), RootGuard achieves $2.6\times$ lower target error than independent noising at $\varepsilon = 0.1$ (7.8\% vs.\ 20.3\% wMAPE at $B = (2k{+}1)\varepsilon$), with the advantage growing as the adversary forces more turns. Under MAP reconstruction attacks, additional queries strengthen the adversary against independent noising while reconstruction under RootGuard remains invariant.
\end{abstract}

\input{sections/1_Introduction}
\input{sections/2_Background}
\input{sections/3_Method}
\input{sections/4_PrivacyAnalysis}
\input{sections/5_experiments_worst_case_adversary}
\input{sections/6_RelatedWork}
\input{sections/7_Conclusion}

\bibliographystyle{unsrt} 
\bibliography{references}

\appendix

\input{sections/Appendix}

\end{document}

%% file: sections/1_Introduction.tex
\section{Introduction}

\begin{figure*}
    \centering
    \includegraphics[width=\linewidth]{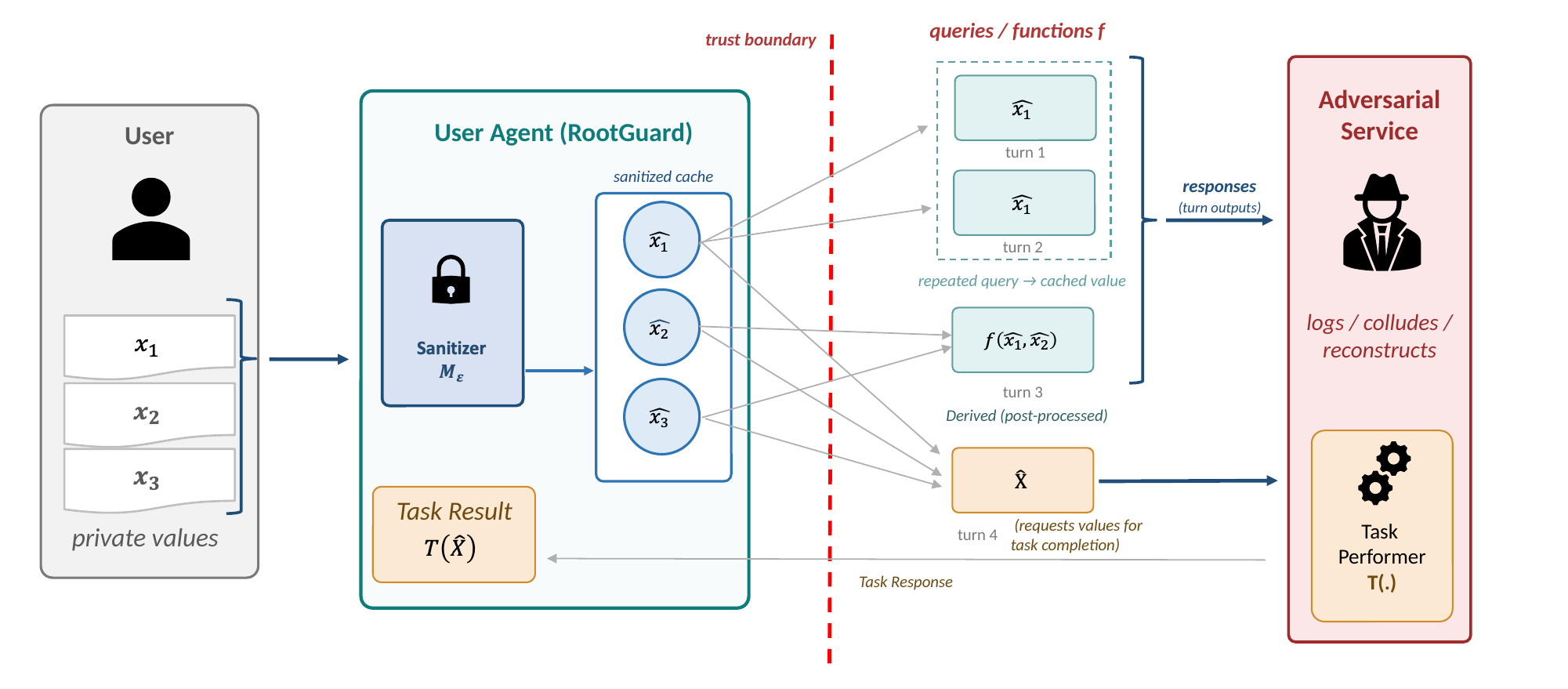}
    \caption{Overview of \sysname{} in an agentic deployment. At initialization, the user passes its private values $X = (x_1, x_2, x_3)$ to the user agent, which sanitizes each root once via $\mathcal{M}_\varepsilon$ and stores the result in the sanitized cache. All subsequent interactions with the adversarial service are answered from the cache: repeated requests for the same root return the identical cached value (turns 1, 2), derived requests are deterministic post-processing of cached roots (turn 3), and the request that supplies all roots for task computation (turn 4) carries no additional privacy cost. The trust boundary separates the user agent from the adversarial service, which performs the user's task by computing $T(\hat{X})$ and may concurrently log, collude with other parties, or attempt to reconstruct the underlying $X$. The task result $T(\hat{X})$ is returned to the user agent. By the post-processing theorem, the adversary's view across the boundary is no more informative than the initial root sanitization, regardless of which functions $f$ it requests.}
\label{fig:rootguard_overview}
\end{figure*}

The rapid integration of large language models into sensitive domains, such as healthcare, finance, legal consultation, etc. has created a new attack surface for privacy. Users increasingly delegating tasks to LLM agents that act on their behalf --- accessing external tools and services via protocols such as  MCP~\cite{anthropic2024mcp,hou2025mcp}, and executing multi-step workflows. For instance, a medical agent might submit lab values to a diagnostic service, forward results to a medication advisor, and send recommendations to insurance providers, all without human oversight of the private data released at each step.

Every intermediary in this chain is a potential adversary. Services can be malicious by design, collecting private data under the guise of providing a task. They can be compromised through prompt injection attacks~\cite{greshake2023indirect,zhan2024injecagent,debenedetti2024agentdojo}, where adversarial inputs force an agent to reveal user data or compute unintended functions on sensitive information, or they can be coerced through legal processes~\cite{openai-nyt} or breached through conventional exploits. Therefore, client-side privacy protection is not only a precaution against dishonest providers, it is a necessary defense against the compromise of honest ones. The user agent must therefore assume that \emph{any} value released to \emph{any} service may be exposed, and sanitize accordingly.

The standard defense is to add noise to each released value independently, using metric differential privacy~\cite{Chatzikokolakis2013BroadeningTS}. Methods such as Pr$\epsilon\epsilon$mpt~\cite{chowdhury2025} and CAPE~\cite{wu2025capecontextawarepromptperturbation} protect individual tokens in a prompt with formal guarantees. However, these methods treat each release in isolation, disregarding the relationships between released values. This independence assumption breaks down in practice --- even when all parties are honest.

% \paragraph{Benign leakage.}
% Consider two routine scenarios that arise naturally in agentic workflows:
 
% \begin{itemize}
%     \item \textbf{Cross-platform redundancy.} A user's agent shares the same lab value, say, fasting glucose, with a diabetes screening service and a metabolic health dashboard. Each service receives an independently noised copy. If either service is later breached or subpoenaed, the attacker holds two independent views of the same value. By the composition theorem~\cite{dwork2006calibrating}, the joint privacy cost is $2\varepsilon$, not $\varepsilon$. As agents interact with more services, the effective privacy budget scales linearly with the number of releases.
 
%     \item \textbf{Structural redundancy.} A user releases both a private root value and a derived value that depends on it --- for example, height and BMI. Each release is independently noised. But the adversary now holds two independent views of the same underlying root (i.e., height). We show (Theorem~\ref{thm:privacy-leakage}) that the effective distinguishability of the root is amplified by up to the Lipschitz constant of the deriving function --- for nonlinear functions common in medical and financial domains (ratios, reciprocals, logarithms), this amplification can exceed $100\times$ the nominal privacy parameter. 
%     % \sarthak{This could be too much detail, maybe say we formalize the leakage of privacy from this structure in Theorem in terms of the dependencies ....}
% \end{itemize}

\paragraph{Benign leakage.} Two routine scenarios suffice to break the per-release guarantee. \emph{Cross-platform redundancy:} a single root (e.g., fasting glucose) sent to two services arrives as two independently noised copies, costing $2\varepsilon$ rather than $\varepsilon$ by composition~\cite{dwork2006calibrating}; the budget scales linearly with services touched. \emph{Structural redundancy:} a root and a derived value (e.g., height and BMI) released separately each leak the root, and we show (Theorem~\ref{thm:privacy-leakage}) that for nonlinear derivations the effective distinguishability is amplified by up to the Lipschitz constant, exceeding $100\times$ the nominal $\varepsilon$ for common medical formulas. These are natural consequences of how independently-noising methods interact with multi-service, multi-value workflows. 

\paragraph{Adversarial leakage.} The problem is severe in adversarial settings. Services can be compromised through prompt injection~\cite{greshake2023indirect,zhan2024injecagent,debenedetti2024agentdojo}, giving the adversary control of the interaction. It can then request the same value repeatedly (averaging away protection at rate $1/\sqrt{q}$ over $q$ queries), force computation of arbitrary derived functions (each a fresh independent observation), or exploit the target release as a cross-root constraint linking all private values through the task function.

% \paragraph{Adversarial leakage.}
% The problem is severe in adversarial settings. Services can be compromised through prompt injection attacks~\cite{greshake2023indirect,zhan2024injecagent,debenedetti2024agentdojo}, where adversarial inputs force an agent to reveal user data or compute unintended functions on sensitive information. In the worst case, the adversary controls the interaction and can:
 
% \begin{itemize}
%     \item request the same private value \emph{repeatedly}, obtaining $q$ independent noise draws and averaging away the privacy protection (reconstruction error decreases as $1/\sqrt{q}$);
%     \item force the agent to compute \emph{arbitrary functions} of private data, each producing a fresh, independently noised observation that can be combined with previous releases;
%     \item exploit the \emph{target release} itself as a cross-root constraint, linking all private values through the known task function.
% \end{itemize}

\paragraph{The fundamental problem.} Both the benign and adversarial scenarios are instances of the same root cause: \emph{uncontrolled independent releases} of correlated private values. Without a mechanism to track dependencies between releases, every additional interaction, whether benign or adversarial, degrades the user's privacy beyond the nominal budget. Existing sanitizers~\cite{chowdhury2025,wu2025capecontextawarepromptperturbation} are stateless: they treat each value in isolation, discarding structural information that is already present in the interaction protocol.
 
\paragraph{Our approach.} We propose \sysname{} (Fig.~\ref{fig:rootguard_overview}), a dependency-aware privacy mechanism for multi-turn agentic interactions. We noise the private root values exactly once, and compute everything else deterministically from the noised roots. By the post-processing theorem of differential privacy~\cite{dwork2014algorithmic}, all derived values, regardless of the function the adversary chooses, inherit privacy at zero additional cost. Repeated queries return the same cached value. Cross-platform releases of the same root are identical. The adversary's view after $t$ turns is no more informative than after the first release.

This creates a \emph{double asymmetry}. When the adversary forces $t$ turns of interaction, the total privacy budget $B = t\varepsilon$ scales with $t$; \sysname{} redistributes this larger budget across the $k$ roots, while independent noising spends $\varepsilon$ per release. As a result, with additional turns, \sysname{}'s utility improves (more budget per root) while its privacy stays constant (all releases are post-processing). Independent noising gains nothing, since the target is always a single release at $\varepsilon$, but its privacy degrades with each fresh independent observation.

\paragraph{Exploiting domain knowledge.} 
\sysname{} requires the user agent to identify sensitive values (the \emph{roots}) --- a baseline assumption shared with prior sanitizers~\cite{chowdhury2025,wu2025capecontextawarepromptperturbation}, typically discharged via NER or explicit user declaration. Beyond this, \sysname{} can exploit two progressive levels of public domain knowledge to improve the privacy-utility tradeoff. The structural knowledge needed is increasingly available: MCP tool schemas~\cite{anthropic2024mcp} declare parameter types and computations (e.g., \texttt{computeBMI(height, weight)}), and many derived values rely on standardized public formulas (BMI, eGFR, HOMA-IR, FIB-4) that a domain-aware agent can recognize without service cooperation.
 
\begin{enumerate}
    \item \textbf{Inter-root dependencies} (Level 1): identifying which declared roots are deterministic functions of others (e.g., BMI from height and weight) lets dependent values inherit privacy from their parents at zero cost (Theorem~\ref{thm:implied-privacy}), freeing budget for truly independent roots.
    \item \textbf{Target sensitivity} (Level 2): any indication of the target function's form --- whether the exact published formula or an approximation $\hat{f}$ --- enables sensitivity-weighted budget allocation, $\varepsilon_i \propto (|h_i| \cdot D_i)^{1/2}$, which minimizes target error by directing budget to the roots that matter most (Sec.~\ref{sec:opt-budget}).
\end{enumerate}
 
\noindent Each level requires only public information (tool schemas, published formulas, population statistics) and improves the privacy-utility tradeoff without weakening the privacy guarantee.

\paragraph{Contributions.}
\begin{itemize}
    \item We formalize multi-turn agentic privacy and prove \sysname{}'s guarantee depends only on root sanitization parameters, independent of adversary function choices or query count (Thm.~\ref{thm:privacy-guarantee}).
    \item We prove independently noising a derived value amplifies parent-root distinguishability by the function's Lipschitz constant (Thm.~\ref{thm:privacy-leakage}); derived values from noised roots inherit privacy at zero marginal cost (Thm.~\ref{thm:implied-privacy}).
    \item On 8 medical templates: \sysname{} achieves $2.6\times$ lower target wMAPE than independent noising at $\varepsilon = 0.1$, $B = (2k{+}1)\varepsilon$ ($7.8\%$ vs.\ $20.3\%$), widening to $3.8\times$ at $(3k{+}1)\varepsilon$, with corresponding $2.2$--$3.3\times$ reductions in clinical Risk Class Error.
    % \item On 8 medical templates: \sysname{} achieves $2.6\times$ lower target wMAPE than independent noising at $\varepsilon = 0.1$, $B = (2k{+}1)\varepsilon$ ($7.8\%$ vs.\ $20.3\%$), widening to $3.8\times$ at $(3k{+}1)\varepsilon$.
    \item Under matched per-root noise ($\varepsilon_r = 0.1$), MAP reconstruction degrades $\mathcal{M}$-All from $21.8\%$ to $6.2\%$ wMAPE as queries grow $1{\to}16$ ($3.5\times$); \sysname{} is invariant.
    \item These findings transfer end-to-end: $21{,}600$ LLM tool-call sessions (GPT-5.4 nano) reproduce RQ1's ordering and per-template structure within bootstrap noise (Sec.~\ref{sec:experiments-rq4}).
\end{itemize}

We make the code to reproduce our experiments available at: \url{https://github.com/danshumaan/rootguard}

%% file: sections/2_Background.tex
\section{Background}

\subsection{Preliminaries}
\label{sec:preliminaries}
 
\begin{definition}[Metric Differential Privacy~\cite{Chatzikokolakis2013BroadeningTS}]
\label{def:mdp}
Let $(\mathcal{X}, d_{\mathcal{X}})$ be a metric space. A randomized mechanism $\mathcal{M}: \mathcal{X} \to \mathcal{Z}$ satisfies $\varepsilon$-metric differential privacy ($\varepsilon$-mDP) if for all $x, x' \in \mathcal{X}$ and all measurable $S \subseteq \mathcal{Z}$:
\[
    \Pr[\mathcal{M}(x) \in S] \leq e^{\varepsilon \cdot d_{\mathcal{X}}(x, x')} \Pr[\mathcal{M}(x') \in S].
\]
\end{definition}
 
\noindent The privacy loss scales with the distance $d_{\mathcal{X}}(x, x')$ between inputs, providing stronger guarantees for nearby values. This generalizes standard $\varepsilon$-DP (which corresponds to the discrete metric $d(x,x') = \mathbf{1}[x \neq x']$) to structured domains where the distance between values is meaningful.

\subsection{Threat Model}
\label{sec:threat-model}

A user agent $\mathcal{U}$, holding private values $X = (x_1, \ldots, x_k)$, engages a service $\mathcal{A}$ to obtain a useful output --- a diagnostic, recommendation, or classification computed from $X$. The user wants $\mathcal{A}$'s output but does not trust $\mathcal{A}$ with the raw values: $\mathcal{A}$ may log queries, share data with partners, be breached, or be coerced through legal processes~\cite{openai-nyt}. We model $\mathcal{A}$ as an adversary that performs the user's task while concurrently attempting to reconstruct $X$ from whatever values cross the trust boundary. Concretely, $\mathcal{A}$ accepts sanitized inputs from $\mathcal{U}$ and returns a task response $T(\hat{X})$ on those inputs (e.g., the FIB-4 liver-fibrosis risk). $T$ decomposes as $T = h \circ g$, where $g$ is some intermediate value (such as published clinical formulas) and $h$ is an opaque threshold, logic, or model the service uses to determine risk, and is not known to the user. The user agent's job is to control what crosses the trust boundary so that $\mathcal{A}$ can compute its task, but learns no more about $X$ than the post-processing theorem permits.

\paragraph{Adversary goal.} The adversary's goal is to reconstruct the user's true root values $X$ from sanitized releases within $t$ turns. It chooses a set of functions $F$ to maximize its reconstruction advantage, and may use all released values and knowledge of the sanitization pipeline to estimate the roots. In particular, if the user agent independently noises both a root $x_i$ and a derived value $f(x_i)$, the adversary obtains two independent noisy views of $x_i$, which it can combine to reduce uncertainty about the true value (Sec.~\ref{sec:redundancy-noising}). As the adversary must complete the task in $t$ turns, we assume without loss of generality that it gathers reconstruction data for the first $t - 1$ turns and obtains the roots needed to compute the target in the final turn.

\paragraph{Adversary capabilities.} $\mathcal{A}$ controls the interaction over $t$ turns. In each of the first $t - 1$ turns, it can: \textbf{(i)} request a private root value $x_i$ that has not yet been disclosed, \textbf{(ii)} force the user agent to compute $f(z_1, \dots, z_d)$ where $ z_i\in X$, for some arbitrary function $f$ specified by $\mathcal{A}$ and \textbf{(iii)} re-request a value that has already been disclosed. In the final turn, $\mathcal{A}$ requests the roots required for task computation and applies the target function $T$ on the sanitized values it receives. The user does not compute $T$ itself; computing $T(\hat{X})$ and the downstream clinical interpretation is $\mathcal{A}$'s service. This capability model captures both prompt injection attacks~\cite{greshake2023indirect,zhan2024injecagent}, where a compromised service forces the agent to execute arbitrary computations, and malicious-by-design services that deliberately probe the user's data.

\paragraph{Adversary knowledge.} The adversary has full knowledge of $\mathcal{U}$'s sanitization scheme: the mechanism, its parameters, and the per-root budget allocation (if any). Optionally, $\mathcal{A}$ may possess population-level statistics (e.g., mean and variance of each root in the general population), which it can use as a prior.

\paragraph{Public vs.\ private knowledge.} The service's task computation $T(\hat{X})$ typically decomposes into a known intermediate $g(\hat{X})$ --- a published clinical formula, an MCP tool schema~\cite{anthropic2024mcp}, or any well-known computation a domain-aware agent can identify (e.g., the FIB-4 score $g = \text{age} \cdot \text{AST} / (\text{PLT} \cdot \sqrt{\text{ALT}})$) --- followed by downstream interpretation that the user does not know (risk thresholds, classification rules, recommendation logic). The user agent exploits the public form of $g$ to allocate sanitization budget across roots (Sec.~\ref{sec:opt-budget}). The opaque downstream interpretation is irrelevant for allocation as $T$ is well-aligned with $g$'s accuracy: minimizing error in $g$ is a faithful proxy for task utility. This asymmetry --- public intermediate $g$, private root values, opaque service-side $T$ --- defines the operational regime where dependency-aware sanitization applies.
 
\paragraph{Adversary strategy.} Given the $t-1$ observations gathered during the interaction, the adversary computes the Maximum A Posteriori (MAP) estimate of the root values, combining the mechanism likelihood with any prior knowledge of the population. Under independent noising ($\mathcal{M}$-All), each observation contributes a fresh factor to the likelihood; under \sysname{}, repeated queries return identical cached values, collapsing the likelihood to a single factor regardless of $t$.

% Given the $t - 1$ observations gathered during the interaction, the adversary computes the Maximum A Posteriori (MAP) estimate of the root values:
% \[
%     \hat{X}_{\mathrm{MAP}} = \arg\max_{X} \prod_{i} p_{\mathcal{M}}(\tilde{y}_i \mid X) \cdot \pi(X),
% \]
% where $p_{\mathcal{M}}$ is the mechanism's likelihood and $\pi$ is the adversary's prior (uniform or informed). Under independent noising ($\mathcal{M}$-All), each observation contributes a fresh factor to the likelihood; under \sysname{}, repeated queries return identical cached values, collapsing the likelihood to a single factor regardless of $t$.
 
\paragraph{Benign setting.} In a less adversarial scenario, $\mathcal{A}$ requests only task-relevant values and does not attempt reconstruction. The privacy concern in this case is passive: services may log released values, which could be later exposed through data breaches or subpoenas. \sysname{}'s guarantees hold in both settings --- the post-processing property does not depend on the adversary's intent.

\paragraph{Scope of evaluation.} Our experiments (Sec.~\ref{sec:experiments-rq2}) evaluate repeated direct root queries. Under independent noising, nonlinear derived-function queries could further amplify the adversary's advantage (Sec.~\ref{sec:nonlinear-amplification}), making our reported numbers a lower bound on its vulnerability. Under \sysname{}, the adversary's function choice is irrelevant: all releases are post-processing of the initial noised roots (Theorem~\ref{thm:privacy-guarantee}).

\subsection{Setting}
\label{sec:setting}
Given the adversary model above, we formalize the interaction between $\mathcal{U}$ and $\mathcal{A}$ as a directed acyclic graph (DAG) $G = (V, E)$, constructed incrementally as the conversation progresses.
 
\begin{itemize}
    \item \textbf{Root nodes.} The user's private input values $X = (x_1, \ldots, x_k)$, which initialize the vertex set $V$. Each $x_i$ takes values in a bounded domain $\mathcal{X}_i = [\ell_i, u_i] \subset \mathbb{R}$ with domain width $D_i = u_i - \ell_i$.
    \item \textbf{Computed nodes.} Derived values $Y = \{y_1, \ldots, y_l\}$, where each $y_j = f_j(\mathrm{parents}(y_j))$ for some function $f_j$ specified by $\mathcal{A}$, with $\mathrm{parents}(y_j) \subseteq X \cup Y$. Each dependency $(p, y_j)$ forms a directed edge in $E$.
    \item \textbf{Target node.} The value $T(X)$ that completes the task --- a known, deterministic function of the roots (e.g., risk thresholding the liver-fibrosis score $\text{FIB4} = \text{age} \cdot \text{AST} / (\text{PLT} \cdot \sqrt{\text{ALT}})$). In the final turn, $\mathcal{A}$ requests the roots needed for task computation, computes $T(\hat{X})$ on the sanitized values it receives, applies its downstream interpretation, and returns the resulting clinical decision to the user.
\end{itemize}
 
\noindent During the interaction, each request from $\mathcal{A}$ adds a node to $G$. The user agent $\mathcal{U}$ computes the requested value and releases a sanitized version according to its sanitization scheme. The DAG serves as a stateful record of what has been released and how values depend on one another.
 
\paragraph{User goal.} The user agent must release sanitized values that enable accurate computation of $T(\hat{X})$ on $\mathcal{A}$'s side while minimizing the adversary's ability to reconstruct the true root values $X$. The user may exploit public domain knowledge (tool schemas, published formulas) to improve its privacy-utility tradeoff (Sec.~\ref{sec:exploiting-dependencies}).
 
\paragraph{Remark on NER} In a full deployment, a named entity recognition (NER) model identifies sensitive tokens in the user's prompt before sanitization. Following Pr$\epsilon\epsilon$mpt~\cite{chowdhury2025}, we assume perfect NER throughout this paper and focus on the sanitization mechanism itself, as it is orthogonal to the privacy-utility tradeoff we study.

%% file: sections/3_Method.tex
%%%%%%%%%%%%%%%%%%%%%%%%%%%%%%%%%%%%%%%%%%%%%%%%%%%%%
%%%%%%%%%%%%%%%%%%%%%%%%%%%%%%%%%%%%%%%%%%%%%%%%%%%%%
%%%%%%%%%%%%%%%%%%%%%%%%%%%%%%%%%%%%%%%%%%%%%%%%%%%%%
\section{Method}
\label{sec:method}
% \sarthak{We should definitely discuss the precomputation of DAG somewhere are modeling conversation between the user and remote LLM and precomputing it, implies fixing the conversation}

\subsection{System Overview}
\label{sec:system}
 
\sysname{} consists of a \emph{DAG controller} that manages privacy state across the conversation, and a \emph{prompt sanitizer} that formats each release into natural language. Figure~\ref{fig:rootguard_overview} illustrates the end-to-end flow. 
 
\paragraph{Initialization.} Before the conversation begins, the controller sanitizes each root independently (Sec.~\ref{sec:sanitization}) and caches the noised values $\hat{X} = (\hat{x}_1, \ldots, \hat{x}_k)$. If the user has knowledge of dependencies among the declared roots (e.g., BMI depends on height and weight), the controller identifies the true independent roots, recomputes the dependent roots from their noised parents, and updates the cache accordingly (Sec.~\ref{sec:exploiting-dependencies}). If the user knows the intermediate $g$ that the service's task $T$ depends on (Sec.~\ref{sec:threat-model}), the controller computes per-root sensitivities of $g$ and allocates budgets to minimize error in $g(\hat{X})$ (Sec.~\ref{sec:opt-budget}); otherwise, budget is distributed uniformly.
 
% \paragraph{Per-turn operation.} When $\mathcal{A}$ requests a value, the controller adds a node to $G$ and returns either the cached noised root or a derived value computed deterministically from cached parents. No fresh noise is ever drawn after initialization. By the post-processing theorem of differential privacy~\cite{dwork2006calibrating}, no additional privacy cost is incurred beyond the initial root sanitization. A separate \emph{prompt sanitizer} (not evaluated in this work) wraps each released value into a natural-language response; our experiments bypass this layer to isolate the privacy-utility tradeoff (Sec.~\ref{sec:exp-setup}).

\paragraph{Per-turn operation.} Each request from $\mathcal{A}$ is answered with the cached noised root or a derived value computed deterministically from cached parents; no fresh noise is drawn after initialization. By the post-processing theorem~\cite{dwork2006calibrating}, no additional privacy cost is incurred beyond the initial root sanitization.

%%%%%%%%%%%%%%%%%%%%%%%%%%%%%%%%%%%%%%%%%%%%%%%%%%%%%
%%%%%%%%%%%%%%%%%%%%%%%%%%%%%%%%%%%%%%%%%%%%%%%%%%%%%
%%%%%%%%%%%%%%%%%%%%%%%%%%%%%%%%%%%%%%%%%%%%%%%%%%%%%
\subsection{Sanitization}
\label{sec:sanitization}

\paragraph{Baseline: independent noising ($\mathcal{M}$-All).} Without structural knowledge of the roots, the default sanitization strategy is to treat each request independently. When $\mathcal{A}$ requests a value, the user computes it from the original (unsanitized) values and applies fresh, independent noise. This produces independent noisy releases of values that share the same underlying roots, giving the adversary multiple observations to combine. The structural redundancy this creates is exploitable (Sec.~\ref{sec:redundancy-noising}), and the total privacy cost grows with the number of releases.
 
\paragraph{\sysname{}.} Our proposed framework improves on this by exploiting the user's knowledge of the root nodes. It sanitizes the roots \emph{once} at the start of the interaction and computes all subsequent releases deterministically from the noised roots. By the post-processing theorem~\cite{dwork2006calibrating}, no additional privacy cost is incurred regardless of what $\mathcal{A}$ requests. The method's effectiveness improves further with additional structural knowledge (Sec.~\ref{sec:exploiting-dependencies}).
 
Both methods sanitize values according to two categories of root, following Pr$\epsilon\epsilon$mpt~\cite{chowdhury2025}.
 
\paragraph{Category-I roots ($\tau_I$).} Roots whose utility depends only on format (e.g., names, identifiers) are sanitized via format-preserving encryption: $\hat{x}_i \gets E_F(K, N_{\tau_i}, x_i)$.
 
\paragraph{Category-II roots ($\tau_{II}$).} Roots whose utility depends on numerical value (e.g., lab measurements) are sanitized via a differentially private mechanism: $\hat{x}_i \gets \mathcal{M}_{\epsilon_i}(x_i)$, with total budget $\sum_{i \in [\tau_{II}]} \epsilon_i = B$. Each root has domain $[\ell_i, u_i]$ with width $D_i$, discretized into $m$ candidates with spacing $\delta_i = D_i/(m{-}1)$.
 
\paragraph{Index-space normalization.}
All Category-II mechanisms operate in a shared index space $\{0, \ldots, m{-}1\}$. The true value $x_i$ is mapped to a normalized index $t_i = (m-1)(x_i - \ell_i)/D_i$, noise is drawn in index units according to the mechanism's distribution, the result is clamped to $[0, m{-}1]$ and rounded to the nearest integer $s$, and the output is mapped back to domain units as $x_i' = s \cdot \delta_i + \ell_i$. The sensitivity in index space is $1$ for all three mechanisms used in our experiments (Sec.~\ref{sec:exp-setup}), so a unit of $\epsilon$ provides the same worst-case index-space distinguishability for every node and is directly comparable across mechanisms, regardless of domain width. Without this normalization, the noise scale in domain units ($\delta_i = D_i/(m-1)$) would vary across nodes, making the same $\epsilon$ provide vastly different physical protection (e.g., $\delta = 0.013$ for hemoglobin vs.\ $\delta = 2.67$ for triglycerides). This enables direct comparisons across nodes and a meaningful total budget $B = \sum_r \epsilon_r$ for principled budget allocation.

% \sarthak{Examples for each category needed.}

\subsection{Exploiting Structural Knowledge}
\label{sec:exploiting-dependencies}
 
\sysname{}'s privacy and utility improve with the user's structural knowledge, exploited at two progressive levels.
 
\paragraph{Level 1: Inter-root dependencies.} The user knows that some declared roots depend on others (e.g., BMI depends on height and weight). True independent roots are identified, and dependent roots are recomputed from noised parents. This reduces the number of noise draws from $k$ to $k' < k$. By the Implied Privacy theorem (Theorem~\ref{thm:implied-privacy}), dependent roots inherit privacy for free.
 
\paragraph{Level 2: Target sensitivity.} The user wants the service's clinical decision $T(\hat{X})$ to be accurate, but does not know the threshold logic that maps the service's intermediate computation to the final decision. What the user does have is knowledge of an intermediate $g$ that $T$ depends on --- typically a published clinical formula (e.g., the FIB-4 score $g = \text{age} \cdot \text{AST} / (\text{PLT} \cdot \sqrt{\text{ALT}})$, on which liver-fibrosis risk classification depends) or any computation declared in an MCP tool schema. Because $T$ is well-aligned with $g$'s accuracy, minimizing error in $g(\hat{X})$ is a faithful proxy for task utility. This enables sensitivity-weighted budget allocation (Sec.~\ref{sec:opt-budget}).
 
\paragraph{Configurations evaluated.} We compare three configurations that progressively exploit structural knowledge: $\mathcal{M}$-All (independent noising; no structural knowledge), $\mathcal{M}$-Roots (root-only noising with uniform allocation $\epsilon_i = B/k$), and $\mathcal{M}$-Opt (root-only noising with sensitivity-weighted allocation; full \sysname{}). The progression $\mathcal{M}$-All~$\to$~$\mathcal{M}$-Roots isolates the gain from eliminating redundant noising; $\mathcal{M}$-Roots~$\to$~$\mathcal{M}$-Opt isolates the gain from dependency-aware allocation. The levels degrade gracefully: when structural knowledge is unavailable, the unknown dependencies are treated independently and the method reduces to the prior level.

%%%%%%%%%%%%%%%%%%%%%%%%%%%%%%%%%%%%%%%%%%%%%%%%%%%%%
%%%%%%%%%%%%%%%%%%%%%%%%%%%%%%%%%%%%%%%%%%%%%%%%%%%%%
%%%%%%%%%%%%%%%%%%%%%%%%%%%%%%%%%%%%%%%%%%%%%%%%%%%%%

\subsection{Sensitivity-Weighted Budget Allocation}
\label{sec:opt-budget}

\noindent The user's task utility is the accuracy of the service's clinical decision $T(\hat{X})$, which depends on an intermediate $g(\hat{X})$ via threshold logic the user does not know (Sec.~\ref{sec:threat-model}). Because $T$ is well-aligned with $g$'s accuracy, we minimize the expected absolute error of $g$ at allocation time, subject to the total budget constraint. Without loss of generality, let $x = (x_1, \dots,x_k)$ contain only $\tau_{II}$ roots (Category-I roots are sanitized via FPE under a separate security parameter and do not consume the DP budget). Let $g\colon \mathbb{R}^k \to \mathbb{R}$ be a known, differentiable intermediate that the service consumes en route to its decision. Suppose root $i$ has domain $[\ell_i, u_i]$, with width $D_i = u_i - \ell_i > 0$, discretized into $m$ candidates of width $\delta_i = D_i/(m-1)$. The partial derivative $\partial g / \partial x_i$ measures how sensitively $g$ responds to perturbations in root $i$. We evaluate the sensitivity at the population mean of each root:
\[
    h_i = \left|\frac{\partial g}{\partial x_i}(\mu)\right|,
\]
where the population means $\mu = (\mu_1, \dots, \mu_k)$ are a single publicly available vector (in our experiments, we use the NHANES data held out from the test set; see Sec.~\ref{sec:experiments-rq1}). The computed weights at the population means reflect how $g$ responds to perturbations around a typical patient, rather than at the pathological worst case. Since $\mu$ is computed from a public dataset independent of the private data being sanitized, the allocation remains fixed across all patients and no information flows from patient to mechanism. 

A first-order expansion of the error $\Delta = g(x') - g(x)$ in the per-root noise $\eta_i = x'_i - x_i$ gives $\Delta \approx \sum_i h_i \eta_i$. Minimizing the upper bound $\mathbb{E}[|\Delta|] \leq \sum_i |h_i|\,\bar{\eta}_i(\epsilon_i)$ subject to $\sum_i \epsilon_i = B$ and $\epsilon_i \geq \epsilon_{\min}$, where $\bar{\eta}_i$ is the worst-case expected noise over grid positions, yields the closed-form Laplace solution $\epsilon_i \propto \sqrt{|h_i|\,\delta_i}$ (full derivation, including saturation handling and the convexity argument, in App.~\ref{app:budget-derivation}). The objective is separable and solved numerically for the discrete Exponential, Bounded Laplace, and Staircase mechanisms; RQ3 (Sec.~\ref{sec:experiments-rq3}) verifies the $\sqrt{|h_i|\,\delta_i}$ scaling holds empirically for all three.

%% file: sections/4_PrivacyAnalysis.tex
\section{Privacy Analysis}

\noindent We define a privacy game to capture an adversary's ability to distinguish between two sanitized root vectors.

%%%%%%%%%%%%%%%%%%%%%%%%%%%%%%%%%%%%%%%%%%%%%%%%%%%%%
%%%%%%%%%%%%%%%%%%%%%%%%%%%%%%%%%%%%%%%%%%%%%%%%%%%%%
%%%%%%%%%%%%%%%%%%%%%%%%%%%%%%%%%%%%%%%%%%%%%%%%%%%%%
\subsection{DAG Privacy Game $G_{\mathrm{DAG}}$}

The adversary $\mathcal{A}$ chooses two root vectors $X^0$ and $X^1$ with the same leakage and any arbitrary set of functions $F=\{f_1, \dots, f_l\}$. The user sanitizes a randomly selected root vector ($\hat{X}^b$) and returns it to the adversary. Since the adversary controls the interaction with the user agent, it can compute any $f_i$ on the released root vector $\hat{X}^b$ to improve its advantage, and constructs the conversation DAG $\hat{G}_b$. Given $X^0$ and $X^1$, the adversary must guess which root vector was selected. 

\begin{algorithmic}[1]
\Statex \underline{Initialize}:
\State $K \leftarrow S()$
\State $b \overset{{\scriptscriptstyle\$}}{\leftarrow} \{0,1\}$ \textcolor{blue}{$\rhd$} Select a random bit
\Statex \underline{$\mathrm{Challenge}(X^0, X^1,F)$}: \textcolor{blue}{$\rhd$} Adversary selects two root vectors and an arbitrary set of functions
\State $L_0 \gets L(X^0, F, C)$ ; $L_1 \gets L(X^1, F, C)$
\Statex \textcolor{blue}{$\rhd$} $L$ is the leakage function; $F$ is the chosen function set
\State \textbf{if} $L_0 \neq L_1$ \textbf{then return} $\bot$
\Statex \textcolor{blue}{$\rhd$} Only root pairs with the same leakage are valid
\Statex \underline{Sanitize and Construct}:
\State \textbf{for} $i \in \mathcal{I}$: $\hat{x}^b_i \gets E_F(K, N_{\tau_i}, x^b_i)$ \textcolor{blue}{$\rhd$} FPE for category-I roots
\State \textbf{for} $i \in \mathcal{II}$: $\hat{x}^b_i \gets \mathcal{M}_{\epsilon_i}(x^b_i)$ \textcolor{blue}{$\rhd$} mDP for category-II roots; sampled once, reused
\State \textbf{return} $\hat{X}^b$ \hfill \textcolor{blue}{$\rhd$} User returns the sanitized roots. Adversary may compute the full graph $\hat{G}_b \gets C(\hat{X}^b, F)$ from sanitized roots and use any strategy to guess $b$.
% \Statex \textcolor{blue}{$\rhd$} Return the sanitized graph to the adversary
\Statex \underline{$\mathrm{Finalize}(b')$}:
\State \textbf{return} $[b' = b]$ \hfill \textcolor{blue}{$\rhd$} $b'$ is the adversary's guess for $b$
\end{algorithmic}

\medskip
The adversary's advantage is defined as: 
\begin{align}
\label{alg:full-dag-adv}
    \mathrm{Adv}_{\mathrm{DAG}}(\mathcal{A}) = |2\Pr[G_{\mathrm{DAG}}(\mathcal{A}) = 1] - 1|
\end{align}

%%%%%%%%%%%%%%%%%%%%%%%%%%%%%%%%%%%%%%%%%%%%%%%%%%%%%
%%%%%%%%%%%%%%%%%%%%%%%%%%%%%%%%%%%%%%%%%%%%%%%%%%%%%
%%%%%%%%%%%%%%%%%%%%%%%%%%%%%%%%%%%%%%%%%%%%%%%%%%%%%

%%%%%%%%%%%%%%%%%%%%%%%%%%%%%%%%%%%%%%%%%%%%%%%%%%%%%
%%%%%%%%%%%%%%%%%%%%%%%%%%%%%%%%%%%%%%%%%%%%%%%%%%%%%
%%%%%%%%%%%%%%%%%%%%%%%%%%%%%%%%%%%%%%%%%%%%%%%%%%%%%
\subsection{Privacy Guarantee}

\begin{theorem}[DAG Privacy Guarantee] Consider an adversary $\mathcal{A}$ in the setting of the full-graph privacy game $G_{\mathrm{DAG}}$. Let $(X^0, X^1, F)$ be the adversary's chosen root vectors and function set, with roots of type $\tau_I$ and $\tau_{II}$. For the DAG privacy game with the privacy budget $\sum_{i\in[\tau_{II}]} \epsilon_i = B$ and the security parameter $\kappa$, 

\begin{align}
    \mathrm{Adv}_{\mathrm{DAG}}(\mathcal{A}) \leq \sum_{i\in[\tau_{II}]} e^{\frac{\epsilon_i}{\delta_i}|x^0_i - x^1_i|} + \big|[\tau_{I}]\big|\cdot\mathrm{negl}(\kappa).
\end{align}
\label{thm:privacy-guarantee}
The bound depends only on the root sanitization parameters $\epsilon_i$ and the root distances $|x^0_i - x^1_i|$. It is independent of the adversary's choice of functions $F$ as it gains no additional distinguishing power from computing arbitrary functions on the sanitized roots.
\end{theorem}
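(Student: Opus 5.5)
The proof reduces the adversary's view to the sanitized root vector $\hat{X}^b$ and then bounds distinguishability root by root. The guarantee then follows from a hybrid argument combined with post-processing.

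\textbf{Reduction to the roots.} In $G_{\mathrm{DAG}}$ the only randomness that depends on $b$ is in the sanitization step. The graph $\hat{G}_b = C(\hat{X}^b, F)$ and any later computation are deterministic functions of $\hat{X}^b$ and $F$, possibly with extra coins that do not depend on $b$. By the post-processing theorem of differential privacy, any guessing strategy $b'$ is a randomized function of $\hat{X}^b$ alone. So $\mathrm{Adv}_{\mathrm{DAG}}(\mathcal{A})$ is at most the total variation distance between the two distributions of $\hat{X}^0$ and $\hat{X}^1$. This step also shows the bound cannot depend on $F$.

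\textbf{Hybrid argument over roots.} Order the roots and define hybrids $H_j$. In $H_j$, the first $j$ roots are sanitized from $X^1$ and the rest from $X^0$. Because each root is sanitized independently, with fresh mechanism coins per Category-II root and FPE per Category-I root, consecutive hybrids differ in one coordinate only. Their distance is therefore the distance between the two marginal distributions of that coordinate. The triangle inequality then gives $\mathrm{Adv} \le \sum_i \Delta_i$, where $\Delta_i$ is the single-root distinguishing advantage.

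\textbf{Bounding the per-root terms.}
\begin{itemize}
\item \emph{Category-I roots.} The leakage constraint $L_0 = L_1$ forces $x^0_i$ and $x^1_i$ to share the leaked format and length. FPE security under key $K$ then gives $\Delta_i \le \mathrm{negl}(\kappa)$ via a standard reduction to the FPE indistinguishability game. This contributes $|[\tau_I]|\cdot\mathrm{negl}(\kappa)$.
\item \emph{Category-II roots.} The mechanism $\mathcal{M}_{\epsilon_i}$ is $\epsilon_i$-mDP in index space with index sensitivity $1$. The index distance is $|t^0_i - t^1_i| = |x^0_i - x^1_i|/\delta_i$. By Definition~\ref{def:mdp}, every event $S$ satisfies $\Pr[\mathcal{M}(x^0_i)\in S] \le e^{\epsilon_i|x^0_i - x^1_i|/\delta_i}\Pr[\mathcal{M}(x^1_i)\in S]$. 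This gives $\Delta_i \le e^{\epsilon_i|x^0_i-x^1_i|/\delta_i} - 1$, which is at most the stated term $e^{\frac{\epsilon_i}{\delta_i}|x^0_i - x^1_i|}$.
\end{itemize}
The clamping and rounding in the index-space pipeline are themselves post-processing, so they do not affect the mDP guarantee.

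\textbf{Main obstacle.} The delicate step is the reduction to the roots in the interactive setting. The adversary chooses $F$, and in the multi-turn reading it may choose later queries adaptively. I would handle this by noting that every turn's answer is a deterministic function of the cached $\hat{X}^b$ and the adversary's own prior view. The entire transcript is therefore an adaptive post-processing of a single draw of $\hat{X}^b$, and the standard post-processing argument covers it by conditioning on the adversary's coins.

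A secondary subtlety is the mixed hybrid chain. The Category-I steps must be bounded computationally while the Category-II steps are bounded statistically. I would make this explicit by constructing, for each Category-I hybrid step, an FPE adversary that simulates all the other coordinates itself. This is possible because those coordinates do not depend on $K$, except through encryption-oracle calls it can make.
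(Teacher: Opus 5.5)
Your proposal is correct and follows essentially the same route as the paper's proof: reduce the adversary's view to $\hat{X}^b$, run a root-by-root hybrid with the triangle inequality, and bound each step by FPE security or by the $\epsilon_i$-mDP inequality in index space, which gives $e^{\epsilon_i|x^0_i-x^1_i|/\delta_i}-1 \le e^{\epsilon_i|x^0_i-x^1_i|/\delta_i}$. You are more careful than the paper on two points: adaptive queries, and the Category-I steps, where the shared key $K$ breaks independence across coordinates so the marginal/total-variation argument has to give way to an oracle-simulating FPE reduction. The paper handles both by simply conditioning on the realized values of the other roots.
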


\paragraph{Proof Sketch} Since the adversary's view is entirely determined by $\hat{X}^b$, we bound the advantage directly via a hybrid argument over the roots. Define hybrids $X^{(0)} = X^0, \ldots, X^{(k)} = X^1$, where each consecutive pair differs in a single root. By the triangle inequality, the total advantage is bounded by the sum of single-root advantages. Each single-root advantage is bounded by FPE security (for $\tau_I$ roots) or the $\epsilon_j$-mDP guarantee (for $\tau_{II}$ roots). Since the bound involves only root parameters and distances, it holds for every choice of $F$, including the adversary's optimal choice. The full proof is given in App.~\ref{app:thm:privacy-guarantee}.
 
\begin{remark}[Contrast with independent noising]
\label{rem:m-all-contrast}
Under $\mathcal{M}$-All, the user independently noises each requested value and returns all noised values to the adversary. The adversary therefore receives independently noised versions of both $x_i$ and $f(x_i)$, obtaining multiple independent observations about the same root. The privacy cost of these additional releases is not captured by the root-only bound above and is analyzed in Sec.~\ref{sec:privacy-leakage}.
\end{remark}

%%%%%%%%%%%%%%%%%%%%%%%%%%%%%%%%%%%%%%%%%%%%%%%%%%%%%
%%%%%%%%%%%%%%%%%%%%%%%%%%%%%%%%%%%%%%%%%%%%%%%%%%%%%
%%%%%%%%%%%%%%%%%%%%%%%%%%%%%%%%%%%%%%%%%%%%%%%%%%%%%
\subsection{Privacy Leakage due to Independent Noising}
\label{sec:privacy-leakage}

Under the index-space mechanisms described in Sec.~\ref{sec:sanitization}, every node is mapped to the same index space $\{0, \ldots, m{-}1\}$ before noising. Independently releasing $\mathcal{M}_1(x)$ and $\mathcal{M}_2(f(x))$, each with parameter $\epsilon$, yields $2\epsilon$-mDP with respect to the index-space metric $d(t,t') = |t - t'|$. Since every node's index distance is bounded by $(m{-}1)$, this guarantee appears uniform across nodes.

However, independent noising creates two distinct sources of additional leakage that this guarantee does not capture: amplification through nonlinear functions (Sec.~\ref{sec:nonlinear-amplification}) and accumulation through repeated queries (Sec.~\ref{sec:redundancy-noising}).
 
%%%%%%%%%%%%%%%%%%%%%%%%%%%%%%%%%%%%%%%%%%%%%%%%%%%%%
\subsubsection{\textbf{Amplification through nonlinear functions.}}
\label{sec:nonlinear-amplification}
 
Consider two root values $x_0, x_1$ separated by $d$ index steps: $|t_{x_0} - t_{x_1}| = d$. In domain space, this corresponds to $|x_0 - x_1| = d \cdot D_x / (m{-}1)$. The derived values' index-space distance is:
\[
    |t_{f(x_0)} - t_{f(x_1)}| = \frac{m-1}{D_y}|f(x_0) - f(x_1)| \leq \frac{L \cdot D_x}{D_y} \cdot d,
\]
where $L$ is the Lipschitz constant of $f$ and $D_y$ is the derived domain width. The effective \emph{index-space amplification factor} is $L \cdot D_x / D_y$.
 
For linear functions this factor is exactly $1$ (normalization absorbs linear scaling). For nonlinear functions it can far exceed $1$: e.g., $f(x) = 1/x$ on $[1, 100]$ yields ${\sim}100\times$ amplification. The functions underlying the medical templates used in our experiments (Sec.~\ref{sec:exp-setup}) exhibit this behavior. We formalize the amplification in the following theorem:

\begin{theorem}[Distinguishability Amplification]
\label{thm:privacy-leakage}
Let $(\mathcal{X}, d_{\mathcal{X}})$ and 
$(\mathcal{Y}, d_{\mathcal{Y}})$ be metric spaces. Let 
$f\colon \mathcal{X} \to \mathcal{Y}$ be a Lipschitz function 
satisfying, for all $x, x' \in \mathcal{X}$,
\[
    \alpha \cdot d_{\mathcal{X}}(x, x') 
    \;\leq\; 
    d_{\mathcal{Y}}(f(x), f(x')) 
    \;\leq\; L \cdot d_{\mathcal{X}}(x, x'),
\]
for constants $\alpha, L > 0$. Let $\mathcal{M}\colon \mathcal{Y} \to \mathcal{Y}$ be an $\epsilon$-mDP mechanism on $(\mathcal{Y}, d_{\mathcal{Y}})$. Define $\psi = \mathcal{M} \circ f \colon \mathcal{X} \to \mathcal{Y}$. Then for all $x, x' \in \mathcal{X}$ and all measurable $T \subseteq \mathcal{Y}$:
\[
    \Pr[\psi(x) \in T] 
    \;\leq\; 
    e^{\epsilon\, L\, d_{\mathcal{X}}(x, x')} 
    \Pr[\psi(x') \in T].
\]
That is, the distinguishability level~\cite{Chatzikokolakis2013BroadeningTS} between any two inputs $x, x'$ is amplified by a factor of $L$ relative to the nominal parameter $\epsilon$. As $L$ is a supremum, the bound is tight.
\end{theorem}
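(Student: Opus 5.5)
The inequality itself is a one-line consequence of Definition~\ref{def:mdp} combined with the Lipschitz upper bound. Fix $x, x' \in \mathcal{X}$ and a measurable $T \subseteq \mathcal{Y}$, and set $y = f(x)$, $y' = f(x')$. Since $\psi = \mathcal{M}\circ f$ with $f$ deterministic, $\Pr[\psi(x)\in T] = \Pr[\mathcal{M}(y)\in T]$, and likewise for $x'$. The first step is to apply the $\epsilon$-mDP guarantee of $\mathcal{M}$ on $(\mathcal{Y}, d_{\mathcal{Y}})$ to the pair $(y,y')$:
\[
\Pr[\mathcal{M}(y)\in T] \le e^{\epsilon\, d_{\mathcal{Y}}(y,y')}\Pr[\mathcal{M}(y')\in T].
\]
The second step uses the upper Lipschitz inequality $d_{\mathcal{Y}}(f(x),f(x')) \le L\, d_{\mathcal{X}}(x,x')$ together with monotonicity of the exponential, which yields the claimed $e^{\epsilon L d_{\mathcal{X}}(x,x')}$. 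Two points need care. Measurability of $T$ transfers directly, because $\mathcal{M}$'s output space is $\mathcal{Y}$ itself. The lower constant $\alpha$ plays no role in the upper bound; I would remark that it only guarantees $f$ is injective, so that distinct roots remain distinguishable through $f$.

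The main work is justifying ``the bound is tight.'' I would interpret tightness as follows: the factor $L$ cannot be replaced by any smaller constant uniformly over all $\epsilon$-mDP mechanisms $\mathcal{M}$ and all $f$ with Lipschitz constant $L$. To show this, I would use the fact that $L$ is the supremum of $d_{\mathcal{Y}}(f(x),f(x'))/d_{\mathcal{X}}(x,x')$ to pick pairs $(x_n,x'_n)$ whose ratio tends to $L$. Then I need an $\epsilon$-mDP mechanism whose privacy-loss ratio at $(y_n,y'_n)$ actually attains $e^{\epsilon d_{\mathcal{Y}}(y_n,y'_n)}$. The concrete choice on $\mathcal{Y}\subseteq\mathbb{R}$ is the Laplace mechanism with density proportional to $e^{-\epsilon|z-y|}$. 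Taking $T = (-\infty, \min(y_n,y'_n)]$ makes the ratio of probabilities exactly $e^{\epsilon|y_n-y'_n|}$, which gives the ratio $e^{\epsilon d_{\mathcal{Y}}(y_n,y'_n)} \to e^{\epsilon L d_{\mathcal{X}}(x_n,x'_n)}$ asymptotically. Consequently no constant $L' < L$ can replace $L$.

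The hard step is stating tightness honestly. For general metric spaces, or for the discrete index-space mechanisms of Sec.~\ref{sec:sanitization} with clamping, the loss may not be attained exactly. I would therefore state tightness in the supremum sense: approached along a sequence, and exhibited by the Laplace mechanism on the real line. This suffices to justify the paper's instantiation, for example $f(x)=1/x$ on $[1,100]$, where $L$ is attained near $x=1$.
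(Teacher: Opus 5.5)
Your proof of the inequality is exactly the paper's: apply the $\epsilon$-mDP guarantee of $\mathcal{M}$ to the pair $(f(x),f(x'))$ and then the upper Lipschitz bound, with $\alpha$ never used. The paper offers nothing for tightness beyond the remark that $L$ is a supremum, so your Laplace construction is a genuine addition, and two small fixes make it clean: label the pair so that $f(x_n)<f(x_n')$ with $T_n$ the half-line ending at $f(x_n)$, and state the limit as $\frac{1}{d_{\mathcal{X}}(x_n,x_n')}\ln\frac{\Pr[\psi(x_n)\in T_n]}{\Pr[\psi(x_n')\in T_n]}=\epsilon\,\frac{d_{\mathcal{Y}}(f(x_n),f(x_n'))}{d_{\mathcal{X}}(x_n,x_n')}\to\epsilon L$ rather than $e^{\epsilon d_{\mathcal{Y}}}\to e^{\epsilon L d_{\mathcal{X}}}$, which is vacuous when $d_{\mathcal{X}}(x_n,x_n')\to 0$ (as for $1/x$ near $1$); also, since $\mathcal{M}$ must map into $\mathcal{Y}$, clamp the Laplace output into a bounded $\mathcal{Y}$, which leaves both probabilities on $T_n$, and hence the ratio, exactly unchanged.
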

 
The proof is available in App.~\ref{app:thm:privacy-leakage}. In index space, the amplification factor becomes $L_{\mathrm{idx}} = L \cdot D_x / D_y$.

\subsubsection{Accumulation through repeated releases.}
\label{sec:redundancy-noising}
 
When $\mathcal{M}$-All independently sanitizes both $x$ and $f(x)$, the joint distinguishability is additive: $\epsilon(1 + L)\,d_\mathcal{X}(x,x')$. More generally, with $d$ independently noised derived nodes $f_i(x)$ (each $(m_i, L_i)$-bi-Lipschitz, parameter $\epsilon$), the joint level is $\epsilon(1 + \sum_{i=1}^{d} L_i)\,d_\mathcal{X}(x,x')$. Under the worst-case adversary, even repeated root queries (where $L_{\mathrm{idx}} = 1$) produce fresh noise draws, enabling MAP estimation that reduces reconstruction error as $\sqrt{t}$.
 
\paragraph{MAP reconstruction.} The accumulation of independent observations has a concrete operational consequence. An adversary who observes $t-1$ independent releases of a root $x_r$ can compute the Maximum A Posteriori estimate:
\[
    \hat{x}_{\mathrm{MAP}} = \arg\max_{x_r} \prod_{i=1}^{t-1} p_{\mathcal{M}}(\tilde{y}_i \mid x_r, \epsilon) \cdot \pi(x_r),
\]
where $p_{\mathcal{M}}$ is the mechanism's PMF and $\pi$ is any prior. Under \sysname{}, all queries return the same deterministic output $\hat{x}_r$, so the likelihood reduces to a single factor $p_{\mathcal{M}}(\hat{x}_r \mid x_r, \epsilon_r)$ regardless of $t$. We evaluate this attack empirically in Sec.~\ref{sec:experiments-rq2}.

%%%%%%%%%%%%%%%%%%%%%%%%%%%%%%%%%%%%%%%%%%%%%%%%%%%%%
%%%%%%%%%%%%%%%%%%%%%%%%%%%%%%%%%%%%%%%%%%%%%%%%%%%%%
%%%%%%%%%%%%%%%%%%%%%%%%%%%%%%%%%%%%%%%%%%%%%%%%%%%%%
 
%%%%%%%%%%%%%%%%%%%%%%%%%%%%%%%%%%%%%%%%%%%%%%%%%%%%%
%%%%%%%%%%%%%%%%%%%%%%%%%%%%%%%%%%%%%%%%%%%%%%%%%%%%%
%%%%%%%%%%%%%%%%%%%%%%%%%%%%%%%%%%%%%%%%%%%%%%%%%%%%%

\subsection{Implied Privacy} The preceding section quantifies the cost of independent noising. We now show the converse, a coordinated mechanism that noises only the roots automatically extends privacy to all derived nodes, with no additional expenditure of budget.

\begin{theorem}[Implied Privacy]
\label{thm:implied-privacy}
Let $(\mathcal{X}, d_{\mathcal{X}})$ and $(\mathcal{Y}, d_{\mathcal{Y}})$ be metric spaces. 
Let $\mathcal{M}: \mathcal{X} \to \mathcal{X}$ be an $\epsilon$-metric differentially private ($\epsilon$-mDP) mechanism on $(\mathcal{X}, d_{\mathcal{X}})$ that adds noise to its input.

Let $f: \mathcal{X} \to \mathcal{Y}$ be a bi-Lipschitz function satisfying, for all $x, x' \in \mathcal{X}$,
\[
    \alpha \cdot d_{\mathcal{X}}(x, x') 
    \;\leq\; d_{\mathcal{Y}}(f(x), f(x')) 
    \;\leq\; L \cdot d_{\mathcal{X}}(x, x') ,
\]
for some constants $\alpha, L > 0$. Define a post-processing mechanism $\phi = f \circ \mathcal{M} : \mathcal{X} \to \mathcal{Y}.$ Then, for all $x, x' \in \mathcal{X}$ and all measurable sets $T \subseteq \mathcal{Y}$,
\[
    \Pr[\phi(x) \in T]
    \;\leq\;
    e^{\frac{\epsilon}{\alpha} \, d_{\mathcal{Y}}(f(x), f(x'))}
    \Pr[\phi(x') \in T].
\]
where the probability is taken over the randomness of $\mathcal{M}$. Equivalently, $\phi$ is $\frac{\epsilon}{\alpha}$-mDP with respect to the metric $d_{\mathcal{Y}}(f(x), f(x'))$.

Hence, adding mDP noise directly to $y = f(x)$ in $(\mathcal{Y}, d_{\mathcal{Y}})$ with privacy parameter $\frac{\epsilon}{\alpha}$ provides at least as strong privacy as noising $x$ and then applying $f$.
\end{theorem}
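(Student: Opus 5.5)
The plan is to reduce the statement to the $\epsilon$-mDP guarantee of $\mathcal{M}$ on $(\mathcal{X}, d_{\mathcal{X}})$ via the post-processing principle, and then change metrics using the lower Lipschitz bound. Fix $x, x' \in \mathcal{X}$ and a measurable $T \subseteq \mathcal{Y}$. Since $f$ is Lipschitz, it is continuous and hence Borel measurable, so the preimage $S = f^{-1}(T) \subseteq \mathcal{X}$ is measurable. The event $\{\phi(x) \in T\}$ coincides with $\{\mathcal{M}(x) \in S\}$, because $f$ is deterministic and the only randomness is that of $\mathcal{M}$. The same identity holds with $x'$ in place of $x$.

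Applying Definition~\ref{def:mdp} to $\mathcal{M}$ with the set $S$ gives
\[
\Pr[\phi(x) \in T] = \Pr[\mathcal{M}(x) \in S] \le e^{\epsilon\, d_{\mathcal{X}}(x,x')} \Pr[\mathcal{M}(x') \in S] = e^{\epsilon\, d_{\mathcal{X}}(x,x')}\Pr[\phi(x') \in T].
\]
This is the post-processing step. It uses no property of $f$ beyond measurability.

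The key step is converting the exponent from $d_{\mathcal{X}}$ to $d_{\mathcal{Y}}$. The lower bi-Lipschitz inequality $\alpha\, d_{\mathcal{X}}(x,x') \le d_{\mathcal{Y}}(f(x),f(x'))$ with $\alpha > 0$ yields $d_{\mathcal{X}}(x,x') \le \frac{1}{\alpha} d_{\mathcal{Y}}(f(x),f(x'))$. Because $\epsilon \ge 0$ and the exponential is monotone, $e^{\epsilon d_{\mathcal{X}}(x,x')} \le e^{\frac{\epsilon}{\alpha} d_{\mathcal{Y}}(f(x),f(x'))}$, which is the claimed inequality. The upper constant $L$ is not needed for this direction. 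The lower bound also makes $f$ injective, so the pseudometric $(x,x') \mapsto d_{\mathcal{Y}}(f(x),f(x'))$ is a genuine metric on $\mathcal{X}$. This justifies the phrasing "$\phi$ is $\frac{\epsilon}{\alpha}$-mDP with respect to the metric $d_{\mathcal{Y}}(f(x), f(x'))$."

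The main obstacle is the final comparative sentence, which says that noising $y=f(x)$ directly with parameter $\frac{\epsilon}{\alpha}$ is at least as private as $f\circ\mathcal{M}$. The plan is to read this as a comparison of guaranteed distinguishability levels, not of specific mechanisms. A mechanism $\mathcal{M}'$ that is $\frac{\epsilon}{\alpha}$-mDP on $(\mathcal{Y}, d_{\mathcal{Y}})$ satisfies $\Pr[\mathcal{M}'(f(x)) \in T] \le e^{\frac{\epsilon}{\alpha} d_{\mathcal{Y}}(f(x),f(x'))}\Pr[\mathcal{M}'(f(x')) \in T]$ for every pair of inputs, which is exactly the bound just proved for $\phi$. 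So both constructions certify the same distinguishability bound as functions of $(x,x')$, and $\phi$'s certified bound is no worse. I would state this explicitly as a statement about guarantees, since it is not a pointwise dominance between the two output distributions, and note that it is a clean consequence of the displayed inequality.
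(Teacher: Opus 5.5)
Your proposal is correct and follows the paper's proof essentially step for step: pull $T$ back to $S = f^{-1}(T)$, apply the $\epsilon$-mDP guarantee of $\mathcal{M}$, and then change metrics using the lower bi-Lipschitz bound $d_{\mathcal{X}}(x,x') \le \frac{1}{\alpha}\, d_{\mathcal{Y}}(f(x),f(x'))$. Your extra remarks are points the paper leaves implicit: measurability of $f^{-1}(T)$, the fact that $L$ is never used, and injectivity of $f$ making $d_{\mathcal{Y}}(f(\cdot),f(\cdot))$ a genuine metric. Reading the final ``hence'' sentence as a comparison of certified bounds, not of output distributions, is a sensible safeguard, since the displayed inequality only upper-bounds $\phi$'s distinguishability.
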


The proof is available in App.~\ref{app:thm:implied-privacy}. In index space, the inherited parameter is $\epsilon \cdot D_y / (\alpha \cdot D_x)$; when $D_y/D_x$ is small, strong protection is inherited for free. Combined with Theorem~\ref{thm:privacy-guarantee}, this guarantees that under \sysname{}, repeated queries and derived functions return deterministic post-processing of the initial noised roots, and the adversary's reconstruction power does not grow with $t$.

%% file: sections/5_experiments_worst_case_adversary.tex
\section{Experiments}

We conduct experiments to answer the following research questions:\\
 
\noindent\mygraybox{%
{\renewcommand{\labelenumi}{\textbf{RQ\arabic{enumi}:}}
\begin{enumerate}
\item \textbf{(Utility):} How much does root-only noising with dependency-aware budget allocation reduce target-node error compared to independent per-release noising?
\item \textbf{(Reconstruction):} Under a MAP adversary with repeated queries at matched per-root noise, how much does $\mathcal{M}$-All's privacy degrade compared to \sysname{}?
\item \textbf{(Structural Analysis):} What structural properties of the template and budget allocation determine the advantage provided by dependency-aware noising?
\item \textbf{(Deployment):} Do the mechanism-level findings transfer to a real LLM-driven tool-call deployment?
\end{enumerate}}%
}\\

\subsubsection*{Summary of Findings.}

\paragraph{\textbf{RQ1}} \sysname{} achieves $2.6\times$ lower aggregate wMAPE than $\mathcal{M}$-All at $\varepsilon = 0.1$ (7.8\% vs.\ 20.3\%, Exponential, $B = (2k{+}1)\varepsilon$), widening to $3.8\times$ at $(3k{+}1)\varepsilon$ and at stronger privacy. The wMAPE wins translate to clinical decisions: $2.2\times$ lower Risk Class Error at $(2k{+}1)\varepsilon$ (7.0\% vs.\ 15.7\%), widening to $3.3\times$ at $(3k{+}1)\varepsilon$. Additional adversarial turns improve \sysname{}'s utility while $\mathcal{M}$-All is static.

\paragraph{\textbf{RQ2}} $\mathcal{M}$-All's reconstruction wMAPE drops from $10.2\%$ to $3.2\%$ as $q$ increases from $1$ to $16$ ($\varepsilon_r = 0.1$, Strategy B, informed prior); \sysname{} is invariant. Each adversarial query produces an independent observation that tightens the MAP posterior; \sysname{}'s responses contribute no new information after the first release.

\paragraph{\textbf{RQ3}} \sysname{} wins on every template at every budget; the magnitude of the win tracks whether the target formula compresses or amplifies root noise. Compressive templates (AIP, HOMA, FIB4) yield the largest gains ($5.1\times$, $3.9\times$, $4.4\times$ at $B=(3k{+}1)\varepsilon$); amplifying templates (ANEMIA, CONICITY) yield smaller multiplicative gains on already-low absolute errors. The allocation follows $\varepsilon_i \propto (|h_i| D_i)^{1/2}$ (fitted slope 0.50), is mechanism-independent, and reclaims budget from zero-sensitivity roots.

\paragraph{\textbf{RQ4}} The RQ1--3 ordering and per-template structure transfer cell-for-cell to a real LLM-driven deployment (GPT-5.4 nano~\cite{openai2026gpt54nano} agent harness, 21{,}600 sessions, 0\% session-level compliance failures), confirming that the privacy-utility tradeoff established at the mechanism level holds end-to-end.
 
% \noindent\textbf{Summary of Findings.} Utility is measured via wMAPE and Risk Class Error; reconstruction vulnerability via a MAP adversary with uniform and informed priors at $q \in \{1, 4, 8\}$ repeated queries (Sec.~\ref{sec:metrics}).
 
% \textbf{RQ1:} \sysname{} achieves $2.3\times$ lower aggregate wMAPE than $\mathcal{M}$-All at $\varepsilon = 0.1$ (7.6\% vs.\ 17.1\%, Exponential, $B = (2k{+}1)\varepsilon$), widening at stronger privacy. A double asymmetry emerges: additional adversarial turns improve \sysname{}'s utility while $\mathcal{M}$-All's stays flat. The win rate increases from 27\% to 65\% as budget grows.
 
% \textbf{RQ2:} $\mathcal{M}$-All's reconstruction wMAPE drops from 6.3\% to 3.2\% as $q$ increases from 1 to 8 ($\varepsilon_r = 0.1$, informed prior); \sysname{} is invariant. Two effects compromise $\mathcal{M}$-All: the target release leaks a cross-root constraint at $q = 1$, and repeated queries enable noise averaging as $q$ grows.
 
% \textbf{RQ3:} The advantage is governed by whether the target formula compresses or amplifies root noise: compressive templates yield large gains (FIB4: $5.1\times$, HOMA: $4.7\times$); amplifying templates yield small losses (ANEMIA: $-0.3$pp). The allocation follows $\varepsilon_i \propto (|h_i| D_i)^{1/2}$ (fitted slope 0.50), is mechanism-independent, and reclaims budget from zero-sensitivity roots.

\subsection{Setup}
\label{sec:exp-setup}
\paragraph{Data.} We use CDC's publicly available National Health and Nutrition
Examination Survey (NHANES) 2017--2018~\cite{nhanes2017exam}, which covers   body measures, blood pressure, hematology, cholesterol, triglycerides, glucose, insulin etc. We consider $8$ clinical templates for diagnosis. Each template defines $k$ root attributes (a subset of the NHANES medical measurements) and a deterministic target function $g$ that computes a clinical index used by the service's threshold logic to complete the diagnosis. For example, the FIB4 template uses four roots (age, AST, PLT, ALT) and computes $g(\text{age, AST, PLT, ALT}) = \text{age} \cdot \text{AST} / (\text{PLT} \cdot \sqrt{\text{ALT}})$. This then is thersholded to determine with the risk as ``\texttt{Low}'', ``\texttt{Indeterminate}'' or ``\texttt{High}''. We draw $200$ adult male patients per template from the NHANES population (the per-template subpopulation ranges from $n = 1{,}195$ to $2{,}840$) for RQ1--RQ3, and the first $100$ of these for the deployment evaluation in RQ4. All template details are in App.~\ref{app:med-profiles}.

% \paragraph{Interaction model.} Following the threat model (Sec.~\ref{sec:threat-model}), we evaluate a multi-turn interaction over $t \in \{k{+}1,\; 2k{+}1,\; 3k{+}1\}$ turns. In each of the first $t-1$ turns, the adversary requests a root or an arbitrary function of roots; in the final turn, it requests the roots required to compute the target and applies $g$ on the sanitized values it receives. With $k \in \{2, 3, 4\}$, $t$ ranges from 3 to 13, giving $1$, $2$, or $3$ adversarial queries per root. For RQ1--RQ3, we simulate this interaction directly through mechanism implementations to isolate the privacy-utility tradeoff from NER errors and prompt formatting artifacts; RQ4 (Sec.~\ref{sec:experiments-rq4}) re-runs the same threat model end-to-end through an LLM tool-call pipeline.

\paragraph{Interaction model.} Following the threat model (Sec.~\ref{sec:threat-model}), we evaluate a multi-turn interaction over $t$ turns with total privacy budget $B = t \cdot \varepsilon$. In each of the first $t-1$ turns, the adversary requests a root or an arbitrary function of roots; in the final turn, it requests the roots required to compute $g$ and applies it on the sanitized values it receives. RQ1 sweeps three turn budgets $t \in \{k{+}1,\; 2k{+}1,\; 3k{+}1\}$ across 10 privacy levels $\varepsilon \in [0.005, 5.0]$; with $k \in \{2, 3, 4\}$, this gives $1$, $2$, or $3$ adversarial queries per root before the final turn. RQ2 fixes per-root noise $\varepsilon_r \in \{0.05, 0.1, 0.5, 1.0\}$ and varies the adversarial query count $q \in \{1, 4, 8, 16\}$ directly. For RQ1--RQ3, we simulate this interaction through mechanism implementations to isolate the privacy-utility tradeoff from NER errors and prompt formatting artifacts; RQ4 (Sec.~\ref{sec:experiments-rq4}) re-runs the same threat model end-to-end through an LLM tool-call pipeline.

\paragraph{Mechanisms and configurations.} We evaluate three noise mechanisms (Discrete Exponential~\cite{mcsherry2007mechanism}, Bounded Laplace, Staircase~\cite{geng2015optimal}) under the three configurations of Sec.~\ref{sec:exploiting-dependencies}: $\mathcal{M}$-All (every release noised independently with $\varepsilon$ budget), $\mathcal{M}$-Roots (only $k$ roots noised, uniform allocation $\varepsilon_i = B/k$), and $\mathcal{M}$-Opt ($k$ roots noised, sensitivity-weighted allocation $\varepsilon_i \propto \sqrt{|h_i| \cdot \delta_i}$); only the latter two are dependency-aware. All operate in normalized index space with $m = 1{,}000$ candidates; mechanism details are in App.~\ref{app:baselines}.

% \paragraph{Budgets and allocation.} All methods share total budget $B = t \cdot \varepsilon$. RQ1 sweeps 10 privacy levels $\varepsilon \in [0.005, 5.0]$ across three adversary turn budgets $t \in \{k{+}1, 2k{+}1, 3k{+}1\}$; RQ2 fixes per-root noise $\varepsilon_r \in \{0.05, 0.1, 0.5, 1.0\}$ and varies adversarial queries $q \in \{1, 4, 8, 16\}$. The three configurations distribute $B$ differently: $\mathcal{M}$-All assigns $\varepsilon$ to each of the $t$ releases; $\mathcal{M}$-Roots distributes $B$ uniformly across $k$ roots ($\varepsilon_i = t\varepsilon/k$); $\mathcal{M}$-Opt allocates per the sensitivity-weighted solution of Sec.~\ref{sec:opt-budget} ($\varepsilon_i \propto \sqrt{|h_i| \cdot \delta_i}$). Roots saturating at $\varepsilon_{\min}$ have their budget redistributed to active roots.

\paragraph{Budget allocation.} The three configurations distribute $B$ differently: $\mathcal{M}$-All assigns $\varepsilon$ to each of the $t$ releases; $\mathcal{M}$-Roots distributes $B$ uniformly across $k$ roots ($\varepsilon_i = t\varepsilon/k$); $\mathcal{M}$-Opt allocates per the sensitivity-weighted solution of Sec.~\ref{sec:opt-budget}. Roots saturating at $\varepsilon_{\min}$ have their budget redistributed to active roots.

\paragraph{Domain knowledge.} \sysname{} uses two forms of public knowledge, neither dependent on the private patient data: \textit{domain bounds} $[\ell_i, u_i]$ from the per-template NHANES adult male subpopulation, and \textit{population-mean root values} $\mu$ from the same population (excluding test samples) for evaluating sensitivities $h_i = |\partial g / \partial x_i(\mu)|$. The $B$-mDP guarantee holds regardless, as domain knowledge affects only utility.

\paragraph{Text-DP reference.} We additionally include CAPE-F~\cite{wu2025capecontextawarepromptperturbation} as a text-DP reference, adapted to numeric values by constraining decoding to numerical characters. CAPE-F operates under a different DP definition and is not directly comparable on $\varepsilon$; we include it to demonstrate that text-based DP is fundamentally unsuitable for structured numeric data.

\subsection{Metrics}
\label{sec:metrics}
We evaluate each sanitization method along two axes: \emph{numeric accuracy} of the target clinical index, and \emph{clinical decision preservation}. Both are measured at the target node of each template---the final derived value from which the clinical risk class is determined. Let $y$ denote the ground truth target value and $\hat{y}$ the target value obtained from (possibly noised) inputs.

\paragraph{Weighted Mean Absolute Percentage Error (wMAPE)}
For each template with $N$ samples, we compute:
\begin{equation}
\text{wMAPE} = \frac{\sum_{i=1}^{N} |\hat{y}_i - y_i|}{\sum_{i=1}^{N} |y_i|} \times 100\%
\end{equation}
the mean absolute error normalized by the mean ground-truth magnitude~\cite{hyndman2006another,kolassa2007advantages}. A wMAPE of 0\% indicates perfect numeric preservation; values above 100\% indicate that average distortion exceeds the typical value of the index. We use wMAPE rather than per-sample MAPE because MAPE is unstable when ground-truth values approach zero (e.g., FIB4 typical value ${\sim}1$ within $[0.12, 37.95]$; AIP typical value ${\sim}0.4$ within $[-0.64, 1.97]$).

\paragraph{Risk Class Error (RCE)}
Each template defines clinically meaningful risk categories based on published threshold values (App.~\ref{app:med-profiles}). For example, the FIB-4 template classifies patients into Low ($< 1.30$), Indeterminate ($1.30$--$2.67$), or High ($> 2.67$) fibrosis risk. We map both the ground truth and sanitized target values to their respective risk classes $c(y)$ and $c(\hat{y})$, then compute:
\begin{equation}
\text{Risk Class Error} = \frac{1}{N} \sum_{i=1}^{N} \frac{|c(\hat{y}_i) - c(y_i)|}{C - 1} \times 100\%
\end{equation}
where $C$ is the number of risk classes for the profile ($C = 3$ for ANEMIA, AIP, FIB4, HOMA; $C = 2$ for the remaining profiles). This metric directly measures \emph{clinical harm}: a Risk Class Error of 0\% means that sanitization never changes the clinical decision, regardless of numeric distortion. The two metrics are complementary: a method may achieve low wMAPE but non-zero RCE if perturbations cross decision boundaries.

% \noindent The two metrics capture complementary aspects of utility. A method can achieve low wMAPE (small numeric perturbation) while still incurring non-zero Risk Class Error if the perturbation crosses a decision boundary. Conversely, a method with higher wMAPE may preserve risk classes perfectly if the noise stays within the same classification band. \sysname{}'s budget allocation is formulated to minimize absolute error at the target, but the resulting improvement in numeric accuracy typically translates to improved risk class preservation as well.

\subsection{RQ1: Target Utility Under Adversarial Queries}
\label{sec:experiments-rq1}

\providecommand{\pmstd}[1]{{\scriptsize $\pm$#1}}

\begin{table*}[t]
\centering
\setlength{\tabcolsep}{2.8pt}
\small
\caption{Aggregate wMAPE (\%) at mid-to-high privacy levels across three adversary budget levels (Exponential mechanism). $\bar{B}$: average total budget across templates. $\mathcal{M}$-All is invariant to the budget level; $\mathcal{M}$-Opt improves with each additional adversarial turn. Aggregate values are the mean across 8 templates; bootstrap SEs are propagated from per-template SEs as $\sqrt{\sum_t \mathrm{SE}_t^2}/k$. At $\varepsilon \geq 1.0$, all methods achieve $\leq 1\%$ wMAPE and differences are negligible (see App.~\ref{app:rq1-tables}).}
\label{tab:double_asymmetry}
\begin{tabular}{c|cccc|cccc|cccc}
\toprule
& \multicolumn{4}{c|}{$B = (k{+}1)\varepsilon$} & \multicolumn{4}{c|}{$B = (2k{+}1)\varepsilon$} & \multicolumn{4}{c}{$B = (3k{+}1)\varepsilon$} \\
$\varepsilon$ & $\bar{B}$ & All & Roots & Opt & $\bar{B}$ & All & Roots & Opt & $\bar{B}$ & All & Roots & Opt \\
\midrule
0.5 & 1.81 & 5.6\pmstd{0.30} & 4.0\pmstd{0.25} & \textbf{2.7\pmstd{0.13}} & 3.12 & 5.7\pmstd{0.30} & 2.4\pmstd{0.13} & \textbf{1.7\pmstd{0.09}} & 4.44 & 5.9\pmstd{0.33} & 1.8\pmstd{0.10} & \textbf{1.2\pmstd{0.05}} \\
0.2 & 0.73 & 12.6\pmstd{0.64} & 10.1\pmstd{0.47} & \textbf{6.6\pmstd{0.29}} & 1.25 & 12.3\pmstd{0.62} & 5.9\pmstd{0.37} & \textbf{4.5\pmstd{0.22}} & 1.78 & 12.2\pmstd{0.55} & 4.6\pmstd{0.29} & \textbf{3.3\pmstd{0.20}} \\
0.1 & 0.36 & 20.5\pmstd{0.88} & 14.9\pmstd{0.67} & \textbf{13.6\pmstd{0.65}} & 0.62 & 20.3\pmstd{0.96} & 11.4\pmstd{0.57} & \textbf{7.8\pmstd{0.32}} & 0.89 & 21.5\pmstd{0.96} & 8.7\pmstd{0.47} & \textbf{5.6\pmstd{0.26}} \\
0.05 & 0.18 & 35.0\pmstd{1.45} & 26.0\pmstd{1.11} & \textbf{21.6\pmstd{0.94}} & 0.31 & 33.8\pmstd{1.69} & 18.0\pmstd{0.80} & \textbf{13.3\pmstd{0.58}} & 0.44 & 33.7\pmstd{1.75} & 14.3\pmstd{0.68} & \textbf{11.2\pmstd{0.57}} \\
0.02 & 0.07 & 70.3\pmstd{4.35} & 55.5\pmstd{6.81} & \textbf{46.0\pmstd{2.80}} & 0.12 & 75.7\pmstd{10.89} & 34.1\pmstd{1.55} & \textbf{29.2\pmstd{1.23}} & 0.18 & 72.3\pmstd{5.10} & 27.9\pmstd{1.29} & \textbf{21.9\pmstd{0.96}} \\
0.01 & 0.04 & 122.5\pmstd{10.10} & 101.9\pmstd{9.58} & \textbf{86.6\pmstd{6.21}} & 0.06 & 112.6\pmstd{8.41} & 63.0\pmstd{5.39} & \textbf{48.3\pmstd{2.62}} & 0.09 & 129.5\pmstd{9.85} & 47.3\pmstd{2.59} & \textbf{37.3\pmstd{1.58}} \\
\bottomrule
\end{tabular}
\end{table*}

We evaluate target-node utility under the adversary budgets defined in Sec.~\ref{sec:exp-setup}. $\mathcal{M}$-All noises each root release independently at $\varepsilon$, computing the target deterministically from the latest noised values at the target turn. $\mathcal{M}$-Roots and $\mathcal{M}$-Opt distribute the full budget $B$ across roots once at initialization, then post-process all subsequent releases.

\paragraph{Data.} We use the same 8 clinical templates and 200-patient NHANES 2017--2018 test split as in the experimental setup (Sec.~\ref{sec:exp-setup}), across 10 privacy budgets $\varepsilon \in [0.005, 5.0]$.

% ────────────────────────────────────────────────────────────────────────
% \subsubsection{What governs the magnitude of \sysname{}'s advantage?}
% Under the root-only configurations, the target perturbation depends on how the formula $g$ amplifies or compresses each root's noise (captured by $|h_i| = |\partial g / \partial x_i|$). Templates whose target formulas \emph{compress} root noise (e.g., divisions, products with large constants) yield the largest gaps between $\mathcal{M}$-All and \sysname{} --- root-domain noise gets attenuated en route to the target, while $\mathcal{M}$-All's per-release noise compounds across roots in a Lipschitz-amplified way (Sec.~\ref{sec:nonlinear-amplification}). Templates whose formulas \emph{amplify} root noise into a narrow target domain show smaller absolute errors for both configurations, with \sysname{} retaining a multiplicative advantage on a smaller base. Two structural factors further modulate the gap: (i) \textit{budget concentration} --- each root receives $t/k$ times $\mathcal{M}$-All's per-release budget, so higher $t/k$ favors root-only noising; (ii) \textit{zero-sensitivity roots} --- some roots (e.g., RBC for MCHC, TC for AIP) waste budget under $\mathcal{M}$-Roots; $\mathcal{M}$-Opt clamps them at $\varepsilon_{\min}$ and redistributes the freed budget to active roots.

% ────────────────────────────────────────────────────────────────────────

\subsubsection{Aggregate results.}
Table~\ref{tab:double_asymmetry} reports aggregate wMAPE for the Exponential mechanism. The ordering $\mathcal{M}$-All $>$ $\mathcal{M}$-Roots $>$ $\mathcal{M}$-Opt holds at every $\varepsilon$ and every budget level, confirming that root-only noising and dependency-aware allocation provide additive gains; BLap and Stair follow the same trend (App.~\ref{app:rq1-tables}). At $\varepsilon = 0.1$ and $B = (2k{+}1)\varepsilon$, Exp-Opt achieves 7.8\% vs.\ 20.3\% for Exp-All --- a $2.6\times$ improvement, widening to $3.8\times$ at $(3k{+}1)\varepsilon$ ($5.6\%$ vs.\ $21.5\%$). The gap also widens at lower $\varepsilon$: at $\varepsilon = 0.01$, $B = (2k{+}1)\varepsilon$, the gap is $48.3\%$ vs.\ $112.6\%$.

\subsubsection{Clinical decision preservation (RCE)}
The wMAPE wins translate into clinical-decision wins. At the focal ($\varepsilon = 0.1$, Exp, $B = (2k{+}1)\varepsilon$), $\mathcal{M}$-Opt achieves Risk Class Error of $7.0\% \pm 0.6\%$ vs.\ $\mathcal{M}$-All's $15.7\% \pm 0.8\%$, a $2.2\times$ reduction; the advantage widens to $3.3\times$ at $B = (3k{+}1)\varepsilon$ ($5.3\%$ vs.\ $17.5\%$), tracking the wMAPE pattern. Table~\ref{tab:rce_per_template_rq1} reports the per-template breakdown. The Bounded Laplace and Staircase mechanisms exhibit the same ordering at every cell (App.~\ref{app:rq1-tables}, Tables~\ref{tab:rq1_kplus1_rce}--\ref{tab:rq1_3kplus1_rce}).

\begin{table}[h]
\centering
\setlength{\tabcolsep}{2pt}
% \footnotesize
\caption{Per-template Risk Class Error (\%) at $\varepsilon = 0.1$ for
the discrete Exponential mechanism across three adversarial budgets,
under the controlled harness (RQ1). Bootstrap SE over patients
($n = 200$). Aggregate cross-mechanism results are in
App.~\ref{app:rq1-tables}, Tables~\ref{tab:rq1_kplus1_rce}--\ref{tab:rq1_3kplus1_rce}.
Lower is better.}
\label{tab:rce_per_template_rq1}
\begin{tabular}{l|ccc|ccc}
\toprule
& \multicolumn{3}{c|}{\textbf{$\mathcal{M}$-All}} & \multicolumn{3}{c}{\textbf{$\mathcal{M}$-Opt}} \\
Template & $k{+}1$ & $2k{+}1$ & $3k{+}1$ & $k{+}1$ & $2k{+}1$ & $3k{+}1$ \\
\midrule
HOMA     & 36.0\pmstd{3.4} & 31.0\pmstd{3.3} & 42.0\pmstd{3.5} & 21.0\pmstd{2.9} & 19.0\pmstd{2.8} & \textbf{15.5\pmstd{2.6}} \\
AIP      & 18.2\pmstd{2.7} & 20.8\pmstd{2.9} & 26.2\pmstd{3.1} & 12.0\pmstd{2.3} & 5.8\pmstd{1.7}  & \textbf{4.5\pmstd{1.5}}  \\
FIB4     & 7.2\pmstd{1.8}  & 10.0\pmstd{2.1} & 10.2\pmstd{2.1} & 5.5\pmstd{1.6}  & 3.0\pmstd{1.2}  & \textbf{1.8\pmstd{0.9}}  \\
NLR      & 21.5\pmstd{2.9} & 24.0\pmstd{3.0} & 19.5\pmstd{2.8} & 19.0\pmstd{2.8} & 12.0\pmstd{2.3} & \textbf{9.5\pmstd{2.1}}  \\
TYG      & 22.0\pmstd{2.9} & 22.0\pmstd{2.9} & 25.5\pmstd{3.1} & 18.0\pmstd{2.7} & 11.0\pmstd{2.2} & \textbf{9.0\pmstd{2.0}}  \\
ANEMIA   & 8.0\pmstd{1.9}  & 6.0\pmstd{1.7}  & 8.0\pmstd{1.9}  & 2.2\pmstd{1.0}  & 1.0\pmstd{0.7}  & \textbf{1.0\pmstd{0.7}}  \\
CONICITY & 7.0\pmstd{1.8}  & 10.0\pmstd{2.1} & 7.5\pmstd{1.9}  & 4.5\pmstd{1.5}  & 3.0\pmstd{1.2}  & \textbf{1.0\pmstd{0.7}}  \\
VASCULAR & 3.0\pmstd{1.2}  & 1.5\pmstd{0.9}  & 1.0\pmstd{0.7}  & 0.0\pmstd{0.0}  & 1.0\pmstd{0.7}  & \textbf{0.5\pmstd{0.5}}  \\
\bottomrule
\end{tabular}
\end{table}

\subsubsection{Budget scaling: the double asymmetry.}
\label{sec:double-asymmetry}
 
$\mathcal{M}$-All's target utility depends only on $\varepsilon$, not $B$ --- the target is computed from a single fresh draw per root at the target turn, so the adversary's additional turns do not improve target accuracy. $\mathcal{M}$-Roots and $\mathcal{M}$-Opt benefit from every unit of additional budget. Table~\ref{tab:double_asymmetry} illustrates this \emph{double asymmetry}: $\mathcal{M}$-All columns are nearly flat across budget levels, while $\mathcal{M}$-Opt columns decrease as budget concentrates on roots. At $\varepsilon = 0.1$, Exp-Opt improves from 13.6\% ($k{+}1$) to 7.8\% ($2k{+}1$) to 5.6\% ($3k{+}1$), while Exp-All stays at ${\sim}20$--$21\%$. \sysname{} wins on every (template, mechanism) cell at every budget level (Sec.~\ref{sec:experiments-rq3}); the multiplicative advantage \emph{grows} with $B$, driven by templates with compressive target formulas (FIB4, HOMA, AIP).

\subsubsection{Text-DP reference (CAPE-F)} CAPE-F achieves $263$--$403\%$ wMAPE across all $\varepsilon$ ($65$--$125\times$ worse than BLap-Opt) and is essentially flat across the privacy axis, confirming that text-DP is unsuitable for structured numeric data regardless of $\varepsilon$.

% \subsubsection{CAPE-F: text-DP reference.}
% It achieves $263$--$403$\% wMAPE across all $\varepsilon$, $65$--$125\times$ worse than BLap-Opt. Its error is essentially flat across $\varepsilon$ (the $1.5\times$ ratio from $\varepsilon = 5.0$ to $0.005$ compares to BLap-Opt's $225\times$), confirming the error is dominated by linguistic-vs-numeric mismatch rather than privacy noise. The embedding model cannot capture relationships between numerical tokens and context: ``$0$'' might be replaced by ``$00$'', immediately increasing error by an order of magnitude.

\subsubsection{Privacy cost of extra turns.} The $t-1$ adversarial turns preceding the target have no utility cost for either method, but they impose a severe privacy cost on $\mathcal{M}$-All --- each query produces an independent observation a reconstruction adversary can exploit, while \sysname{}'s post-processed responses reveal no additional information. We quantify this in Sec.~\ref{sec:experiments-rq2}.

% \paragraph{Why not concentrate $\mathcal{M}$-All's budget on the target?}
% A natural alternative is to give $\mathcal{M}$-All the full $B = t\varepsilon$ on the target release, ignoring the $t-1$ adversarial turns. Our threat model rules this out: the adversary controls the interaction, and the user does not know in advance which release will be the target. A user that allocates $B$ to a single release exposes itself to an adversary that simply asks for that release first and exits, leaving $\mathcal{M}$-All with no budget left for legitimate queries. Equivalently, in any strategy that allocates a budget across $t$ unknown future requests, $\mathcal{M}$-All must spend per-release; \sysname{} avoids this dilemma by treating all releases as post-processing of the initial root noising, so allocation is decoupled from request order. Across the $t$ release sequence we evaluate, $\mathcal{M}$-All's per-release budget $\varepsilon$ is the unique strategy that yields the same $\varepsilon$-mDP guarantee on every individual release.

\subsection{RQ2: Reconstruction Attacks}
\label{sec:experiments-rq2}

RQ1 showed $\mathcal{M}$-All's adversarial turns yield no utility benefit. We now ask: do they impose a \emph{privacy cost}? The accumulation analysis (Sec.~\ref{sec:redundancy-noising}) predicts that each independent release gives the adversary a fresh observation, enabling MAP reconstruction that improves with query count. We test this by varying the number of queries under matched per-root noise.

% RQ1 showed that $\mathcal{M}$-All's adversarial turns have no utility benefit --- the target is a single independent release at $\varepsilon$. We now ask: do these turns have a \emph{privacy cost}? The accumulation analysis (Sec.~\ref{sec:redundancy-noising}) predicts that each independent release gives the adversary a fresh observation, enabling MAP reconstruction that improves with query count. We test this prediction by comparing reconstruction under matched per-root noise, varying the number of queries.

\subsubsection{Setup.}
We fix per-root noise $\varepsilon_r \in \{0.05, 0.1, 0.5, 1.0\}$ and vary the adversarial query count $q \in \{1, 4, 8, 16\}$. $\mathcal{M}$-All produces a fresh independent draw per query; $\mathcal{M}$-Roots noises each root once at $\varepsilon_r$ and caches; $\mathcal{M}$-Opt noises each root once under the sensitivity-weighted allocation of total budget $B = k\varepsilon_r$. Method-agnostic RNG seeding makes $\mathcal{M}$-All's first draw and the $\mathcal{M}$-Roots cached value bit-identical at any $(\text{template}, \text{root}, \text{sample})$ triple, so any $q \geq 1$ difference is purely independent-draws vs.\ caching.

% \paragraph{Adversary strategies.}
% Each strategy uses $q$ \emph{root queries} (each requesting one or more sanitized root values) followed by one final \emph{target query} that forces a release of every root needed for the target. For the sake of narration, the last turn requests all roots for computing the clinical diagnosis instead of requesting them individually over $k$ extra turns. The total $q+1$ turns therefore yield $q+1$ observations on at least one root. 

\paragraph{Adversary strategies.}
Each strategy uses $q$ \emph{root queries} (each requesting one or more sanitized root values) followed by one final \emph{target query} that bundles all $k$ roots needed for the diagnosis into a single turn. We bundle the final-turn roots rather than spreading them across $k$ separate turns purely for narrative simplicity --- every strategy has the same $q+1$-turn structure, and the privacy and utility accounting is unchanged either way. The total $q+1$ turns yield $q+1$ observations on at least one root.

\paragraph{Strategy A (single-root, worst-case):} the $q$ root queries  all target $r^* = \arg\max_r \varepsilon_r^{\mathrm{Opt}}$ --- the root that receives the largest budget under $\mathcal{M}$-Opt's sensitivity-weighted allocation, and therefore the least-noised one; combined with the target turn, the adversary observes $q{+}1$ values on $r^*$: under $\mathcal{M}$-All, $q{+}1$ fresh independent draws; under cached methods, $q{+}1$ identical replays. We report wMAPE on $r^*$. 

\paragraph{Strategy B (all-roots, average-case):} the $q$ queries are distributed across all $k$ roots; combined with the target turn, each root accumulates $q/k + 1$ observations. We report mean wMAPE across roots.

\paragraph{MAP adversary and priors.} The adversary performs a per-root 1-D argmax with full knowledge of the sanitization pipeline: $\hat{x}_r = \arg\max_x \prod_i p_{\mathcal{M}}(\tilde{y}_i \mid x, \varepsilon_r) \cdot \pi(x)$. The factorization across roots is exact: under both $\mathcal{M}$-All and the cached methods, per-root observations are independent given the true value, and the user agent never releases a value that couples roots through the target. Unobserved roots return the prior mean (Strategy A only). For cached methods, the per-root likelihood collapses to a single factor regardless of $q$. We evaluate two priors --- \emph{uniform} ($\pi(x_r) \propto 1$, MAP reduces to MLE) and \emph{informed} (independent Gaussian per root, $\pi(x_r) = \mathcal{N}(\mu_r, \sigma_r^2)$, calibrated from NHANES holdout statistics).

\subsubsection{Results.}

Fig.~\ref{fig:recon_main} and Table~\ref{tab:recon_main} present the main results (Exponential, informed prior, Strategy B). Full results across all privacy levels, mechanisms, and per-template breakdowns are in App.~\ref{app:rq2-tables}.

\begin{figure*}[t]
    \centering
    \begin{subfigure}[t]{0.48\textwidth}
        \centering
        \includegraphics[width=\columnwidth]{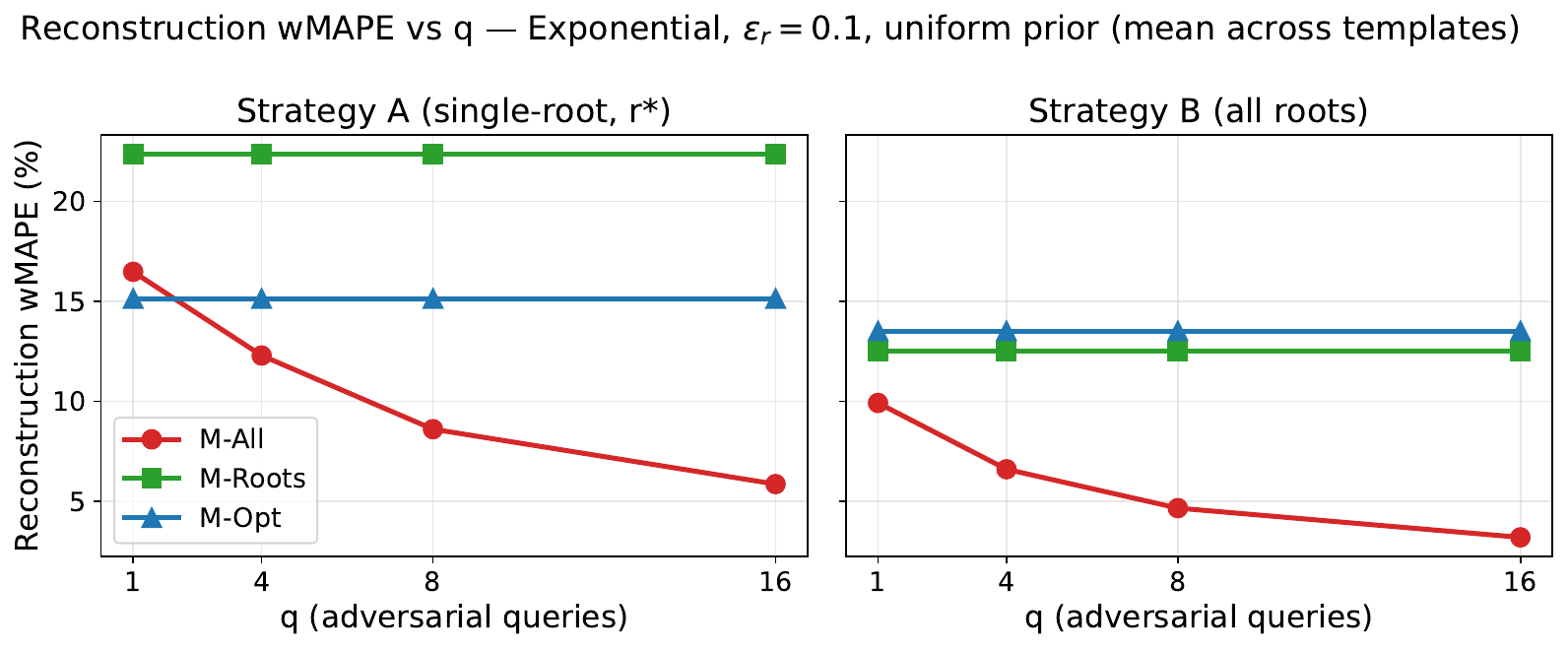}
        \caption{Uniform prior. At $q = 1$, $\mathcal{M}$-All has $q+1 = 2$ observations on the targeted root, while $\mathcal{M}$-Roots has the cached value replayed twice. The gap at $q = 1$ reflects the second independent draw under $\mathcal{M}$-All; the divergence at $q > 1$ is the noise-averaging effect on accumulating draws.}
        \label{fig:recon_uniform}
    \end{subfigure}
    \hfill
    \begin{subfigure}[t]{0.48\textwidth}
        \centering
        \includegraphics[width=\columnwidth]{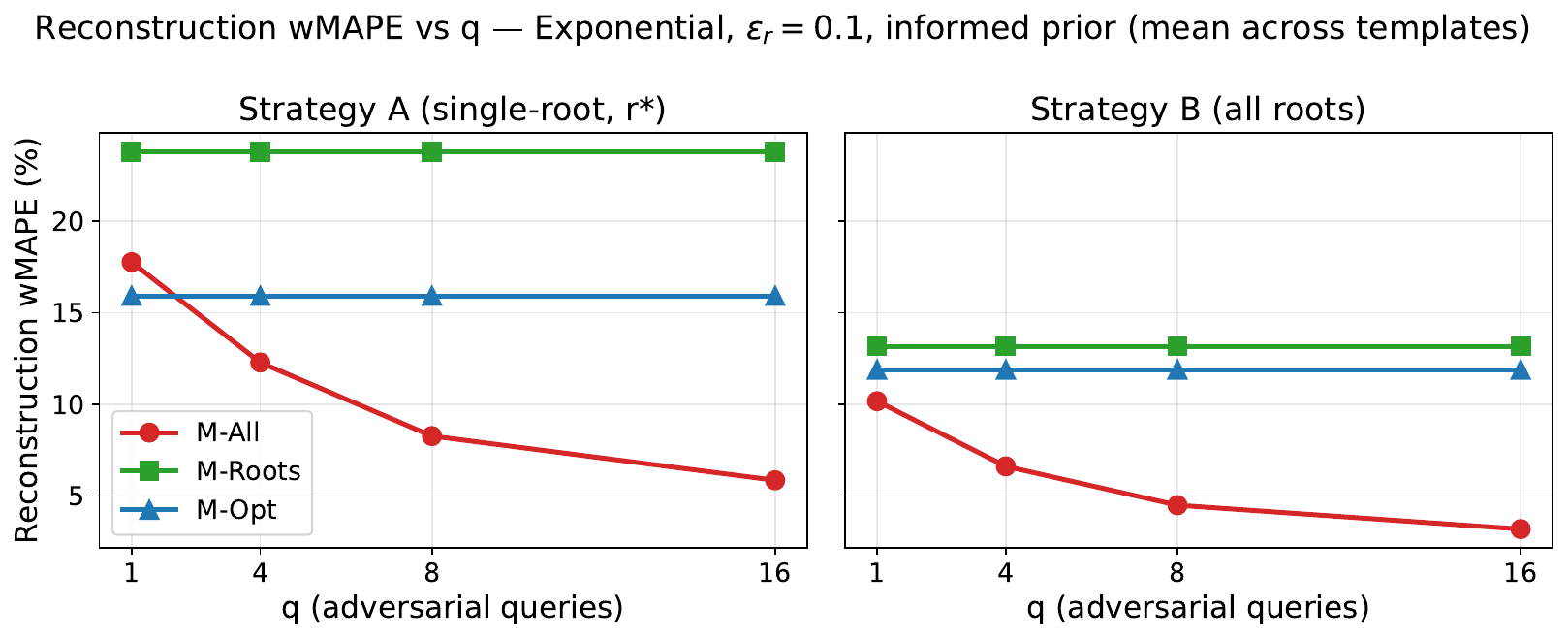}
        \caption{Informed prior. With Gaussian priors over unqueried roots, the qualitative behavior is unchanged from the uniform case: $\mathcal{M}$-All decreases monotonically with $q$; cached methods are flat.}
        \label{fig:recon_informed}
    \end{subfigure}
    \caption{Reconstruction wMAPE (\%) vs.\ adversarial query count $q$ at $\varepsilon_r = 0.1$ (Exponential, mean across 8 templates). Each row shows two strategies (A: single-root, B: all-roots). $\mathcal{M}$-All (red) degrades with $q$ in all settings; $\mathcal{M}$-Roots (green) and $\mathcal{M}$-Opt (blue) are flat.}
    \label{fig:recon_main}
\end{figure*}

\begin{table}[t]
\centering
\setlength{\tabcolsep}{2.5pt}
\small
\caption{Aggregate reconstruction wMAPE (\%) across privacy levels (Exponential, Strategy B, informed prior, mean across 8 templates). $\mathcal{M}$-All decreases with $q$ at every $\varepsilon_r$; $\mathcal{M}$-Roots and $\mathcal{M}$-Opt are invariant. Lower values mean the adversary reconstructs more accurately (worse for privacy).}
\label{tab:recon_main}
\begin{tabular}{c|ccc|ccc|ccc|ccc}
\toprule
& \multicolumn{3}{c|}{$q=1$} & \multicolumn{3}{c|}{$q=4$} & \multicolumn{3}{c|}{$q=8$} & \multicolumn{3}{c}{$q=16$} \\
$\varepsilon_r$ & All & Rts & Opt & All & Rts & Opt & All & Rts & Opt & All & Rts & Opt \\
\midrule
0.05 & 16.4 & 22.8 & 21.0 & 10.9 & 22.8 & 21.0 & 8.6  & 22.8 & 21.0 & 6.4 & 22.8 & 21.0 \\
0.1  & 10.2 & 13.2 & 11.9 & 6.6  & 13.2 & 11.9 & 4.5  & 13.2 & 11.9 & 3.2 & 13.2 & 11.9 \\
0.5  & 2.8  & 2.9  & 3.4  & 1.3  & 2.9  & 3.4  & 0.9  & 2.9  & 3.4  & 0.6 & 2.9  & 3.4 \\
1.0  & 1.4  & 1.5  & 2.3  & 0.7  & 1.5  & 2.3  & 0.5  & 1.5  & 2.3  & 0.3 & 1.5  & 2.3 \\
\bottomrule
\end{tabular}
\end{table}

\paragraph{Source of leakage.}
The result reveals a single effect compromising $\mathcal{M}$-All's privacy: \emph{repeated query accumulation}. Each adversarial query produces a fresh independent observation on the queried root, and the target turn contributes one additional fresh draw under $\mathcal{M}$-All (replayed cached value under \sysname{}). As $q$ grows, the adversary accumulates $q+1$ independent observations on each affected root and combines them via per-root MAP, tightening the posterior around the true value. The reduction in wMAPE follows the standard $1/\sqrt{q+1}$ scaling for averaged estimators of a fixed parameter under additive noise. \sysname{}'s cached methods provide no additional information after the first observation: every subsequent query returns the same value, leaving the per-root MAP unchanged.

\paragraph{Strategy A}
Fig.~\ref{fig:recon_uniform} (left panel) shows the targeted-root reconstruction. At $q = 1$, $\mathcal{M}$-All has 2 observations on $r^*$ (one query plus the target turn) while $\mathcal{M}$-Roots has the cached value replayed twice; the gap (16.5\% vs.\ 22.4\% aggregate) is purely the gain from a second independent draw under $\mathcal{M}$-All. As $q$ grows, $\mathcal{M}$-All's wMAPE drops to 12.3\% ($q = 4$), 8.6\% ($q = 8$), and 5.9\% ($q = 16$), tracking the expected $1/\sqrt{q+1}$ noise-averaging rate. $\mathcal{M}$-Roots and $\mathcal{M}$-Opt remain flat: at every $q$, $\mathcal{M}$-Roots reconstructs $r^*$ with 22.4\% wMAPE (the per-root error of one $\varepsilon_r$-noised release), and $\mathcal{M}$-Opt achieves 15.1\% because it concentrates budget on $r^*$ ($\varepsilon_{r^*} > \varepsilon_r$). The $2.8\times$ reduction ($16.5\% \to 5.9\%$) under $\mathcal{M}$-All is entirely attributable to independent noise averaging.

The informed prior (Fig.~\ref{fig:recon_informed}, left panel) gives nearly identical results because the target release does not enter the MAP --- the only effect of the informed prior on the targeted root is a soft regularization toward the population mean. At $\varepsilon_r = 0.1$: $\mathcal{M}$-All starts at 17.8\% (vs.\ 16.5\% uniform), drops to 5.9\% at $q = 16$. $\mathcal{M}$-Roots remains at 23.8\%. The qualitative result is unchanged.

\paragraph{Strategy B}
Table~\ref{tab:recon_main} shows that the pattern is consistent across all privacy levels. At $\varepsilon_r = 0.05$ (strong privacy), $\mathcal{M}$-All drops from $16.4\%$ to $6.4\%$ as $q$ goes from $1$ to $16$, while $\mathcal{M}$-Roots and $\mathcal{M}$-Opt remain at $22.8\%$ and $21.0\%$ respectively. At $\varepsilon_r = 0.1$, $\mathcal{M}$-All drops from $10.2\%$ to $3.2\%$ ($3.2\times$ reduction), against flat cached methods at $13.2\%$ ($\mathcal{M}$-Roots) and $11.9\%$ ($\mathcal{M}$-Opt). At $\varepsilon_r = 0.5$ (moderate privacy), $\mathcal{M}$-All drops from $2.8\%$ to $0.6\%$, while $\mathcal{M}$-Roots stays at $2.9\%$. At $q = 16$, $\mathcal{M}$-All reconstructs $4$--$5\times$ more accurately than $\mathcal{M}$-Roots across every privacy level. The per-template breakdown (Table~\ref{tab:rq2_per_template}, App.~\ref{app:rq2-tables}) confirms $\mathcal{M}$-All degrades on every template, with the largest absolute degradation on templates with wide root domains (HOMA: $19.1\% \to 5.1\%$, NLR: $20.9\% \to 7.6\%$ at $\varepsilon_r = 0.1$, informed prior).

% \begin{figure*}[t]
%     \centering
%     \includegraphics[width=\linewidth]{figures/grid_exp_B_eps0.1_informed.pdf}
%     \caption{Per-template reconstruction wMAPE (\%) vs.\ $q$ at $\varepsilon_r = 0.1$ (Exponential, Strategy B, informed prior). $\mathcal{M}$-All degrades on every template; $\mathcal{M}$-Roots and $\mathcal{M}$-Opt are flat.}
%     \label{fig:recon_per_template}
% \end{figure*}

\paragraph{$\mathcal{M}$-Opt vs.\ $\mathcal{M}$-Roots.}
The two cached methods show complementary behavior across strategies. In Strategy A, $\mathcal{M}$-Opt achieves lower wMAPE than $\mathcal{M}$-Roots on the targeted root ($15.1\%$ vs.\ $22.4\%$ aggregate, uniform prior) because it concentrates budget on $r^*$. In Strategy B, $\mathcal{M}$-Roots sometimes achieves lower \emph{average} wMAPE than $\mathcal{M}$-Opt --- for example, ANEMIA: M-Roots $1.9\%$ vs.\ M-Opt $3.6\%$ --- because $\mathcal{M}$-Opt starves low-sensitivity roots (e.g., RBC at $\varepsilon_{\min}$), making them harder to reconstruct, which increases the average. This illustrates a tradeoff: $\mathcal{M}$-Opt protects the most sensitive root but at the cost of higher average reconstruction error across all roots.

\paragraph{Key result}
$\mathcal{M}$-All's reconstruction wMAPE decreases monotonically with $q$ at every privacy level, every template, and under both priors. $\mathcal{M}$-Roots and $\mathcal{M}$-Opt are invariant to $q$. Under a worst-case adversary, $\mathcal{M}$-All's privacy degrades with each additional query; \sysname{} maintains a constant guarantee determined solely by the initial root noising. The Bounded Laplace and Staircase mechanisms exhibit the same pattern (App.~\ref{app:rq2-tables}).

Our reconstruction experiments use direct root queries, which is the simplest adversary strategy. Under $\mathcal{M}$-All, nonlinear derived-function queries could further amplify the adversary's advantage (Sec.~\ref{sec:nonlinear-amplification}); exploring optimal adversary function choice is left to future work. Under \sysname{}, function choice is irrelevant by Theorem~\ref{thm:privacy-guarantee}. Our results are therefore conservative estimates of $\mathcal{M}$-All's vulnerability.

\subsection{RQ3: Structural Analysis}
\label{sec:experiments-rq3}
\input{figures/rq4_aggregate_wmape}

RQ1 showed that $\mathcal{M}$-Opt consistently outperforms $\mathcal{M}$-Roots in aggregate, demonstrating that \emph{how} the budget is distributed matters beyond eliminating redundant noising. We now examine the structural conditions that determine the magnitude of \sysname{}'s advantage and the allocation properties that drive it. \sysname{} wins on every (template, mechanism, $\varepsilon$) cell at $B = (2k{+}1)\varepsilon$ and above on both wMAPE and RCE; the size of the gap, however, varies substantially across templates, tracking two structural factors: (1) whether the target formula compresses or amplifies root noise, and (2) the geometry of the sensitivity-weighted budget allocation across roots. Per-template win counts across all mechanisms and privacy levels are reported in App.~\ref{app:rq3-winrate} (Table~\ref{tab:rq1_win_rate}).

% ────────────────────────────────────────────────────────────────────────
\subsubsection{Per-template analysis.}
\label{sec:per-template}

All per-template numbers in this section are reported at $B = (2k{+}1)\varepsilon$ for the Exponential mechanism unless otherwise noted. Full per-template wMAPE curves across $\varepsilon$ and all three mechanisms are in App.~\ref{app:rq1-tables-figs} (Fig.~\ref{fig:per_template_app}).

The per-template breakdown sorts templates into two regimes (Table~\ref{tab:per_template_summary}). \emph{Compressing} target formulas (logarithms, divisions by large constants) attenuate root noise with wide target domains, allowing \sysname{} to recover utility (AIP, HOMA, FIB4, NLR; absolute gaps $14$--$29$\,pp). \emph{Amplifying} formulas project root noise into narrow target domains, producing small absolute errors for both methods but preserving a $2$--$3\times$ multiplicative advantage for \sysname{} (ANEMIA, CONICITY, TYG, VASCULAR; gaps $\leq 4$\,pp). Across all 8 templates, no template falls below $2.1\times$ at $B = (2k{+}1)\varepsilon$; the pattern holds across all three mechanisms (App.~\ref{app:rq1-tables}) and grows more pronounced at $(3k{+}1)\varepsilon$, where ratios reach $3.0$--$5.4\times$.

\begin{table}[h]
\centering
\setlength{\tabcolsep}{4pt}
\small
\caption{Per-template wMAPE (\%) at $\varepsilon = 0.1$, $B = (2k{+}1)\varepsilon$ (Exponential). \emph{Compressing} formulas show large absolute gaps; \emph{amplifying} formulas show small absolute gaps but comparable multiplicative ratios. Full per-template tables in App.~\ref{app:rq1-tables}.}
\label{tab:per_template_summary}
\begin{tabular}{l l c c c}
\toprule
\textbf{Template} & \textbf{Formula} & \textbf{All} & \textbf{Opt} & \textbf{Ratio} \\
\midrule
\multicolumn{5}{l}{\emph{Compressing — large absolute gaps}} \\
AIP   & $\log_{10}(\text{TG}/\text{HDL})$           & 41.9 & 12.7 & $3.3\times$ \\
HOMA  & $\text{Glu}\cdot\text{Ins}/405$              & 31.1 &  9.9 & $3.1\times$ \\
NLR   & $\text{NEU}/\text{LYM}$                      & 49.2 & 23.1 & $2.1\times$ \\
FIB4  & $\text{age}\cdot\text{AST}/(\text{PLT}\sqrt{\text{ALT}})$ & 23.9 & 10.0 & $2.4\times$ \\
\midrule
\multicolumn{5}{l}{\emph{Amplifying — small absolute gaps}} \\
VASCULAR & $100(\text{SBP}{-}\text{DBP})/\text{SBP}$ &  6.3 &  2.6 & $2.4\times$ \\
TYG      & $\ln(\text{TG}\cdot\text{Glu}/2)$         &  4.6 &  2.0 & $2.3\times$ \\
CONICITY & $\text{waist}/(0.109\sqrt{\text{wt}/\text{ht}})$ &  2.8 &  1.0 & $2.8\times$ \\
ANEMIA   & $100\cdot\text{Hb}/\text{Hct}$            &  2.4 &  0.7 & $3.3\times$ \\
\bottomrule
\end{tabular}
\end{table}

\subsubsection{Power-law allocation and mechanism independence.}

The closed-form approximation (Sec.~\ref{sec:opt-budget}) predicts $\varepsilon_i \propto (|h_i| \cdot D_i)^{1/2}$. Empirically, normalized per-root allocations across all 8 templates collapse onto a single log-log line with fitted slope 0.50 (Exponential, $\varepsilon = 0.1$, $B = (2k{+}1)\varepsilon$; full plot in App.~\ref{app:per-root-budget}, Fig.~\ref{fig:allocation_powerlaw_01}), confirming the prediction. Roots with higher sensitivity--domain products receive proportionally more budget.

% The closed-form approximation (Sec.~\ref{sec:opt-budget}) predicts $\varepsilon_i \propto (|h_i| \cdot D_i)^{1/2}$. Fig.~\ref{fig:allocation_powerlaw} verifies this empirically: on a log-log scale, the normalized per-root allocations across all 8 templates collapse onto a single line with fitted slope 0.50 (Exponential, $B = (2k{+}1)\varepsilon$, $\varepsilon = 0.1$). Roots with higher sensitivity--domain products, which contribute more to the target value, receive proportionally more budget.

The allocation is mechanism-independent: at $\varepsilon = 0.1$, all three mechanisms produce fitted slopes of exactly 0.50; at $\varepsilon = 0.5$, minor deviations appear (Exp 0.49, BLap 0.52, Stair 0.51). Full three-mechanism plots at all privacy levels are in App.~\ref{app:per-root-budget}. This confirms that the budget distribution is governed by the DAG's sensitivity geometry, not the noise distribution. A deployment can switch mechanisms without recomputing the allocation.

% \begin{figure}[t]
%     \centering
%     \includegraphics[width=\columnwidth]{figures/rq2_allocation_powerlaw_2kplus1_Exp_eps0p1.pdf}
%     \caption{Log-log scatter of normalized per-root budget $\varepsilon_i$ vs.\ sensitivity--domain product $|h_i| \cdot D_i$ (Exponential, $\varepsilon = 0.1$, $B = (2k{+}1)\varepsilon$). All 8 templates, active roots only. Fitted slope = 0.50. Roots are labeled by medical attribute.}
%     \label{fig:allocation_powerlaw}
% \end{figure}

\subsubsection{Zero-sensitivity roots.}

Some roots have zero sensitivity to the target: RBC in ANEMIA ($\partial \text{MCHC}/\partial \text{RBC} = 0$) and TC in AIP ($\partial \text{AIP}/\partial \text{TC} = 0$). Under $\mathcal{M}$-Roots, these consume $1/k$ of the budget without reducing target error. Under $\mathcal{M}$-Opt, they are clamped at $\varepsilon_{\min}$ and their share is redistributed, effectively giving each active root $50\%$ more budget on these 3-root templates.

This creates complementary effects in the two RQs. In RQ1, the budget reclamation amplifies \sysname{}'s advantage on these 3-root templates: each active root receives effectively $50\%$ more budget than under uniform allocation, contributing to the strong multiplicative wins on ANEMIA ($3.3\times$) and AIP ($3.3\times$) at $\varepsilon = 0.1$, $B = (2k{+}1)\varepsilon$. In RQ2, the clamped roots are very hard to reconstruct (high noise), increasing $\mathcal{M}$-Opt's average reconstruction wMAPE under Strategy B (e.g., ANEMIA: M-Opt $3.6\%$ vs.\ M-Roots $1.9\%$) --- while simultaneously making the high-budget roots harder for an adversary to improve upon under Strategy A.

%%%%%%%%%%%%%%%%%%%%%%%%%%%%%%%%%%%%%%%%%%%%%%%%%%%%%
%%%%%%%%%%%%%%%%%%%%%%%%%%%%%%%%%%%%%%%%%%%%%%%%%%%%%
%%%%%%%%%%%%%%%%%%%%%%%%%%%%%%%%%%%%%%%%%%%%%%%%%%%%%
\subsection{RQ4: End-to-End Deployment in an LLM Tool-Call Pipeline}
\label{sec:experiments-rq4}

RQ1--3 establish the privacy-utility tradeoff at the mechanism level, with the user agent and adversary simulated directly through mechanism implementations. We now ask: do these findings transfer to a real LLM-driven tool-call deployment, where the user agent is an LLM that parses natural-language queries, sanitizes via the mechanism, and emits structured replies through a tool-call interface?

\subsubsection{Setup.}
We re-run RQ1's threat model in a deployment harness: a GPT-5.4 nano
user agent holds the patient's private root values and answers per-turn
queries from an adversary by invoking sanitization tools and emitting
natural-language replies, which the adversary parses through an
alias-resolving regex parser frozen before the sweep. We sweep $8$
templates, $100$ patients/template, three mechanisms ($\mathcal{M}$-All,
$\mathcal{M}$-Roots, $\mathcal{M}$-Opt; Exponential), three budgets
($t \in \{k{+}1, 2k{+}1, 3k{+}1\}$), and three privacy levels
($\varepsilon \in \{0.01, 0.05, 0.1\}$) for $21{,}600$ sessions and
$162{,}000$ observations total.

\subsubsection{Aggregate ordering reproduces.}
Table~\ref{tab:agent_aggregate_wmape} reports aggregate wMAPE across the agent-eval sweep. The ordering $\mathcal{M}$-All $>$ $\mathcal{M}$-Roots $>$ $\mathcal{M}$-Opt holds at every cell, matching RQ1. At $\varepsilon = 0.1$, $B = (2k{+}1)\varepsilon$, agent-eval $\mathcal{M}$-Opt achieves $7.6\% \pm 0.5\%$ vs.\ $\mathcal{M}$-All's $22.2\% \pm 1.2\%$ --- a $2.9\times$ improvement, agreeing with RQ1's $2.6\times$ ($7.8\%$ vs.\ $20.3\%$) within $0.12$pp at the focal cell and within $\pm 1.9$pp across all nine $(\varepsilon, B)$ aggregate cells.

\subsubsection{Clinical decision preservation under deployment.}
Beyond numeric utility, the deployment evaluation lets us measure
whether the sanitization layer preserves the clinical decision the
service would make (Risk Class Error, Sec.~\ref{sec:metrics}).
Aggregated across templates at $\varepsilon = 0.1$, $\mathcal{M}$-Opt
achieves $7.3\%$ RCE at $B = (2k{+}1)\varepsilon$ vs.\ $\mathcal{M}$-All's
$19.2\%$ ($2.6\times$ reduction); the gap widens to $3.6\times$ at
$(3k{+}1)\varepsilon$ ($5.4\%$ vs.\ $19.7\%$). The deployment numbers
agree with RQ1's harness within bootstrap noise ($7.0\%$ harness vs.\
$7.3\%$ deployment for $\mathcal{M}$-Opt at the focal cell), confirming
the LLM-mediated tool call layer does not introduce systematic
clinical-decision drift. Table~\ref{tab:rce_per_template} reports
per-template RCE; the largest absolute gaps appear on compressing
formulas (HOMA, AIP, FIB4), where $\mathcal{M}$-All's high numeric
error spills into class-boundary crossings.

\begin{table}[h]
\centering
\setlength{\tabcolsep}{2pt}
% \footnotesize
\caption{Agent-eval Risk Class Error (\%) per template at
$\varepsilon = 0.1$ under deployment (RQ4), with bootstrap SE over
patients ($n = 100$). Showing the two larger adversarial budgets;
the $B = (k{+}1)\varepsilon$ column and the $\mathcal{M}$-Roots
configuration are in App.~\ref{app:agent-eval-rce}
(Table~\ref{tab:agent_rce_eps0.1_app}). Lower is better; deployment
ordering matches the controlled harness in
Table~\ref{tab:rce_per_template_rq1} within bootstrap noise.}
\label{tab:rce_per_template}
\begin{tabular}{l|cc|cc}
\toprule
& \multicolumn{2}{c|}{\textbf{$\mathcal{M}$-All}} & \multicolumn{2}{c}{\textbf{$\mathcal{M}$-Opt}} \\
Template & $2k{+}1$ & $3k{+}1$ & $2k{+}1$ & $3k{+}1$ \\
\midrule
HOMA     & 30.5\pmstd{3.2} & 33.0\pmstd{3.1} & 15.0\pmstd{2.5} & \textbf{10.5\pmstd{2.1}} \\
AIP      & 30.5\pmstd{3.2} & 29.0\pmstd{3.2} & 11.0\pmstd{2.3} & \textbf{8.0\pmstd{1.9}}   \\
FIB4     & 18.0\pmstd{2.6} & 18.0\pmstd{2.7} & 4.0\pmstd{1.4}  & \textbf{3.0\pmstd{1.2}}  \\
NLR      & 20.5\pmstd{2.8} & 24.5\pmstd{2.9} & 10.0\pmstd{2.1} & \textbf{7.5\pmstd{1.9}}   \\
TYG      & 30.0\pmstd{3.3} & 24.0\pmstd{3.1} & 12.5\pmstd{2.4} & \textbf{10.0\pmstd{2.2}} \\
ANEMIA   & 12.0\pmstd{2.3} & 17.5\pmstd{2.7} & 2.5\pmstd{1.1}  & \textbf{1.0\pmstd{0.7}}   \\
CONICITY & 10.5\pmstd{2.2} & 9.5\pmstd{2.0}  & 2.0\pmstd{1.0}  & \textbf{2.0\pmstd{1.0}}            \\
VASCULAR & 2.0\pmstd{1.0}  & 2.0\pmstd{1.0}  & 1.5\pmstd{0.9}  & \textbf{1.5\pmstd{0.9}}            \\
\bottomrule
\end{tabular}
\end{table}

\subsubsection{Per-template structure reproduces.}
The per-template ordering matches RQ1's structural analysis (Sec.~\ref{sec:experiments-rq3}). Templates with compressing target formulas (AIP, HOMA, FIB4) yield the largest absolute gaps under deployment as in the controlled harness; templates with amplifying formulas (ANEMIA, CONICITY) yield smaller absolute gaps. $\mathcal{M}$-Opt strictly improves on $\mathcal{M}$-All in $9/9$ (B, $\varepsilon$) cells for 7 of 8 templates; the eighth (NLR) shows $\mathcal{M}$-Opt $<$ $\mathcal{M}$-All in $7/9$ cells, reflecting NLR's narrow target domain and high baseline noise variance, where bootstrap intervals on M-All and M-Opt overlap at $B = (k{+}1)\varepsilon$.

\subsubsection{Pipeline robustness.}
The deployment-layer integration is end-to-end clean. Across 21{,}600 sessions, we recorded 162{,}000 sanitized observations. We observed zero sessions with parse failures (the LLM never emitted a malformed response that broke the adversary's parser), no observation-level parse failures and no rounding gaps in the LLM's text reply.

% \subsubsection{Cross-validation against RQ1.}
% At $\varepsilon = 0.1$, the agent-eval and RQ1 wMAPE values agree within $\pm 1.9$pp absolute across all nine $(\varepsilon, B)$ aggregate cells, with the headline cell ($\varepsilon = 0.1$, $B = (2k{+}1)\varepsilon$, $\mathcal{M}$-Opt) agreeing within $0.12$pp ($7.64\%$ deployment vs.\ $7.76\%$ harness). At lower $\varepsilon$ ($0.01$), where wMAPE is already saturated above $80\%$, absolute pp gaps grow but the qualitative ordering and per-template structure hold. This level of agreement is expected --- the two evaluations share patient data, mechanism, allocation, and threat model, differing only in the user-agent layer (LLM tool-call vs.\ direct mechanism call). The cross-validation confirms that LLM-side parsing, formatting, and prompt-handling do not introduce systematic biases at the operating points where the privacy-utility tradeoff is most informative.

\paragraph{Scope of evaluation.} Our deployment harness covers the Exponential mechanism, three $\varepsilon$ levels, and one LLM backbone (GPT-5.4 nano); transfer to other mechanisms, backbones, or fully autonomous agents that decide what to disclose is out of scope.

%% file: figures/rq4_aggregate_wmape.tex
\providecommand{\pmstd}[1]{{\scriptsize $\pm$#1}}

\begin{table*}[t]
\centering
\small
\caption{Agent-eval aggregate wMAPE (\%) averaged across 8 templates, per (config, $B$, $\varepsilon$). 21{,}600 sessions total, GPT-5.4 nano user agent, Exponential mechanism. Bootstrap SEs (composed across templates) shown. The mechanism-level ordering $\mathcal{M}$-All $>$ $\mathcal{M}$-Roots $>$ $\mathcal{M}$-Opt holds at every cell, matching RQ1 within bootstrap noise.}
\label{tab:agent_aggregate_wmape}
\resizebox{\textwidth}{!}{
\begin{tabular}{c|ccc|ccc|ccc}
\toprule
& \multicolumn{3}{c|}{\textbf{M-All}} & \multicolumn{3}{c|}{\textbf{M-Roots}} & \multicolumn{3}{c}{\textbf{M-Opt}} \\
$\varepsilon$ & $B=k+1$ & $B=2k+1$ & $B=3k+1$ & $B=k+1$ & $B=2k+1$ & $B=3k+1$ & $B=k+1$ & $B=2k+1$ & $B=3k+1$ \\
\midrule
0.01 & 98.8\pmstd{7.3} & 108\pmstd{7.9} & 115\pmstd{10} & 82.6\pmstd{5.0} & 55.5\pmstd{3.0} & 43.2\pmstd{2.2} & 71.6\pmstd{4.6} & 47.2\pmstd{2.6} & 35.5\pmstd{1.8} \\
0.05 & 32.8\pmstd{1.9} & 34.4\pmstd{1.8} & 34.6\pmstd{2.0} & 26.0\pmstd{1.4} & 17.8\pmstd{1.1} & 13.9\pmstd{0.9} & 20.9\pmstd{1.1} & 14.0\pmstd{0.8} & 10.5\pmstd{0.7} \\
0.1 & 21.6\pmstd{1.3} & 22.2\pmstd{1.2} & 21.7\pmstd{1.3} & 15.9\pmstd{1.0} & 10.6\pmstd{0.7} & 7.9\pmstd{0.6} & 12.3\pmstd{0.8} & 7.6\pmstd{0.5} & 5.6\pmstd{0.4} \\
\bottomrule
\end{tabular}
}
\end{table*}

%% file: sections/6_RelatedWork.tex
\section{Related Work}
\label{sec:related-work}

\paragraph{Inference-time prompt privacy.}
Prior sanitizers perturb prompts before submission via metric local DP over embeddings~\cite{yue2021santext,chen2022custext}, MLM sampling~\cite{dp-mlm}, context-aware bucket perturbation~\cite{wu2025capecontextawarepromptperturbation}, format-preserving numeric protection~\cite{chowdhury2025}, or token-level budget allocation~\cite{SYBW}; see~\cite{edemacu2025privacy} for a survey. All treat releases independently in single-turn settings. \sysname{} tracks dependencies across multi-turn interactions, noising only roots so that additional releases incur no additional cost.

\paragraph{DP under correlated data.}
Pufferfish~\cite{kifer2014pufferfish}, dependent DP~\cite{liu2016dependence}, Bayesian DP~\cite{zhao2019bayesian}, Wasserstein/Markov-Quilt mechanisms~\cite{song2017pufferfish}, R\'enyi composition~\cite{pierquin2024renyi}, and recent temporal/cross-sequence work~\cite{correlated2025csdp} all calibrate noise to account for statistical correlations \emph{between records}. We eliminate redundancy via post-processing rather than recalibrating noise.

\paragraph{LLM agent privacy and security.}
Agentic systems face prompt injection~\cite{greshake2023indirect,zhan2024injecagent,debenedetti2024agentdojo,hou2025mcp} and multi-turn extraction attacks~\cite{zhang2025searchingprivacyrisks} on the MCP protocol~\cite{anthropic2024mcp}. Architectural defenses include AirGapAgent~\cite{bagdasarian2024airgapagent} (dual-LLM minimization), PAPILLON~\cite{siyan2025papillon} (local-remote proxy redaction), and CaMeL~\cite{debenedetti2025camel} (capability tracking). These enforce minimization empirically; \sysname{} provides formal guarantees that hold \emph{even when} unintended computations occur, since they are post-processing of noised roots. Broader LLM privacy risks --- training-data extraction~\cite{carlini2021extracting}, attribute inference~\cite{staab2024beyond}, contextual-norm violations~\cite{mireshghallah2024can,shao2024privacylens} --- motivate client-side sanitization in the first place.

% \paragraph{Secure computation alternatives.} An orthogonal line of work protects medical data via cryptography or hardware isolation --- multiparty homomorphic encryption for clinical analytics~\cite{froelicher2021famhe} and TEE-based pipelines~\cite{segarra2019tee} --- offering full-computation confidentiality but at high compute and trust cost. \sysname{} takes a lighter approach: noise the roots once and let the post-processing theorem cover everything downstream, including the derived turns that would otherwise consume budget under adaptive composition~\cite{haney2023concurrent}.

%% file: sections/7_Conclusion.tex
\section{Conclusion}
\label{sec:conclusion}
We presented \sysname{}, a dependency-aware privacy mechanism for multi-turn agentic interactions: noise the roots once, compute everything else deterministically. Privacy then depends only on the initial root sanitization regardless of adversary function choices, query count, or turn order, while independent noising degrades with every additional release. The resulting double asymmetry --- additional turns simultaneously improve \sysname{}'s utility and degrade independent noising's privacy --- holds across 8 medical templates ($2.6$--$3.8\times$ lower wMAPE at $\varepsilon = 0.1$), across MAP reconstruction attacks ($\mathcal{M}$-All degrades from $16.5\%$ to $5.9\%$ as $q{:}1{\to}16$; \sysname{} invariant), and end-to-end through a 21{,}600-session LLM tool-call deployment (GPT-5.4 nano, $0\%$ compliance failures). 
% The structural cause is simple: when one value flows through multiple service interactions, independent noising re-spends budget on each release while a dependency-aware mechanism spends it once. We believe this perspective should inform privacy design as agentic systems reach more sensitive domains.

\paragraph{Limitations and future work.}
We assume perfect NER for root identification, following prior sanitizers~\cite{chowdhury2025,wu2025capecontextawarepromptperturbation}; integrating noisy NER into the privacy analysis is open. Our evaluation covers structured numeric values in single-user sessions; extending to free-form text, multi-user composition, additional backbones, and nonlinear derived-function attacks (which would tighten the $\mathcal{M}$-All vulnerability bound via Theorem~\ref{thm:privacy-leakage}) are natural next steps. Two further directions are promising: learning approximate target functions from public population data when the exact formula is unknown, and composing \sysname{} with capability-based agent defenses~\cite{bagdasarian2024airgapagent,debenedetti2025camel}.

%% file: sections/Appendix.tex
\section{Appendix}

\begin{figure*}[b]
\centering
\includegraphics[width=\textwidth]{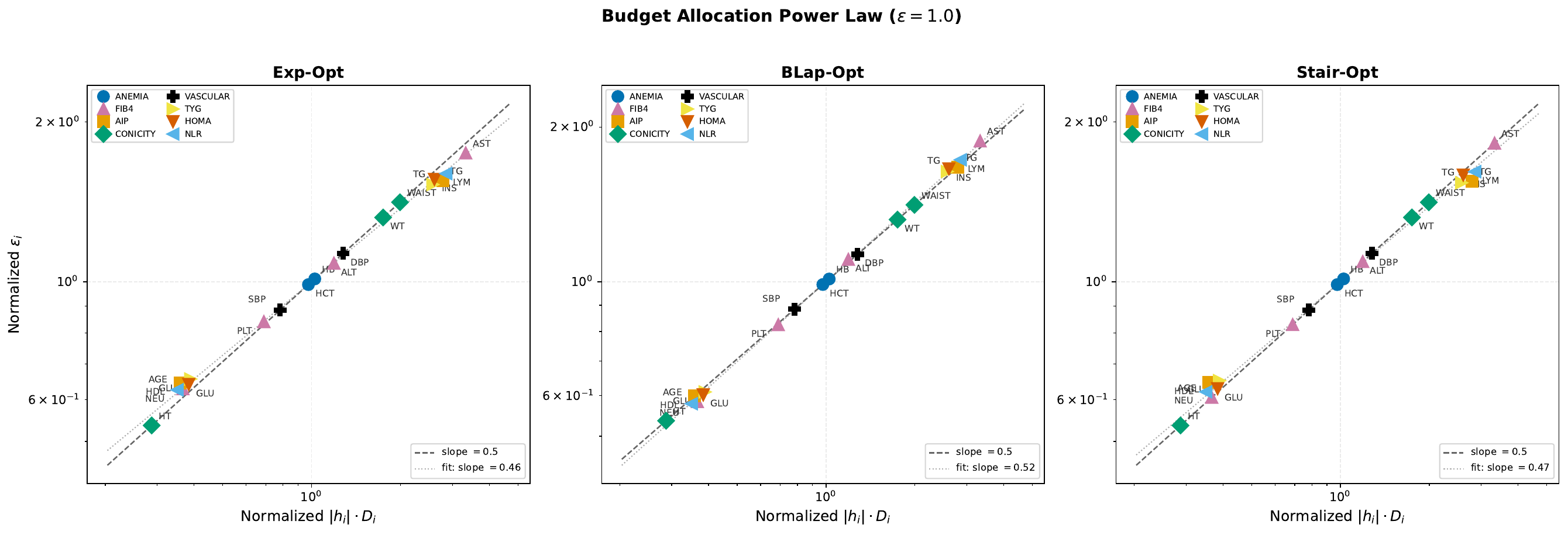}
\caption{Log-log scatter of normalized per-root budget $\varepsilon_i$ vs.\ sensitivity--domain product $|h_i| \cdot D_i$ at $\varepsilon = 0.1$.  Each panel uses one mechanism's optimizer.  All three produce the same power law $\varepsilon_i \propto (|h_i| \cdot D_i)^{0.5}$ (dashed line; fitted slopes 0.46, 0.52, 0.47).  Hollow markers indicate floor-clamped roots with zero target sensitivity.}
\label{fig:allocation_powerlaw_10}
\end{figure*}

\begin{figure*}[t]
\centering
\includegraphics[width=\textwidth]{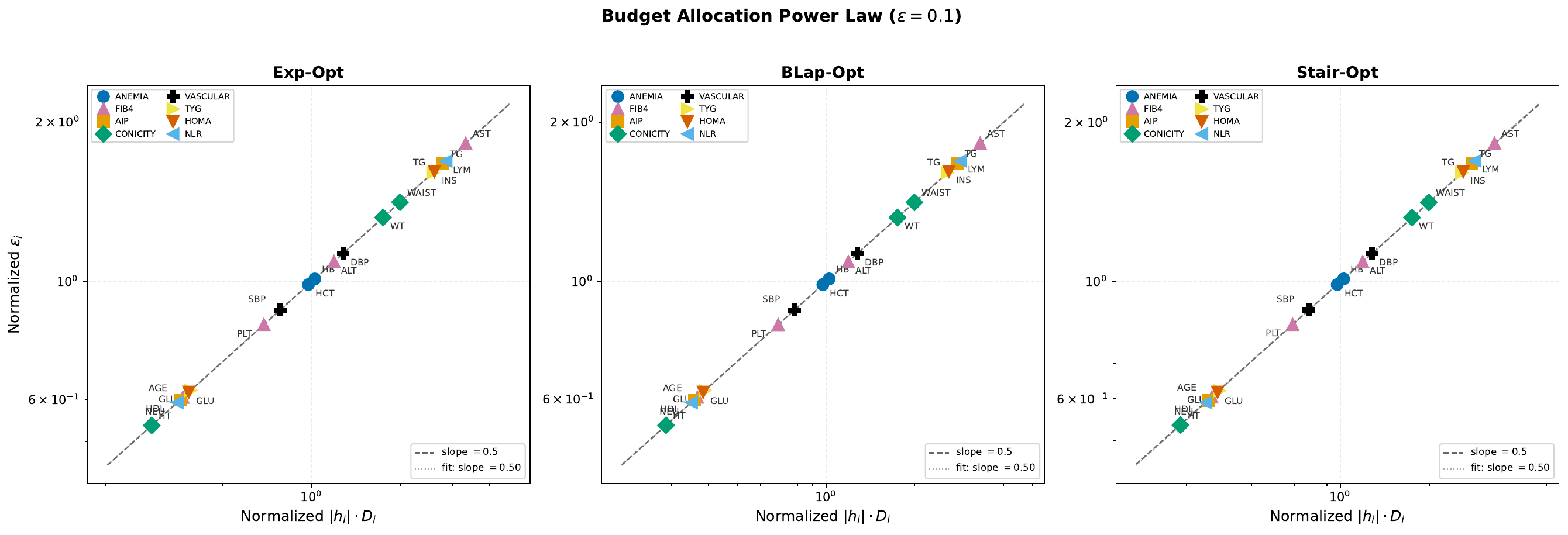}
\caption{Log-log scatter of normalized per-root budget $\varepsilon_i$ vs.\ sensitivity--domain product $|h_i| \cdot D_i$ at $\varepsilon = 0.1$.  Each panel uses one mechanism's optimizer.  All three produce the same power law $\varepsilon_i \propto (|h_i| \cdot D_i)^{0.5}$ (dashed line; fitted slopes 0.50, 0.50, 0.50).  Hollow markers indicate floor-clamped roots with zero target sensitivity.}
\label{fig:allocation_powerlaw_01}
\end{figure*}

\subsection{Medical Profile Details}
\label{app:med-profiles}
\subsubsection*{1. Anemia Classification (MCHC)}
Red blood cell indices are derived from hemoglobin (Hb), hematocrit (Hct), and red blood cell count (RBC):
\begin{align}
\text{MCV} &= \frac{\text{Hct}}{\text{RBC}} \times 10, \quad
\text{MCH} = \frac{\text{Hb}}{\text{RBC}} \times 10, \quad
\text{MCHC} = \frac{\text{Hb}}{\text{Hct}} \times 100
\end{align}
These are the standard red cell indices defined in clinical hematology~\cite{hoffbrand2019essential}. The target value MCHC (mean corpuscular hemoglobin concentration, in g/dL) classifies chromicity: hypochromic ($\text{MCHC} < 32$), normochromic ($32 \leq \text{MCHC} \leq 36$), or hyperchromic ($\text{MCHC} > 36$)~\cite{buttarello2008automated}.

\subsubsection*{2. Liver Fibrosis (FIB-4)}
The FIB-4 index~\cite{sterling2006development} is a validated non-invasive biomarker for hepatic fibrosis staging:
\begin{equation}
\text{FIB-4} = \frac{\text{age} \times \text{AST}}{\text{PLT} \times \sqrt{\text{ALT}}}
\end{equation}
where AST and ALT are in U/L and PLT is in $10^9$/L. Risk thresholds: low risk/rule out ($\text{FIB-4} < 1.30$), indeterminate ($1.30 \leq \text{FIB-4} \leq 2.67$), high risk/rule in ($\text{FIB-4} > 2.67$). These cutoffs are from Sterling et al.~\cite{sterling2006development} and validated in subsequent NAFLD studies~\cite{mcpherson2017age}.

\subsubsection*{3. Atherogenic Index of Plasma (AIP)}
Lipid-derived values include non-HDL cholesterol ($\text{TC} - \text{HDL}$) and Friedewald LDL ($\text{TC} - \text{HDL} - \text{TG}/5$)~\cite{friedewald1972estimation}. The atherogenic index of plasma~\cite{dobiasova2001atherogenic} is:
\begin{equation}
\text{AIP} = \log_{10}\!\left(\frac{\text{TG}}{\text{HDL}}\right)
\end{equation}
where TG and HDL are in mg/dL. Risk classification: low cardiovascular risk ($\text{AIP} < 0.11$), intermediate ($0.11 \leq \text{AIP} \leq 0.21$), high ($\text{AIP} > 0.21$)~\cite{dobiasova2004atherogenic}.

\subsubsection*{4. Conicity Index (Obesity)}
Body composition is assessed via BMI ($\text{wt}/\text{ht}^2$), waist-to-height ratio ($\text{waist}/\text{ht}$), and the conicity index~\cite{valdez1991simple}:
\begin{equation}
\text{CI} = \frac{\text{waist}}{0.109 \times \sqrt{\text{wt}/\text{ht}}}
\end{equation}
where waist and height are in meters, and weight is in kilograms. The denominator represents the circumference of a cylinder with the same height and mass. Central obesity is indicated when $\text{CI} > 1.25$~\cite{pitanga2005sensitivity}.

\subsubsection*{5. Vascular Stiffness (Pulse Pressure Index)}
Blood pressure derivatives include pulse pressure ($\text{PP} = \text{SBP} - \text{DBP}$), mean arterial pressure ($\text{MAP} = \text{DBP} + \text{PP}/3$), and mid-blood pressure ($\text{MBP} = (\text{SBP} + \text{DBP})/2$). The pulse pressure index~\cite{kovaite2008arterial}:
\begin{equation}
\text{PPI} = \frac{\text{PP}}{\text{SBP}} = \frac{\text{SBP} - \text{DBP}}{\text{SBP}}
\end{equation}
High arterial stiffness is indicated when $\text{PPI} > 0.60$~\cite{franklin2009predictive}.

\subsubsection*{6. Triglyceride-Glucose Index (TyG)}
The TyG index~\cite{simental2008triglyceride} is a surrogate marker for insulin resistance:
\begin{equation}
\text{TyG} = \ln\!\left(\frac{\text{TG} \times \text{Glu}}{2}\right)
\end{equation}
where TG is in mg/dL and Glu is fasting glucose in mg/dL. Insulin resistance is indicated when $\text{TyG} > 8.5$~\cite{guerrero2010product}.

\subsubsection*{7. HOMA-IR (Insulin Resistance)}
The Homeostatic Model Assessment for Insulin Resistance~\cite{matthews1985homeostasis}:
\begin{equation}
\text{HOMA-IR} = \frac{\text{Glu} \times \text{Ins}}{405}
\end{equation}
where Glu is fasting glucose (mg/dL) and Ins is fasting insulin ($\mu$U/mL). The constant 405 normalizes to conventional units. Classification: insulin sensitive ($\text{HOMA} < 1.0$), normal ($1.0 \leq \text{HOMA} < 2.5$), insulin resistant ($\text{HOMA} \geq 2.5$)~\cite{gayoso2016homa}.

\subsubsection*{8. Neutrophil-to-Lymphocyte Ratio (NLR)}
The NLR is a marker of systemic inflammation~\cite{zahorec2001ratio}:
\begin{equation}
\text{NLR} = \frac{\text{Neutrophils}}{\text{Lymphocytes}}
\end{equation}
with auxiliary values $\text{NLR}_{\text{sum}} = \text{Neu} + \text{Lym}$ and $\text{NLR}_{\text{diff}} = |\text{Neu} - \text{Lym}|$ included as intermediate nodes. Systemic inflammation or physiologic stress is indicated when $\text{NLR} \geq 3.0$~\cite{forget2017normal}.

\subsection{Additional Baseline Details}
\label{app:baselines}
\subsubsection{\textbf{Discrete Exponential Mechanism}}
\label{app:exp-noise}

The mechanism samples index $s \in \{0, \dots, m{-}1\}$ with probability
\[
    P_{t_i}(s) = \frac{\alpha_i^{|t_i - s|}}{\sum_{r=0}^{m-1} \alpha_i^{|t_i - r|}}, \qquad \alpha_i = \exp(-\epsilon_i/2).
\]
The expected absolute noise in index units at position $t_i$ is:
\begin{equation}
\label{eq:bar-eta-discrete}
    \mathbb{E}[|s - t_i|] = \frac{\sum_{s=0}^{m-1} |t_i - s|\,\alpha_i^{|t_i - s|}}{\sum_{s=0}^{m-1} \alpha_i^{|t_i - s|}}.
\end{equation}
In domain units: $\bar{\eta}_i^{\,\mathrm{Exp}}(\epsilon_i) = \delta_i \cdot \mathbb{E}[|s - t_i|]$. Evaluated at the grid center $t_i = (m{-}1)/2$, where both tails have full support.
%% -----------------------------------------------------------------

\subsubsection{\textbf{Bounded Laplace}}
\label{app:blap-alloc}
The Bounded Laplace mechanism sanitizes root $i$ as follows. The true value $x_i$ is mapped to index space via $t_i = (x_i - \ell_i)/\delta_i$. Continuous noise is drawn from a Laplace distribution:
\[
    \tilde{t}_i = t_i + Z, \qquad Z \sim \mathrm{Lap}(0, 1/\epsilon_i),
\]
with density $p(z) = (\epsilon_i / 2)\exp(-\epsilon_i |z|)$. The noisy index is clamped and rounded:
\[
    s = \mathrm{round} \big(\mathrm{clip}(\tilde{t}_i,\; 0,\; m{-}1)\big).
\]
The output is $x'_i = s\,\delta_i + \ell_i$.
%% -----------------------------------------------------------------

\subsubsection{\textbf{Staircase Mechanism}}
\label{app:stair-alloc}

The Staircase mechanism~\cite{geng2015optimal} replaces the Laplace's smooth exponential density with a piecewise-constant density. For sensitivity $\Delta = 1$ in index space and privacy parameter $\epsilon_i$, the noise $\hat{\eta}_i$ is sampled as follows:
\begin{enumerate}
    \item Draw a sign $S \in \{-1, +1\}$ uniformly at random.
    \item Draw $G \sim \mathrm{Geom}(1 - e^{-\epsilon_i})$, with $\Pr[G = g] = (1 - e^{-\epsilon_i})\,e^{-\epsilon_i g}$ for $g = 0, 1, 2, \ldots$\; The integer $G$ selects which ``step'' the noise lands on: the noise magnitude will fall in the interval $[G,\, G{+}1)$ in index units.
    \item Draw $U \sim \mathrm{Uniform}(0, 1)$.
    \item With probability $\gamma = 1/(1 + e^{\epsilon_i/2})$: set $\hat{\eta}_i = S \cdot (G + U)$.\\
          With probability $1 - \gamma$: set $\hat{\eta}_i = S \cdot (G + 1 - U)$.
\end{enumerate}

\noindent The geometric variable $G$ determines the step, with each successive step having probability $e^{-\epsilon_i}$ times the previous --- the same decay rate as the Laplace. The uniform variable $U$ places the noise within the step. The $\gamma$-branch orients $U$ toward the inner edge (closer to zero); the $(1{-}\gamma)$-branch orients it toward the outer edge. The parameter $\gamma = 1/(1 + e^{\epsilon_i/2})$ balances these orientations.

The continuous noise is clamped to $[0, m{-}1]$ and rounded to the nearest grid point. The output in domain units is $\eta_i = (s - t_i)\,\delta_i$.

\subsection{Per-template win rates}
\label{app:rq3-winrate}
 
Table~\ref{tab:rq1_win_rate} reports the per-template win rate of $\mathcal{M}$-Opt over $\mathcal{M}$-All across all mechanisms and privacy levels. Under the unified single-draw root-based protocol (Sec.~\ref{sec:experiments-rq1}), $\mathcal{M}$-Opt wins on essentially every (template, mechanism, $\varepsilon$) cell at every budget level: $98\%$ at $B = (k{+}1)\varepsilon$ and $> 99\%$ at the larger budgets. The remaining few losses concentrate on NLR at very low $\varepsilon$, where wMAPE is already saturated above $80\%$ and bootstrap intervals overlap.
 
\begin{table}[t]
\centering
\small
\caption{$\mathcal{M}$-Opt vs.\ $\mathcal{M}$-All win rate per template (across 3 mechanisms $\times$ 10 $\varepsilon$ values = 30 pairs). A ``win'' means $\mathcal{M}$-Opt achieves lower wMAPE.}
\label{tab:rq1_win_rate}
\begin{tabular}{lcccc}
\toprule
\textbf{Template} & $k$ & $k{+}1$ & $2k{+}1$ & $3k{+}1$ \\
\midrule
ANEMIA   & 3 & 30/30 & 30/30 & 30/30 \\
FIB4     & 4 & 30/30 & 30/30 & 30/30 \\
AIP      & 3 & 30/30 & 30/30 & 30/30 \\
CONICITY & 3 & 30/30 & 30/30 & 30/30 \\
VASCULAR & 2 & 30/30 & 30/30 & 30/30 \\
TYG      & 2 & 30/30 & 30/30 & 30/30 \\
HOMA     & 2 & 30/30 & 30/30 & 30/30 \\
NLR      & 2 & 26/30 & 29/30 & 29/30 \\
\midrule
\textbf{Total} & & 236/240 (98\%) & 239/240 (>99\%) & 239/240 (>99\%) \\
\bottomrule
\end{tabular}
\end{table}

\subsection{Reconstruction Attack Results (RQ2)}
\label{app:rq2-tables}
\label{app:reconstruction}

This appendix supports the claims of Sec.~\ref{sec:experiments-rq2} with three pieces of evidence: (i) a per-template breakdown showing that $\mathcal{M}$-All's reconstruction wMAPE degrades with query count on every template under the focal condition (Exponential, Strategy B, informed prior, $\varepsilon_r = 0.1$); (ii) Exponential aggregates under Strategy A and Strategy B for both priors, across all five privacy levels; (iii) aggregate reconstruction wMAPE for the Bounded Laplace and Staircase mechanisms under Strategy B (informed prior), confirming the same pattern observed for the Exponential mechanism (Table~\ref{tab:recon_main} in the main text). Bootstrap details and the full appendix (per-mechanism, per-strategy, per-prior, per-template at every $\varepsilon_r$) are in the supplementary material.

\paragraph{Per-template results (Exponential, $\varepsilon_r = 0.1$, Strategy B, informed prior).}
Table~\ref{tab:rq2_per_template} reports per-template reconstruction wMAPE under Strategy B with the informed prior at $\varepsilon_r = 0.1$. $\mathcal{M}$-All degrades monotonically with $q$ on every template; $\mathcal{M}$-Roots and $\mathcal{M}$-Opt are invariant. The largest absolute degradations occur on templates with wide root domains (HOMA, NLR, TYG).

\begin{table}[h]
\centering
\setlength{\tabcolsep}{3pt}
\small
\caption{Per-template reconstruction wMAPE (\%) at $\varepsilon_r = 0.1$ (Exponential, Strategy B, informed prior). $\mathcal{M}$-All degrades monotonically with $q$ on every template; $\mathcal{M}$-Roots and $\mathcal{M}$-Opt are invariant to $q$ (single value shown).}
\label{tab:rq2_per_template}
\begin{tabular}{l|cccc|cc}
\toprule
& \multicolumn{4}{c|}{$\mathcal{M}$-All} & $\mathcal{M}$-Roots & $\mathcal{M}$-Opt \\
\textbf{Template} & $q{=}1$ & $q{=}4$ & $q{=}8$ & $q{=}16$ & (any $q$) & (any $q$) \\
\midrule
ANEMIA   & 1.6  & 0.8  & 0.6  & 0.4  & 1.9  & 3.6 \\
FIB4     & 8.5  & 4.9  & 3.4  & 2.4  & 10.2 & 9.3 \\
AIP      & 11.2 & 7.2  & 5.6  & 3.9  & 15.0 & 14.3 \\
CONICITY & 2.1  & 0.9  & 0.7  & 0.5  & 2.2  & 1.9 \\
VASCULAR & 2.5  & 1.3  & 0.9  & 0.6  & 2.7  & 2.7 \\
TYG      & 15.5 & 11.0 & 7.3  & 5.0  & 20.4 & 17.3 \\
HOMA     & 19.1 & 9.7  & 6.9  & 5.1  & 24.3 & 20.5 \\
NLR      & 20.9 & 17.2 & 10.5 & 7.6  & 28.8 & 25.6 \\
\bottomrule
\end{tabular}
\end{table}

\paragraph{Exponential: Strategy A and Strategy B aggregates (both priors).}
Tables~\ref{tab:rq2_exp_A_uniform}--\ref{tab:rq2_exp_B_informed} report aggregate reconstruction wMAPE for the Exponential mechanism across all five privacy levels under both strategies and both priors. Under Strategy A (single targeted root $r^*$), $\mathcal{M}$-All has $q+1$ observations on $r^*$ at each $q$ and the gap to $\mathcal{M}$-Roots widens monotonically; under Strategy B (queries distributed across all roots), the same pattern holds for the average across roots. The informed prior gives qualitatively identical behavior to uniform — under v3 there is no target-release coupling, so the prior contributes only soft regularization.

\begin{table}[h]
\centering
\setlength{\tabcolsep}{2.5pt}
\small
\caption{Exponential, Strategy A, uniform prior: aggregate reconstruction wMAPE (\%) (mean across 8 templates). Reports the targeted root $r^*$ only.}
\label{tab:rq2_exp_A_uniform}
\begin{tabular}{c|ccc|ccc|ccc|ccc}
\toprule
& \multicolumn{3}{c|}{$q=1$} & \multicolumn{3}{c|}{$q=4$} & \multicolumn{3}{c|}{$q=8$} & \multicolumn{3}{c}{$q=16$} \\
$\varepsilon_r$ & All & Rts & Opt & All & Rts & Opt & All & Rts & Opt & All & Rts & Opt \\
\midrule
0.01 & 127  & 213  & 104  & 97.2 & 213  & 104  & 74.7 & 213  & 104  & 55.4 & 213  & 104 \\
0.05 & 26.1 & 42.4 & 27.9 & 23.3 & 42.4 & 27.9 & 16.9 & 42.4 & 27.9 & 12.3 & 42.4 & 27.9 \\
0.1  & 16.5 & 22.4 & 15.1 & 12.3 & 22.4 & 15.1 & 8.6  & 22.4 & 15.1 & 5.9  & 22.4 & 15.1 \\
0.5  & 4.9  & 5.3  & 3.4  & 2.3  & 5.3  & 3.4  & 1.7  & 5.3  & 3.4  & 1.1  & 5.3  & 3.4 \\
1.0  & 2.5  & 2.7  & 1.7  & 1.2  & 2.7  & 1.7  & 0.9  & 2.7  & 1.7  & 0.6  & 2.7  & 1.7 \\
\bottomrule
\end{tabular}
\end{table}

\begin{table}[h]
\centering
\setlength{\tabcolsep}{2.5pt}
\small
\caption{Exponential, Strategy A, informed prior: aggregate reconstruction wMAPE (\%) (mean across 8 templates). Targeted root only.}
\label{tab:rq2_exp_A_informed}
\begin{tabular}{c|ccc|ccc|ccc|ccc}
\toprule
& \multicolumn{3}{c|}{$q=1$} & \multicolumn{3}{c|}{$q=4$} & \multicolumn{3}{c|}{$q=8$} & \multicolumn{3}{c}{$q=16$} \\
$\varepsilon_r$ & All & Rts & Opt & All & Rts & Opt & All & Rts & Opt & All & Rts & Opt \\
\midrule
0.01 & 35.4 & 38.4 & 44.8 & 34.7 & 38.4 & 44.8 & 34.1 & 38.4 & 44.8 & 32.5 & 38.4 & 44.8 \\
0.05 & 28.3 & 40.3 & 29.1 & 19.7 & 40.3 & 29.1 & 15.6 & 40.3 & 29.1 & 11.7 & 40.3 & 29.1 \\
0.1  & 17.8 & 23.8 & 15.9 & 12.3 & 23.8 & 15.9 & 8.3  & 23.8 & 15.9 & 5.9  & 23.8 & 15.9 \\
0.5  & 5.0  & 5.3  & 3.4  & 2.3  & 5.3  & 3.4  & 1.7  & 5.3  & 3.4  & 1.1  & 5.3  & 3.4 \\
1.0  & 2.6  & 2.7  & 1.7  & 1.2  & 2.7  & 1.7  & 0.9  & 2.7  & 1.7  & 0.6  & 2.7  & 1.7 \\
\bottomrule
\end{tabular}
\end{table}

\begin{table}[h]
\centering
\setlength{\tabcolsep}{2.5pt}
\small
\caption{Exponential, Strategy B, uniform prior: aggregate reconstruction wMAPE (\%) (mean across 8 templates). Reports the average across all $k$ roots.}
\label{tab:rq2_exp_B_uniform}
\begin{tabular}{c|ccc|ccc|ccc|ccc}
\toprule
& \multicolumn{3}{c|}{$q=1$} & \multicolumn{3}{c|}{$q=4$} & \multicolumn{3}{c|}{$q=8$} & \multicolumn{3}{c}{$q=16$} \\
$\varepsilon_r$ & All & Rts & Opt & All & Rts & Opt & All & Rts & Opt & All & Rts & Opt \\
\midrule
0.01 & 73.8 & 115  & 84.9 & 54.5 & 115  & 84.9 & 41.7 & 115  & 84.9 & 30.7 & 115  & 84.9 \\
0.05 & 16.6 & 23.8 & 22.4 & 12.6 & 23.8 & 22.4 & 9.2  & 23.8 & 22.4 & 6.6  & 23.8 & 22.4 \\
0.1  & 9.9  & 12.5 & 13.5 & 6.6  & 12.5 & 13.5 & 4.7  & 12.5 & 13.5 & 3.2  & 12.5 & 13.5 \\
0.5  & 2.7  & 2.9  & 5.5  & 1.3  & 2.9  & 5.5  & 0.9  & 2.9  & 5.5  & 0.6  & 2.9  & 5.5 \\
1.0  & 1.3  & 1.5  & 4.3  & 0.7  & 1.5  & 4.3  & 0.5  & 1.5  & 4.3  & 0.3  & 1.5  & 4.3 \\
\bottomrule
\end{tabular}
\end{table}

\begin{table}[h]
\centering
\setlength{\tabcolsep}{2.5pt}
\small
\caption{Exponential, Strategy B, informed prior: aggregate reconstruction wMAPE (\%) (mean across 8 templates). This is the focal condition reported in main-text Table~\ref{tab:recon_main}.}
\label{tab:rq2_exp_B_informed}
\begin{tabular}{c|ccc|ccc|ccc|ccc}
\toprule
& \multicolumn{3}{c|}{$q=1$} & \multicolumn{3}{c|}{$q=4$} & \multicolumn{3}{c|}{$q=8$} & \multicolumn{3}{c}{$q=16$} \\
$\varepsilon_r$ & All & Rts & Opt & All & Rts & Opt & All & Rts & Opt & All & Rts & Opt \\
\midrule
0.01 & 24.7 & 28.8 & 30.3 & 23.3 & 28.8 & 30.3 & 21.6 & 28.8 & 30.3 & 19.6 & 28.8 & 30.3 \\
0.05 & 16.4 & 22.8 & 21.0 & 10.9 & 22.8 & 21.0 & 8.6  & 22.8 & 21.0 & 6.4  & 22.8 & 21.0 \\
0.1  & 10.2 & 13.2 & 11.9 & 6.6  & 13.2 & 11.9 & 4.5  & 13.2 & 11.9 & 3.2  & 13.2 & 11.9 \\
0.5  & 2.8  & 2.9  & 3.4  & 1.3  & 2.9  & 3.4  & 0.9  & 2.9  & 3.4  & 0.6  & 2.9  & 3.4 \\
1.0  & 1.4  & 1.5  & 2.3  & 0.7  & 1.5  & 2.3  & 0.5  & 1.5  & 2.3  & 0.3  & 1.5  & 2.3 \\
\bottomrule
\end{tabular}
\end{table}

\paragraph{Bounded Laplace and Staircase aggregates (Strategy B, informed prior).}
Tables~\ref{tab:rq2_blap_agg} and~\ref{tab:rq2_stair_agg} report aggregate reconstruction wMAPE for the Bounded Laplace and Staircase mechanisms under the focal condition (Strategy B, informed prior, mean across 8 templates). Both follow the same pattern as Exponential: $\mathcal{M}$-All decreases with $q$ at every $\varepsilon_r$, while $\mathcal{M}$-Roots and $\mathcal{M}$-Opt are invariant.

\begin{table}[h]
\centering
\setlength{\tabcolsep}{2.5pt}
\small
\caption{Bounded Laplace, Strategy B, informed prior: aggregate reconstruction wMAPE (\%) (mean across 8 templates).}
\label{tab:rq2_blap_agg}
\begin{tabular}{c|ccc|ccc|ccc|ccc}
\toprule
& \multicolumn{3}{c|}{$q=1$} & \multicolumn{3}{c|}{$q=4$} & \multicolumn{3}{c|}{$q=8$} & \multicolumn{3}{c}{$q=16$} \\
$\varepsilon_r$ & All & Rts & Opt & All & Rts & Opt & All & Rts & Opt & All & Rts & Opt \\
\midrule
0.01 & 25.9 & 31.1 & 32.4 & 23.4 & 31.1 & 32.4 & 24.0 & 31.1 & 32.4 & 23.1 & 31.1 & 32.4 \\
0.05 & 11.5 & 13.5 & 12.2 & 7.7  & 13.5 & 12.2 & 6.3  & 13.5 & 12.2 & 5.4  & 13.5 & 12.2 \\
0.1  & 6.6  & 7.2  & 6.8  & 3.5  & 7.2  & 6.8  & 2.7  & 7.2  & 6.8  & 2.0  & 7.2  & 6.8 \\
0.5  & 1.4  & 1.5  & 2.3  & 0.7  & 1.5  & 2.3  & 0.5  & 1.5  & 2.3  & 0.4  & 1.5  & 2.3 \\
1.0  & 0.7  & 0.8  & 1.7  & 0.4  & 0.8  & 1.7  & 0.3  & 0.8  & 1.7  & 0.2  & 0.8  & 1.7 \\
\bottomrule
\end{tabular}
\end{table}

\begin{table}[h]
\centering
\setlength{\tabcolsep}{2.5pt}
\small
\caption{Staircase, Strategy B, informed prior: aggregate reconstruction wMAPE (\%) (mean across 8 templates). $\varepsilon_r = 1.0$ omitted (mechanism degenerate at high budget).}
\label{tab:rq2_stair_agg}
\begin{tabular}{c|ccc|ccc|ccc|ccc}
\toprule
& \multicolumn{3}{c|}{$q=1$} & \multicolumn{3}{c|}{$q=4$} & \multicolumn{3}{c|}{$q=8$} & \multicolumn{3}{c}{$q=16$} \\
$\varepsilon_r$ & All & Rts & Opt & All & Rts & Opt & All & Rts & Opt & All & Rts & Opt \\
\midrule
0.01 & 25.8 & 30.4 & 31.4 & 23.7 & 30.4 & 31.4 & 23.7 & 30.4 & 31.4 & 23.2 & 30.4 & 31.4 \\
0.05 & 11.6 & 13.6 & 12.0 & 7.8  & 13.6 & 12.0 & 6.6  & 13.6 & 12.0 & 5.6  & 13.6 & 12.0 \\
0.1  & 6.6  & 7.1  & 6.6  & 3.5  & 7.1  & 6.6  & 2.7  & 7.1  & 6.6  & 2.1  & 7.1  & 6.6 \\
0.5  & 1.5  & 1.4  & 1.6  & 0.7  & 1.4  & 1.6  & 0.5  & 1.4  & 1.6  & 0.4  & 1.4  & 1.6 \\
\bottomrule
\end{tabular}
\end{table}

\subsection{RQ1: Per-template wMAPE Curves}
\label{app:rq1-tables-figs}

Fig.~\ref{fig:per_template_app} shows the per-template wMAPE for all 8 medical
templates across the full $\varepsilon$ sweep at $B = (2k{+}1)\varepsilon$.
Templates are arranged in two rows by absolute-gap regime (Sec.~\ref{sec:experiments-rq3}):
the top row contains compressing-formula templates with the largest absolute
$\mathcal{M}$-All vs.\ \sysname{} gaps; the bottom row contains amplifying-formula
templates with smaller absolute gaps but comparable multiplicative ratios. Each
panel overlays nine methods (3 mechanisms $\times$ 3 configurations).

\begin{figure*}[t]
    \centering
    \includegraphics[width=\linewidth]{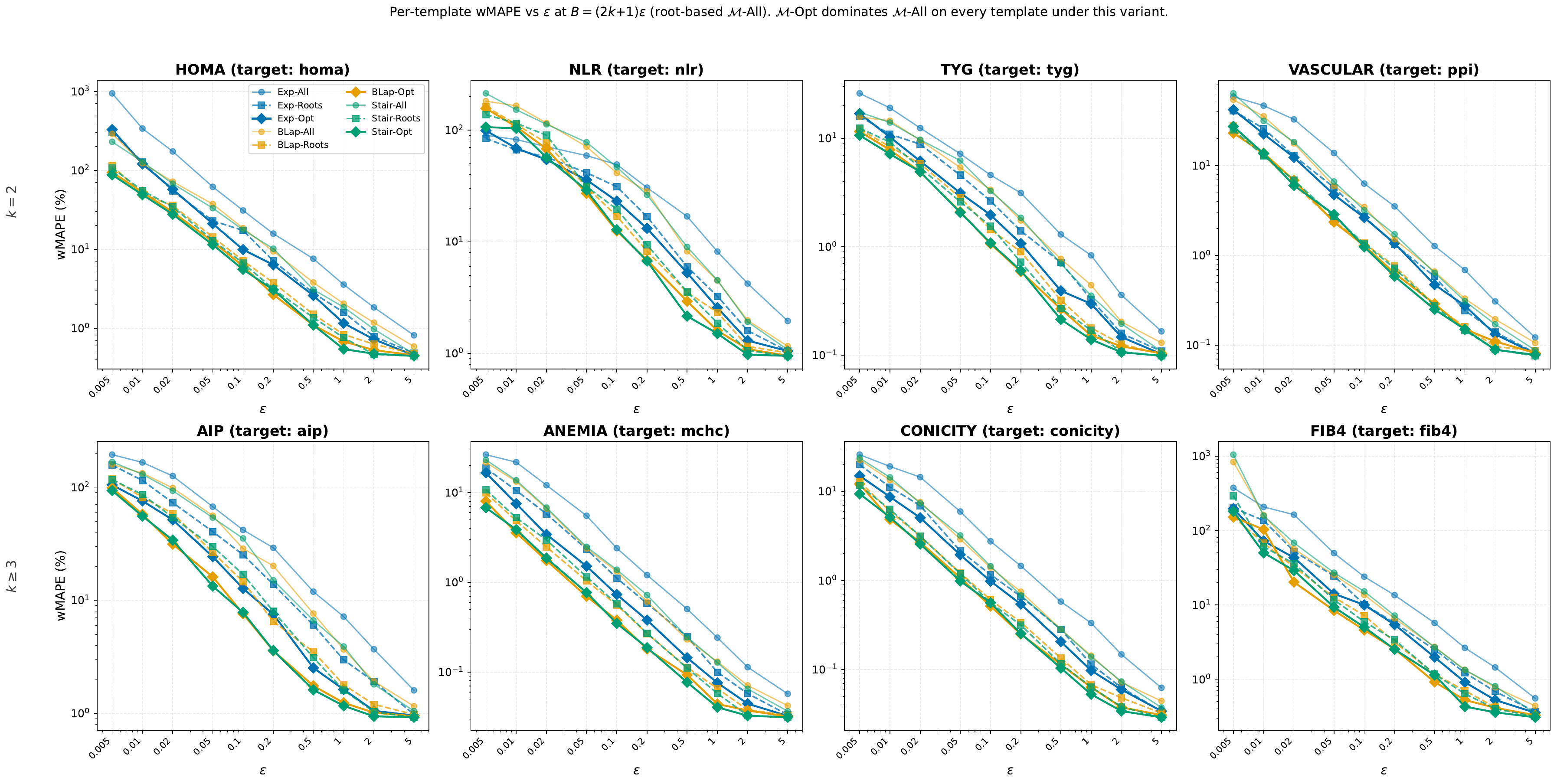}
    \caption{Per-template wMAPE (\%) vs.\ $\varepsilon$ at $B = (2k{+}1)\varepsilon$.
    Top row: compressing formulas with high-error baselines (FIB4, HOMA, AIP,
    NLR). Bottom row: amplifying formulas with already-low base errors (ANEMIA,
    CONICITY, TYG, VASCULAR). Nine methods per panel
    (3 mechanisms $\times$ 3 variants). The headline cell ($\varepsilon = 0.1$,
    Exponential) is summarized in main-text Table~\ref{tab:per_template_summary};
    full numerical tables follow in App.~\ref{app:rq1-tables}.}
    \label{fig:per_template_app}
\end{figure*}

\subsection{Per-root Budget Allocation}
\label{app:per-root-budget}
Figure~\ref{fig:allocation_bars} shows the per-root budget shares across templates, mechanisms, and privacy levels: $\mathcal{M}$-Opt allocates more budget to influential roots in every condition. Figures~\ref{fig:allocation_powerlaw_10} and~\ref{fig:allocation_powerlaw_01} confirm the closed-form prediction $\varepsilon_i \propto \sqrt{|h_i|\,\delta_i}$ (Sec.~\ref{sec:opt-budget}) on log-log axes for all three mechanisms at $\varepsilon = 0.1$ and $\varepsilon = 1.0$, with fitted slopes within $\pm 0.02$ of the predicted $0.5$.
% We show the per root budget allocation across different templates, mechanisms and privacy parameters in Figure~\ref{fig:allocation_bars}. All $\mathcal{M}$-Opt methods exploit the budget to minimize target node error, by allocating more budget to influential nodes. Figure~\ref{fig:allocation_powerlaw_01}---~\ref{fig:allocation_powerlaw_10} show the allocation following the power law, for $\varepsilon = 0.1$ and $\varepsilon = 1.0$ respectively.

\begin{figure*}[t]
\centering
\includegraphics[width=\textwidth]{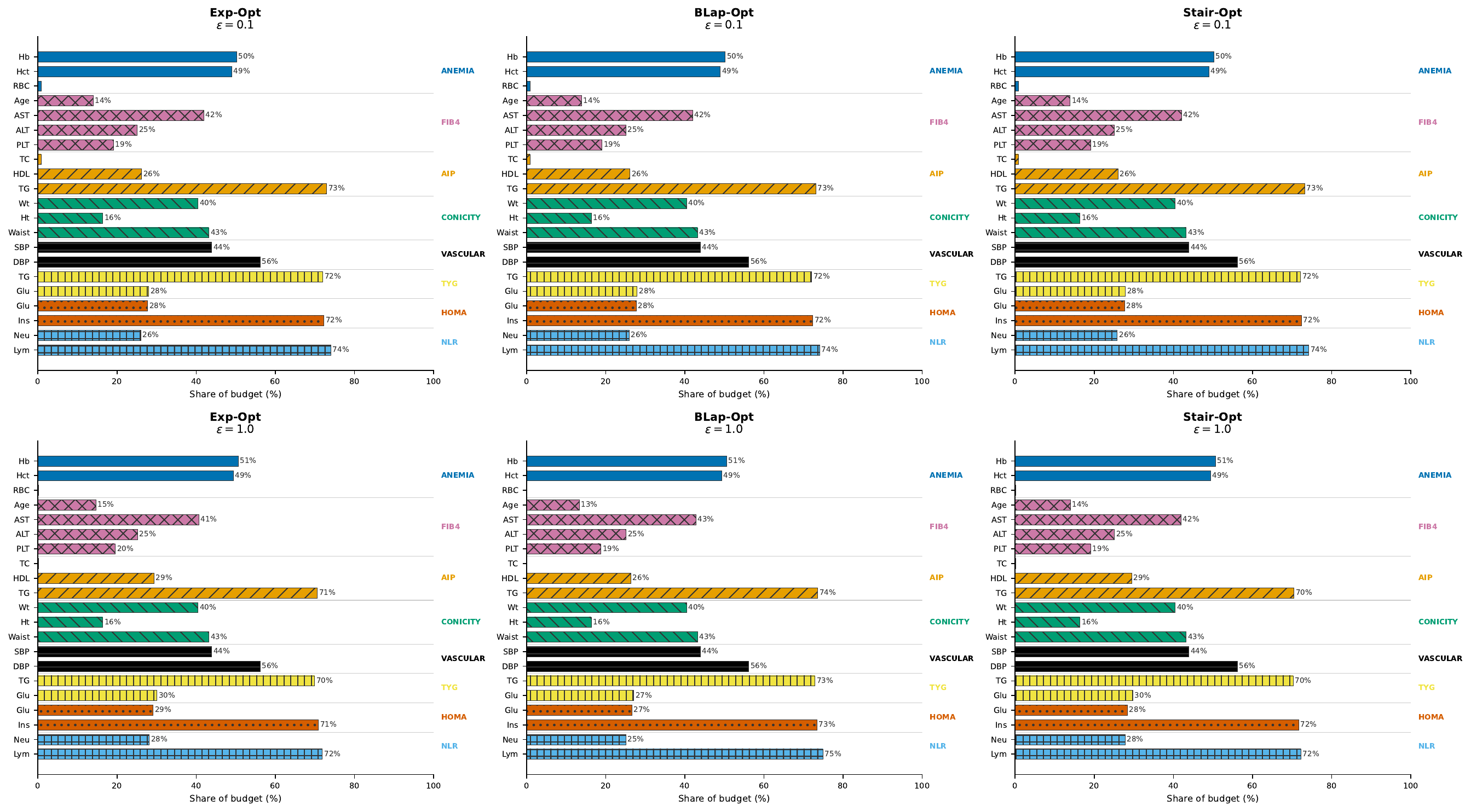}
\caption{Per-root budget shares (\%) for three mechanisms at $\varepsilon = 0.1$ (left three panels) and $\varepsilon = 1.0$ (right three).  Allocations are nearly identical across mechanisms at low $\varepsilon$; at $\varepsilon = 1.0$ minor mechanism-specific differences emerge (e.g., BLap assigns 43\% to AST in FIB4 vs.\ Exp's 41\%).}
\label{fig:allocation_bars}
\end{figure*}

\subsection{Closed-Form Budget Allocation in the Laplace Limit}
\label{app:budget-derivation}
 
Sec.~\ref{sec:opt-budget} establishes the optimization
\[
    \min_{\epsilon_1, \dots, \epsilon_k} \sum_{i=1}^k |h_i| \cdot \bar{\eta}_i(\epsilon_i)
    \qquad \text{s.t.} \qquad
    \sum_{i=1}^k \epsilon_i = B, \quad \epsilon_i \geq \epsilon_{\min}.
\]
The objective is solved numerically for each mechanism. Here we derive the closed-form solution in the Laplace limit, where the discrete index-space distribution of Sec.~\ref{sec:sanitization} converges to a continuous Laplace distribution as the grid becomes fine ($m \to \infty$).

\paragraph{Setup and bound.} Each root is independently sanitized via $\mathcal{M}_{\epsilon_i}$, producing output index $s$ and domain-unit noise $\eta_i = (s - t_i)\,\delta_i$. Let $x' = (x_1', \dots, x_k')$ be the sanitized vector with $x_i' = s\,\delta_i + \ell_i$. The error in the intermediate $g$ is $\Delta = g(x') - g(x)$; a first-order expansion gives $\Delta \approx \sum_{i=1}^{k} h_i\,\eta_i$. Applying the triangle inequality inside the expectation,
\begin{align*}
    \mathbb{E}\big[|\Delta|\big]
    \;\approx\; \mathbb{E}\Big[\Big|\sum_{i=1}^k h_i\,\eta_i\Big|\Big]
    \;\leq\; \sum_{i=1}^k |h_i|\,\mathbb{E}\big[|\eta_i|\big].
\end{align*}
Since the $\eta_i$ are independent (each mechanism uses independent randomness), this upper bound is tight up to cancellation effects. The expected absolute noise $\mathbb{E}[|\eta_i|]$ depends on $t_i$, the true value's position within the grid; to keep the optimization data-independent, we take the worst case over all positions:
\begin{equation}
\label{eq:bar-eta}
    \bar{\eta}_i(\epsilon_i) = \sup_{t_i \in [0,\, m-1]} \mathbb{E}\big[|\eta_i|\big].
\end{equation}
This yields the optimization restated above.
 
\paragraph{Continuum approximation.} For the discrete Exponential mechanism, the sampling weight at index $s$ is $P(s) \propto \exp(-\epsilon_i |t_i - s|/2)$. As $m \to \infty$, this converges to a Laplace distribution centered at $t_i$ with scale $b_i = 2/\epsilon_i$ in index units. The expected absolute deviation of a Laplace with scale $b_i$ is $b_i$. In domain units, $\bar{\eta}_i(\epsilon_i) \approx 2\delta_i / \epsilon_i$. The Bounded Laplace mechanism is asymptotically identical (truncation effects are exponentially small in $\epsilon_i (m{-}1)$).
 
\paragraph{Lagrangian.} Substituting into the objective, the Lagrangian for the active constraint set ($\epsilon_i > \epsilon_{\min}$) becomes
\begin{align*}
    \mathcal{L}(\epsilon_1, \dots, \epsilon_k, \lambda) = \sum_{i=1}^{k} |h_i|\,\frac{2\delta_i}{\epsilon_i} + \lambda\left(\sum_{i=1}^{k} \epsilon_i - B\right),
\end{align*}
where $\lambda \in \mathbb{R}$ is the multiplier for the budget constraint. Taking $\partial \mathcal{L}/\partial \epsilon_i = 0$,
\[
    -\frac{2|h_i|\,\delta_i}{\epsilon_i^2} + \lambda = 0
    \quad \Longrightarrow \quad
    \epsilon_i^* = \sqrt{\frac{2|h_i|\,\delta_i}{\lambda}}
    \;\propto\; \sqrt{|h_i|\,\delta_i}.
\]
Let $w_i = \sqrt{|h_i|\,\delta_i}$. Substituting $\epsilon_i^* = w_i \sqrt{2/\lambda}$ into the budget constraint $\sum_{i \in \mathcal{R}_A} \epsilon_i = B - |\mathcal{R}_S|\,\epsilon_{\min}$ (where $\mathcal{R}_A$ is the set of active roots and $\mathcal{R}_S$ is the set of saturated roots) and solving for $\lambda$ gives
\begin{equation}
\label{eq:closed-form}
    \tilde{\epsilon}_i = \frac{w_i}{\sum_{j \in \mathcal{R}_A} w_j} \left(B - |\mathcal{R}_S|\,\epsilon_{\min}\right).
\end{equation}
The second derivative of the objective, $4|h_i|\delta_i / \epsilon_i^3 > 0$ for all $\epsilon_i > 0$, confirms strict convexity, so this stationary point is the unique global minimum. This matches the closed-form claim made in Sec.~\ref{sec:opt-budget}.
 
\paragraph{Saturation handling.} We solve~\eqref{eq:closed-form} iteratively: roots whose allocation falls below $\epsilon_{\min}$ are clamped, removed from $\mathcal{R}_A$, added to $\mathcal{R}_S$, and the remaining budget is redistributed among the still-active roots. This terminates after at most $k$ iterations.
 
\paragraph{Discrete mechanisms.} For the discrete Exponential, Bounded Laplace, and Staircase mechanisms, we solve the optimization directly over the discrete $\bar{\eta}_i(\epsilon_i)$ via standard numerical optimization. RQ3 (Sec.~\ref{sec:experiments-rq3}) verifies that the scaling $\epsilon_i \propto \sqrt{|h_i| \delta_i}$ predicted by the Laplace closed-form holds empirically across all three discrete mechanisms (fitted log-log slope $0.50$ at $\varepsilon = 0.1$).
 
\paragraph{Sensitivity computation.} Sec.~\ref{sec:opt-budget} requires the sensitivity weights $h_i = |\partial g / \partial x_i(\mu)|$, evaluated at the population mean $\mu = (\mu_1, \dots, \mu_k)$. \sysname{} computes these via a single forward-mode automatic differentiation pass through the conversation DAG:
\begin{enumerate}
    \item Set all root values to their population means: $x = \mu$.
    \item Propagate values forward through the DAG in topological order, computing all intermediate node values.
    \item Propagate gradient seeds forward using the chain rule with exact analytical local partial derivatives at each node.
    \item Take absolute values: $h_i = |\partial g / \partial x_i(\mu)|$.
\end{enumerate}
Since target formulas may be nonlinear, $\partial g / \partial x_i$ can depend on the values of other roots; evaluating at $\mu$ captures these cross-dependencies at a representative operating point. The population means are domain knowledge --- in our experiments, computed from the NHANES~\cite{nhanes2017exam} reference population (excluding test data). If the deployment population differs substantially from the reference, the allocation may be suboptimal for that population, but the privacy guarantee remains $B$-mDP regardless: the weights $h_i$ enter the utility objective only and do not affect the privacy proof.

% ============================================================================
% RCE APPENDIX SUBSECTION (RQ4)
%
% Place this in sections/Appendix.tex, after the existing app:rq2-tables
% subsection (or wherever you want it - the appendix is unbounded).
%
% Purpose: provides per-template agent-eval RCE numbers with bootstrap SEs
% to back the compressed §5.6 RQ4 RCE subsubsection in the body.
%
% The body subsubsection references this only implicitly through
% Fig.~\ref{fig:rce_grid} (which is in the body). This appendix subsection
% provides the numerical detail behind the figure for reviewers who want
% exact cell values.
% ============================================================================

\subsection{Agent-Eval RCE: Per-Template Numbers (RQ4)}
\label{app:agent-eval-rce}

Table~\ref{tab:agent_rce_eps0.1_app} reports the per-template
deployment-RCE numbers, with
bootstrap standard errors over patients. Aggregated across templates at
the focal cell ($\varepsilon = 0.1$, $B = (2k{+}1)\varepsilon$),
$\mathcal{M}$-Opt achieves $7.3\%$ RCE vs.\ $\mathcal{M}$-All's $19.2\%$;
at $B = (3k{+}1)\varepsilon$, $5.4\%$ vs.\ $19.7\%$. The cell-by-cell
ordering matches the harness RCE results
(App.~\ref{app:rq1-tables}, Tables~\ref{tab:rq1_kplus1_rce}--\ref{tab:rq1_3kplus1_rce}).

\begin{table*}[t]
\centering
\small
\setlength{\tabcolsep}{3pt}
\caption{Per-template agent-eval Risk Class Error (\%) at $\varepsilon = 0.1$,
with bootstrap SE over patients ($n = 100$ per template-cell). Best per
row in \textbf{bold}.}
\label{tab:agent_rce_eps0.1_app}
\begin{tabular}{l|ccc|ccc|ccc}
\toprule
& \multicolumn{3}{c|}{\textbf{$\mathcal{M}$-All}} & \multicolumn{3}{c|}{\textbf{$\mathcal{M}$-Roots}} & \multicolumn{3}{c}{\textbf{$\mathcal{M}$-Opt}} \\
Template & $B{=}k{+}1$ & $B{=}2k{+}1$ & $B{=}3k{+}1$ & $B{=}k{+}1$ & $B{=}2k{+}1$ & $B{=}3k{+}1$ & $B{=}k{+}1$ & $B{=}2k{+}1$ & $B{=}3k{+}1$ \\
\midrule
HOMA     & 28.0\pmstd{3.2} & 30.5\pmstd{3.2} & 33.0\pmstd{3.1} & 27.0\pmstd{3.0} & 18.5\pmstd{2.7} & 15.5\pmstd{2.5} & 21.0\pmstd{2.8} & 15.0\pmstd{2.5} & \textbf{10.5\pmstd{2.1}} \\
ANEMIA   & 10.0\pmstd{2.2} & 12.0\pmstd{2.3} & 17.5\pmstd{2.7} & 11.0\pmstd{2.2} & 4.0\pmstd{1.4}  & 3.0\pmstd{1.2}  & 5.5\pmstd{1.6}  & 2.5\pmstd{1.1}  & \textbf{1.0\pmstd{0.7}}  \\
FIB4     & 13.5\pmstd{2.4} & 18.0\pmstd{2.6} & 18.0\pmstd{2.7} & 12.5\pmstd{2.3} & 5.0\pmstd{1.6}  & 3.5\pmstd{1.3}  & 9.5\pmstd{2.1}  & 4.0\pmstd{1.4}  & \textbf{3.0\pmstd{1.2}}  \\
AIP      & 34.5\pmstd{3.3} & 30.5\pmstd{3.2} & 29.0\pmstd{3.2} & 30.5\pmstd{3.2} & 22.5\pmstd{2.9} & 17.0\pmstd{2.7} & 18.5\pmstd{2.7} & 11.0\pmstd{2.3} & \textbf{8.0\pmstd{1.9}}   \\
CONICITY & 9.0\pmstd{1.9}  & 10.5\pmstd{2.2} & 9.5\pmstd{2.0}  & 5.5\pmstd{1.6}  & 2.5\pmstd{1.1}  & 2.5\pmstd{1.1}  & 3.0\pmstd{1.2}  & \textbf{2.0\pmstd{1.0}} & 2.0\pmstd{1.0}           \\
VASCULAR & \textbf{1.5\pmstd{0.8}} & 2.0\pmstd{1.0}  & 2.0\pmstd{1.0}  & 1.5\pmstd{0.9}  & 1.5\pmstd{0.9}  & 1.5\pmstd{0.9}  & 1.5\pmstd{0.9}  & 1.5\pmstd{0.9}  & 1.5\pmstd{0.9}           \\
TYG      & 27.0\pmstd{3.0} & 30.0\pmstd{3.3} & 24.0\pmstd{3.1} & 24.0\pmstd{3.2} & 15.5\pmstd{2.7} & 13.0\pmstd{2.5} & 18.0\pmstd{2.8} & 12.5\pmstd{2.4} & \textbf{10.0\pmstd{2.2}} \\
NLR      & 18.5\pmstd{2.7} & 20.5\pmstd{2.8} & 24.5\pmstd{2.9} & 15.0\pmstd{2.6} & 13.0\pmstd{2.4} & 10.5\pmstd{2.2} & 15.0\pmstd{2.6} & 10.0\pmstd{2.1} & \textbf{7.5\pmstd{1.9}}   \\
\bottomrule
\end{tabular}
\end{table*}

\subsection{Proof of Theorem~\ref{thm:privacy-guarantee}}
\label{app:thm:privacy-guarantee}

\begin{proof}[Proof of Theorem~\ref{thm:privacy-guarantee}]
By the privacy game definition,
\begin{align*}
\mathrm{Adv}_{\mathrm{DAG}}(\mathcal{A})
  &= |2\Pr[G_{\mathrm{DAG}}(\mathcal{A}) = 1] - 1| \\
  &= |\Pr[b'{=}0 \mid b{=}0] - \Pr[b'{=}0 \mid b{=}1]|.    
\end{align*}

The adversary's view in the game is fully determined by $\hat{X}^b$: $\mathcal{A}$ chooses $F$ at challenge time and computes $\hat{G}_b = C(\hat{X}^b, F)$ deterministically. Hence $\Pr[b'{=}0 \mid b]$ depends only on the distribution of $\hat{X}^b$ (and on $\mathcal{A}$'s fixed strategy on its view).

\medskip
\noindent\textbf{Hybrid argument.} Without loss of generality, assume $X^0$ and $X^1$ differ in exactly the roots $\{1, \ldots, k\}$. Define hybrids
\[
  X^{(0)} = X^0,\qquad X^{(j)} = (x^1_1, \ldots, x^1_j, x^0_{j+1}, \ldots, x^0_k),\qquad X^{(k)} = X^1.
\]
Each consecutive pair $(X^{(j-1)}, X^{(j)})$ differs in exactly root $j$. Let
\[
  p_j := \Pr[\mathcal{A}\text{ outputs } 0 \mid \text{input } X^{(j)}\text{ is sanitized}].
\]
Then $\Pr[b'{=}0 \mid b{=}0] = p_0$ and $\Pr[b'{=}0 \mid b{=}1] = p_k$, so
\[
  \mathrm{Adv}_{\mathrm{DAG}}(\mathcal{A}) = |p_0 - p_k|
  = \Big| \sum_{j=1}^{k} (p_{j-1} - p_j) \Big|
  \leq \sum_{j=1}^{k} |p_{j-1} - p_j|.
\]

\medskip
\noindent\textbf{Single-root step.} For each $j$, $X^{(j-1)}$ and $X^{(j)}$ agree everywhere except at root $j$ (where they take values $x^0_j$ and $x^1_j$ respectively), and the rest of the noised vector is sampled from the same distribution. Marginalizing over the shared roots, $|p_{j-1} - p_j|$ depends only on the distribution of the $j$-th sanitized component. Conditioning on the realized values of all other (shared) roots, denote
\[
  q_j(z) := \Pr[\mathcal{A}\text{ outputs } 0 \mid \text{root $j$ is sanitized to } z,\ \text{rest fixed to } X'^{(j)}].
\]
Then $|p_{j-1} - p_j| \leq \mathbb{E}_{X'^{(j)}}\big| \mathbb{E}_z[q_j(z) \mid x^0_j] - \mathbb{E}_z[q_j(z) \mid x^1_j] \big|$ where the inner expectations are over the noise of root $j$. We bound the inner difference for each fixed shared context.

\medskip
\noindent\textbf{Case 1: root $j$ is type $\tau_I$ (FPE).} By PRP security of FPE under security parameter $\kappa$, no polynomial-time adversary can distinguish $E_F(K, N_{\tau_j}, x^0_j)$ from $E_F(K, N_{\tau_j}, x^1_j)$ except with probability $\mathrm{negl}(\kappa)$. Hence
\[
  |p_{j-1} - p_j| \leq \mathrm{negl}(\kappa).
\]

\medskip
\noindent\textbf{Case 2: root $j$ is type $\tau_{II}$ (mDP with budget $\epsilon_j$).} The mechanism $M_{\epsilon_j}$ is $\epsilon_j$-mDP under the index-space metric $d(t, t') = |t - t'|$, with $|t_j^0 - t_j^1| = |x_j^0 - x_j^1| / \delta_j$. Hence for every measurable output set $S$,
\[
  \Pr[M_{\epsilon_j}(x_j^0) \in S] \leq e^{\epsilon_j |x_j^0 - x_j^1| / \delta_j} \Pr[M_{\epsilon_j}(x_j^1) \in S].
\]
Marginalizing $\mathcal{A}$'s output (a function of $\hat{X}^b$) over the noise distribution of root $j$ and applying this pointwise,
\begin{align*}
  \mathbb{E}_z[q_j(z) \mid x^0_j]
  &= \sum_z \Pr[M_{\epsilon_j}(x_j^0) = z]\, q_j(z) \\
  &\leq e^{\epsilon_j |x_j^0 - x_j^1| / \delta_j} \sum_z \Pr[M_{\epsilon_j}(x_j^1) = z]\, q_j(z) \\
  &= e^{\epsilon_j |x_j^0 - x_j^1| / \delta_j}\, \mathbb{E}_z[q_j(z) \mid x^1_j].
\end{align*}
Since each expectation is in $[0, 1]$, this gives
\[
  \big| \mathbb{E}_z[q_j(z) \mid x^0_j] - \mathbb{E}_z[q_j(z) \mid x^1_j] \big|
  \leq e^{\epsilon_j |x_j^0 - x_j^1| / \delta_j} - 1
  \leq e^{\epsilon_j |x_j^0 - x_j^1| / \delta_j}.
\]
Taking expectation over $X'^{(j)}$,
\[
  |p_{j-1} - p_j| \leq e^{\epsilon_j |x_j^0 - x_j^1| / \delta_j}.
\]

\medskip
\noindent\textbf{Combining.} Summing over $j$,
\[
  \mathrm{Adv}_{\mathrm{DAG}}(\mathcal{A})
  \leq \sum_{j \in [\tau_{II}]} e^{\epsilon_j |x_j^0 - x_j^1| / \delta_j} + |[\tau_I]| \cdot \mathrm{negl}(\kappa).
\]
\end{proof}

\subsection{Proof of Theorem~\ref{thm:privacy-leakage}}
\label{app:thm:privacy-leakage}
\begin{proof}
Since $\mathcal{M}$ is $\epsilon$-mDP on 
$(\mathcal{Y}, d_{\mathcal{Y}})$, for all 
$y, y' \in \mathcal{Y}$ and measurable 
$T \subseteq \mathcal{Y}$:
\[
    \Pr[\mathcal{M}(y) \in T] 
    \leq e^{\epsilon\, d_{\mathcal{Y}}(y, y')} 
    \Pr[\mathcal{M}(y') \in T].
\]
Let $y = f(x)$ and $y' = f(x')$ for some $x, x' \in \mathcal{X}$. Then:
\[
    \Pr[\mathcal{M}(f(x)) \in T] 
    \leq e^{\epsilon\, d_{\mathcal{Y}}(f(x), f(x'))} 
    \Pr[\mathcal{M}(f(x')) \in T].
\]
Apply the upper Lipschitz bound 
$d_{\mathcal{Y}}(f(x), f(x')) \leq L\, d_{\mathcal{X}}(x, x')$ and substitute $\psi$:
\[
    \Pr[\psi(x) \in T] 
    \leq e^{\epsilon\, L\, d_{\mathcal{X}}(x, x')} 
    \Pr[\psi(x') \in T].
\]
\end{proof}

% \subsection{Proof of Theorem~\ref{thm:total-leakage}}
% \label{app:thm:total-leakage}
% \begin{proof}
% Since all mechanisms use independent randomness, the joint density factors:
% \begin{align*}
%     &\prod_{r=1}^{k} \frac{\Pr[\mathcal{M}_r(x_r) \in S_r]}{\Pr[\mathcal{M}_r(x_r') \in S_r]} \cdot \prod_{j=1}^{d} \frac{\Pr[\mathcal{N}_j(f_j(x)) \in T_j]}{\Pr[\mathcal{N}_j(f_j(x')) \in T_j]}.
% \end{align*}
% Each root ratio is bounded by the $\epsilon_r$-mDP guarantee of $\mathcal{M}_r$:
% \[
%     \frac{\Pr[\mathcal{M}_r(x_r) \in S_r]}{\Pr[\mathcal{M}_r(x_r') \in S_r]} \leq e^{\epsilon_r\, |x_r - x_r'|}.`
% \]
% Each derived ratio is bounded by the $\hat{\epsilon}_j$-mDP guarantee of $\mathcal{N}_j$:
% \[
%     \frac{\Pr[\mathcal{N}_j(f_j(x)) \in T_j]}{\Pr[\mathcal{N}_j(f_j(x')) \in T_j]} \leq e^{\hat{\epsilon}_j\, |f_j(x) - f_j(x')|}.
% \]
% Taking the product yields the stated bound. For tightness: for each mechanism, there exists a choice of $S_r^*$ or $T_j^*$ that achieves the corresponding ratio. Since the mechanisms are independent, the product set $S_1^* \times \cdots \times S_k^* \times T_1^* \times \cdots \times T_d^*$ achieves all worst cases simultaneously.
% \end{proof}

\subsection{Proof of Theorem~\ref{thm:implied-privacy}}
\label{app:thm:implied-privacy}
\begin{proof}
Since $\mathcal{M}$ is an $\epsilon$-mDP mechanism on $(\mathcal{X}, d_\mathcal{X})$, 
\[
    \Pr[\mathcal{M}(x) \in S]
    \;\leq\;
    e^{\epsilon \, d_{\mathcal{X}}(x, x')}
    \Pr[\mathcal{M}(x') \in S],
\]
where the probability is taken over the randomness of $\mathcal{M}$. \\

\noindent Fix $x, x' \in \mathcal{X}$ and let $T \subseteq \mathcal{Y}$ be any measurable set.  
Define $S = f^{-1}(T) \subseteq \mathcal{X}$. \\

\noindent Then, by the definition of $\phi = f \circ \mathcal{M}$,
\[
    \Pr[\phi(x) \in T]
    \;=\;
    \Pr[f(\mathcal{M}(x)) \in T]
    \;=\;
    \Pr[\mathcal{M}(x) \in S].
\]
Applying the mDP property of $\mathcal{M}$ yields
\begin{align*}
    \Pr[\mathcal{M}(x) \in S]
    &\leq
    e^{\epsilon \, d_{\mathcal{X}}(x, x')}
    \Pr[\mathcal{M}(x') \in S] \\
    \Rightarrow \quad
    \Pr[\phi(x) \in T]
    &\leq
    e^{\epsilon \, d_{\mathcal{X}}(x, x')}
    \Pr[\phi(x') \in T].
\end{align*}

\noindent Since $f$ is bi-Lipschitz, we have
\[
    d_{\mathcal{X}}(x, x')
    \;\leq\;
    \frac{1}{\alpha} \, d_{\mathcal{Y}}(f(x), f(x')).
\]

\noindent Substituting this bound into the inequality above gives
\[
    \Pr[\phi(x) \in T]
    \;\leq\;
    e^{\frac{\epsilon}{\alpha} \, d_{\mathcal{Y}}(f(x), f(x'))}
    \Pr[\phi(x') \in T].
\]

\noindent Thus, $\phi$ is $\frac{\epsilon}{\alpha}$-mDP with respect to the metric $d_{\mathcal{Y}}(f(x), f(x'))$.  
Consequently, adding mDP noise directly to $y = f(x)$ in $(\mathcal{Y}, d_{\mathcal{Y}})$ with privacy parameter $\frac{\epsilon}{\alpha}$ provides at least as strong a privacy guarantee as first noising $x$ and then applying $f$.
\end{proof}

% \subsection{Per-Template Privacy-Utility  Analysis}
\label{app:fair-utility-details}
\subsection{RQ1: Full Adversarial Utility Tables}
\label{app:rq1-tables}

This appendix provides the complete aggregate wMAPE and Risk Class Error (RCE) tables for all three mechanisms (Exponential, Bounded Laplace, Staircase) under each of the three adversary turn budgets ($t \in \{k{+}1, 2k{+}1, 3k{+}1\}$). Per-template wMAPE breakdowns at $\varepsilon = 0.1$ are also included. The main text (Sec.~\ref{sec:experiments-rq1}) reports the Exponential mechanism; the Bounded Laplace and Staircase results follow the same trends. Bootstrap standard errors (1000 resamples) are shown inline as {\scriptsize $\pm$SE}; aggregate SEs are propagated from per-template SEs as $\sqrt{\sum_i \mathrm{SE}_i^2}/k$.

\begin{table*}[t]
\centering
\setlength{\tabcolsep}{2.8pt}
\small
\caption{Aggregate wMAPE (\%) averaged across 8 templates with bootstrap SE. Budget $B = (k+1) \cdot \varepsilon$. Best mean per row in \textbf{bold}.}
\label{tab:rq1_kplus1_wmape}
\begin{tabular}{cc|ccc|ccc|ccc}
\toprule
& & \multicolumn{3}{c|}{\textbf{Exponential}} & \multicolumn{3}{c|}{\textbf{Bounded Laplace}} & \multicolumn{3}{c}{\textbf{Staircase}} \\
$\varepsilon$ & $\bar{B}$ & All & Roots & Opt & All & Roots & Opt & All & Roots & Opt \\
\midrule
5.0 & 18.12 & 0.6\pmstd{0.0} & 0.5\pmstd{0.0} & 0.4\pmstd{0.0} & 0.5\pmstd{0.0} & 0.4\pmstd{0.0} & 0.4\pmstd{0.0} & 0.4\pmstd{0.0} & 0.4\pmstd{0.0} & \textbf{0.4\pmstd{0.0}} \\
2.0 & 7.25 & 1.4\pmstd{0.1} & 1.0\pmstd{0.1} & 0.7\pmstd{0.0} & 0.8\pmstd{0.0} & 0.7\pmstd{0.0} & 0.5\pmstd{0.0} & 0.8\pmstd{0.0} & 0.5\pmstd{0.0} & \textbf{0.4\pmstd{0.0}} \\
1.0 & 3.62 & 3.0\pmstd{0.2} & 2.0\pmstd{0.1} & 1.3\pmstd{0.1} & 1.6\pmstd{0.1} & 1.1\pmstd{0.1} & 0.8\pmstd{0.0} & 1.5\pmstd{0.1} & 1.0\pmstd{0.1} & \textbf{0.7\pmstd{0.0}} \\
0.5 & 1.81 & 5.6\pmstd{0.3} & 4.0\pmstd{0.2} & 2.7\pmstd{0.1} & 3.1\pmstd{0.2} & 2.2\pmstd{0.1} & \textbf{1.4\pmstd{0.1}} & 3.1\pmstd{0.2} & 2.1\pmstd{0.1} & 1.5\pmstd{0.1} \\
0.2 & 0.73 & 12.6\pmstd{0.6} & 10.1\pmstd{0.5} & 6.6\pmstd{0.3} & 7.7\pmstd{0.5} & 5.6\pmstd{0.3} & 3.7\pmstd{0.2} & 7.4\pmstd{0.4} & 5.3\pmstd{0.3} & \textbf{3.4\pmstd{0.2}} \\
0.1 & 0.36 & 20.5\pmstd{0.9} & 14.9\pmstd{0.7} & 13.6\pmstd{0.7} & 14.3\pmstd{0.7} & 10.2\pmstd{0.5} & 6.9\pmstd{0.4} & 14.8\pmstd{0.8} & 10.5\pmstd{0.6} & \textbf{6.9\pmstd{0.3}} \\
0.05 & 0.18 & 35.0\pmstd{1.4} & 26.0\pmstd{1.1} & 21.6\pmstd{0.9} & 25.8\pmstd{1.2} & 19.0\pmstd{1.0} & \textbf{14.6\pmstd{0.7}} & 26.9\pmstd{1.2} & 20.7\pmstd{1.0} & 15.4\pmstd{0.8} \\
0.02 & 0.07 & 70.3\pmstd{4.3} & 55.5\pmstd{6.8} & 46.0\pmstd{2.8} & 51.5\pmstd{2.4} & 39.1\pmstd{1.7} & \textbf{29.8\pmstd{1.3}} & 50.6\pmstd{2.3} & 38.5\pmstd{1.6} & 31.3\pmstd{1.3} \\
0.01 & 0.04 & 122\pmstd{10.1} & 102\pmstd{9.6} & 86.6\pmstd{6.2} & 118\pmstd{33.3} & 63.7\pmstd{3.1} & \textbf{56.5\pmstd{2.9}} & 90.4\pmstd{6.4} & 61.1\pmstd{3.4} & 57.6\pmstd{5.1} \\
0.005 & 0.02 & 218\pmstd{18.4} & 154\pmstd{11.7} & \textbf{120\pmstd{10.1}} & 202\pmstd{29.9} & 138\pmstd{21.7} & 161\pmstd{19.5} & 158\pmstd{25.8} & 122\pmstd{12.1} & 136\pmstd{19.7} \\
\bottomrule
\end{tabular}
\end{table*}

\begin{table*}[t]
\centering
\setlength{\tabcolsep}{2.8pt}
\small
\caption{Aggregate Risk Class Error (\%) averaged across 8 templates with bootstrap SE. Budget $B = (k+1) \cdot \varepsilon$. Best mean per row in \textbf{bold}.}
\label{tab:rq1_kplus1_rce}
\begin{tabular}{cc|ccc|ccc|ccc}
\toprule
& & \multicolumn{3}{c|}{\textbf{Exponential}} & \multicolumn{3}{c|}{\textbf{Bounded Laplace}} & \multicolumn{3}{c}{\textbf{Staircase}} \\
$\varepsilon$ & $\bar{B}$ & All & Roots & Opt & All & Roots & Opt & All & Roots & Opt \\
\midrule
5.0 & 18.12 & 0.7\pmstd{0.2} & 0.6\pmstd{0.2} & \textbf{0.4\pmstd{0.2}} & 0.5\pmstd{0.2} & 0.5\pmstd{0.2} & 0.6\pmstd{0.2} & 0.6\pmstd{0.2} & 0.4\pmstd{0.2} & 0.4\pmstd{0.2} \\
2.0 & 7.25 & 1.6\pmstd{0.3} & 1.2\pmstd{0.3} & 0.6\pmstd{0.2} & 1.0\pmstd{0.2} & 0.9\pmstd{0.2} & 0.7\pmstd{0.2} & 1.0\pmstd{0.2} & 1.1\pmstd{0.3} & \textbf{0.5\pmstd{0.2}} \\
1.0 & 3.62 & 2.9\pmstd{0.4} & 2.1\pmstd{0.3} & 1.4\pmstd{0.3} & 1.7\pmstd{0.3} & 0.9\pmstd{0.2} & 0.8\pmstd{0.2} & 1.5\pmstd{0.3} & \textbf{0.7\pmstd{0.2}} & 0.9\pmstd{0.2} \\
0.5 & 1.81 & 5.1\pmstd{0.5} & 3.2\pmstd{0.4} & 2.8\pmstd{0.4} & 3.0\pmstd{0.4} & 2.2\pmstd{0.4} & 1.5\pmstd{0.3} & 3.3\pmstd{0.4} & 2.0\pmstd{0.3} & \textbf{1.4\pmstd{0.3}} \\
0.2 & 0.73 & 11.2\pmstd{0.7} & 8.6\pmstd{0.7} & 5.8\pmstd{0.5} & 5.9\pmstd{0.6} & 4.4\pmstd{0.5} & 3.5\pmstd{0.4} & 5.5\pmstd{0.5} & 4.6\pmstd{0.5} & \textbf{3.4\pmstd{0.4}} \\
0.1 & 0.36 & 15.4\pmstd{0.9} & 12.4\pmstd{0.7} & 10.3\pmstd{0.7} & 11.5\pmstd{0.7} & 8.5\pmstd{0.6} & 6.2\pmstd{0.6} & 10.3\pmstd{0.7} & 7.8\pmstd{0.6} & \textbf{5.2\pmstd{0.5}} \\
0.05 & 0.18 & 24.2\pmstd{1.0} & 19.1\pmstd{0.9} & 16.5\pmstd{0.9} & 19.3\pmstd{0.9} & 13.7\pmstd{0.8} & 11.5\pmstd{0.7} & 18.3\pmstd{0.9} & 14.2\pmstd{0.8} & \textbf{11.0\pmstd{0.7}} \\
0.02 & 0.07 & 32.3\pmstd{1.1} & 29.6\pmstd{1.1} & 26.0\pmstd{1.0} & 29.4\pmstd{1.1} & 26.3\pmstd{1.1} & \textbf{18.8\pmstd{0.9}} & 29.5\pmstd{1.1} & 24.9\pmstd{1.0} & 19.5\pmstd{0.9} \\
0.01 & 0.04 & 36.7\pmstd{1.1} & 34.4\pmstd{1.1} & 34.6\pmstd{1.1} & 37.4\pmstd{1.2} & 34.2\pmstd{1.1} & 30.0\pmstd{1.1} & 38.2\pmstd{1.2} & 33.8\pmstd{1.1} & \textbf{28.6\pmstd{1.1}} \\
0.005 & 0.02 & 38.3\pmstd{1.2} & 38.8\pmstd{1.1} & 37.9\pmstd{1.2} & 46.1\pmstd{1.2} & 41.5\pmstd{1.2} & 40.4\pmstd{1.2} & 44.3\pmstd{1.2} & 41.9\pmstd{1.2} & \textbf{37.6\pmstd{1.2}} \\
\bottomrule
\end{tabular}
\end{table*}

\begin{table*}[t]
\centering
\setlength{\tabcolsep}{2.8pt}
\small
\caption{Per-template wMAPE (\%) with bootstrap SE at $\varepsilon = 0.1$ ($\bar{B} = 0.36$). Budget $B = (k+1) \cdot \varepsilon$. Best per row in \textbf{bold}.}
\label{tab:rq1_kplus1_per_tmpl_wmape}
\begin{tabular}{l|ccc|ccc|ccc}
\toprule
& \multicolumn{3}{c|}{\textbf{Exponential}} & \multicolumn{3}{c|}{\textbf{Bounded Laplace}} & \multicolumn{3}{c}{\textbf{Staircase}} \\
Template & All & Roots & Opt & All & Roots & Opt & All & Roots & Opt \\
\midrule
ANEMIA & 2.6\pmstd{0.2} & 1.9\pmstd{0.1} & 1.2\pmstd{0.1} & 1.4\pmstd{0.1} & 0.9\pmstd{0.1} & 0.7\pmstd{0.1} & 1.2\pmstd{0.1} & 0.9\pmstd{0.1} & \textbf{0.6\pmstd{0.0}} \\
FIB4 & 25.3\pmstd{2.4} & 21.4\pmstd{2.1} & 17.8\pmstd{2.2} & 14.3\pmstd{1.4} & 11.0\pmstd{0.9} & \textbf{8.0\pmstd{0.7}} & 15.1\pmstd{2.0} & 9.8\pmstd{0.8} & 8.8\pmstd{0.8} \\
AIP & 43.8\pmstd{4.2} & 33.8\pmstd{3.3} & 25.3\pmstd{2.5} & 30.7\pmstd{3.3} & 23.5\pmstd{2.5} & 12.7\pmstd{1.3} & 29.1\pmstd{3.1} & 25.1\pmstd{2.7} & \textbf{12.0\pmstd{1.3}} \\
CONICITY & 2.6\pmstd{0.2} & 2.4\pmstd{0.1} & 1.8\pmstd{0.1} & 1.5\pmstd{0.1} & 0.9\pmstd{0.1} & 0.9\pmstd{0.1} & 1.5\pmstd{0.1} & 1.0\pmstd{0.1} & \textbf{0.9\pmstd{0.1}} \\
VASCULAR & 6.6\pmstd{0.5} & 4.4\pmstd{0.3} & 4.4\pmstd{0.3} & 3.3\pmstd{0.3} & 2.2\pmstd{0.2} & 2.4\pmstd{0.2} & 3.5\pmstd{0.3} & 2.1\pmstd{0.2} & \textbf{2.1\pmstd{0.1}} \\
TYG & 4.5\pmstd{0.3} & 3.4\pmstd{0.2} & 3.1\pmstd{0.2} & 3.5\pmstd{0.3} & 2.3\pmstd{0.2} & \textbf{1.7\pmstd{0.1}} & 3.2\pmstd{0.2} & 2.2\pmstd{0.2} & 1.8\pmstd{0.2} \\
HOMA & 28.5\pmstd{3.6} & 19.0\pmstd{2.8} & 18.1\pmstd{2.4} & 18.9\pmstd{2.2} & 12.3\pmstd{1.6} & \textbf{9.9\pmstd{1.3}} & 18.9\pmstd{2.3} & 12.2\pmstd{1.8} & 11.1\pmstd{1.0} \\
NLR & 49.8\pmstd{3.7} & 32.8\pmstd{2.4} & 37.5\pmstd{3.2} & 40.8\pmstd{3.8} & 28.2\pmstd{3.0} & 19.0\pmstd{2.0} & 46.1\pmstd{4.7} & 30.4\pmstd{3.0} & \textbf{18.1\pmstd{1.7}} \\
\bottomrule
\end{tabular}
\end{table*}

\begin{table*}[t]
\centering
\setlength{\tabcolsep}{2.8pt}
\small
\caption{Aggregate wMAPE (\%) averaged across 8 templates with bootstrap SE. Budget $B = (2k+1) \cdot \varepsilon$. Best mean per row in \textbf{bold}.}
\label{tab:rq1_2kplus1_wmape}
\begin{tabular}{cc|ccc|ccc|ccc}
\toprule
& & \multicolumn{3}{c|}{\textbf{Exponential}} & \multicolumn{3}{c|}{\textbf{Bounded Laplace}} & \multicolumn{3}{c}{\textbf{Staircase}} \\
$\varepsilon$ & $\bar{B}$ & All & Roots & Opt & All & Roots & Opt & All & Roots & Opt \\
\midrule
5.0 & 31.25 & 0.7\pmstd{0.0} & 0.4\pmstd{0.0} & 0.4\pmstd{0.0} & 0.5\pmstd{0.0} & 0.4\pmstd{0.0} & 0.4\pmstd{0.0} & 0.4\pmstd{0.0} & 0.4\pmstd{0.0} & \textbf{0.4\pmstd{0.0}} \\
2.0 & 12.50 & 1.5\pmstd{0.1} & 0.7\pmstd{0.0} & 0.5\pmstd{0.0} & 0.8\pmstd{0.0} & 0.5\pmstd{0.0} & 0.4\pmstd{0.0} & 0.8\pmstd{0.0} & 0.4\pmstd{0.0} & \textbf{0.4\pmstd{0.0}} \\
1.0 & 6.25 & 3.0\pmstd{0.2} & 1.2\pmstd{0.1} & 0.9\pmstd{0.1} & 1.6\pmstd{0.1} & 0.8\pmstd{0.0} & 0.6\pmstd{0.0} & 1.6\pmstd{0.1} & 0.7\pmstd{0.0} & \textbf{0.5\pmstd{0.0}} \\
0.5 & 3.12 & 5.7\pmstd{0.3} & 2.4\pmstd{0.1} & 1.7\pmstd{0.1} & 3.0\pmstd{0.2} & 1.3\pmstd{0.1} & 0.9\pmstd{0.0} & 2.9\pmstd{0.2} & 1.2\pmstd{0.1} & \textbf{0.8\pmstd{0.0}} \\
0.2 & 1.25 & 12.3\pmstd{0.6} & 5.9\pmstd{0.4} & 4.5\pmstd{0.2} & 8.6\pmstd{0.5} & 3.0\pmstd{0.2} & \textbf{2.2\pmstd{0.1}} & 8.0\pmstd{0.5} & 3.2\pmstd{0.2} & 2.2\pmstd{0.1} \\
0.1 & 0.62 & 20.3\pmstd{1.0} & 11.4\pmstd{0.6} & 7.8\pmstd{0.3} & 14.0\pmstd{0.8} & 6.2\pmstd{0.4} & \textbf{4.3\pmstd{0.3}} & 15.5\pmstd{0.8} & 6.6\pmstd{0.4} & 4.3\pmstd{0.2} \\
0.05 & 0.31 & 33.8\pmstd{1.7} & 18.0\pmstd{0.8} & 13.3\pmstd{0.6} & 25.8\pmstd{1.3} & 11.5\pmstd{0.6} & 8.8\pmstd{0.5} & 26.3\pmstd{1.2} & 11.7\pmstd{0.6} & \textbf{8.7\pmstd{0.4}} \\
0.02 & 0.12 & 75.7\pmstd{10.9} & 34.1\pmstd{1.5} & 29.2\pmstd{1.2} & 48.1\pmstd{2.0} & 27.7\pmstd{1.3} & 20.7\pmstd{1.0} & 48.0\pmstd{2.1} & 29.1\pmstd{1.3} & \textbf{20.4\pmstd{1.0}} \\
0.01 & 0.06 & 113\pmstd{8.4} & 63.0\pmstd{5.4} & 48.3\pmstd{2.6} & 82.4\pmstd{3.9} & 43.0\pmstd{1.8} & 43.9\pmstd{5.8} & 80.4\pmstd{5.4} & 43.9\pmstd{2.0} & \textbf{36.1\pmstd{1.7}} \\
0.005 & 0.03 & 218\pmstd{20.7} & 105\pmstd{6.9} & 102\pmstd{6.8} & 198\pmstd{27.5} & 76.5\pmstd{4.2} & 69.5\pmstd{5.2} & 223\pmstd{44.9} & 89.4\pmstd{13.8} & \textbf{65.5\pmstd{4.6}} \\
\bottomrule
\end{tabular}
\end{table*}

\begin{table*}[t]
\centering
\setlength{\tabcolsep}{2.8pt}
\small
\caption{Aggregate Risk Class Error (\%) averaged across 8 templates with bootstrap SE. Budget $B = (2k+1) \cdot \varepsilon$. Best mean per row in \textbf{bold}.}
\label{tab:rq1_2kplus1_rce}
\begin{tabular}{cc|ccc|ccc|ccc}
\toprule
& & \multicolumn{3}{c|}{\textbf{Exponential}} & \multicolumn{3}{c|}{\textbf{Bounded Laplace}} & \multicolumn{3}{c}{\textbf{Staircase}} \\
$\varepsilon$ & $\bar{B}$ & All & Roots & Opt & All & Roots & Opt & All & Roots & Opt \\
\midrule
5.0 & 31.25 & 0.6\pmstd{0.2} & 0.5\pmstd{0.2} & 0.6\pmstd{0.2} & 0.7\pmstd{0.2} & 0.5\pmstd{0.2} & \textbf{0.3\pmstd{0.1}} & 0.5\pmstd{0.2} & 0.4\pmstd{0.2} & 0.4\pmstd{0.2} \\
2.0 & 12.50 & 1.7\pmstd{0.3} & 0.9\pmstd{0.2} & 0.9\pmstd{0.2} & 1.1\pmstd{0.2} & 0.5\pmstd{0.2} & 0.4\pmstd{0.2} & 0.8\pmstd{0.2} & 0.5\pmstd{0.2} & \textbf{0.3\pmstd{0.1}} \\
1.0 & 6.25 & 2.7\pmstd{0.4} & 1.5\pmstd{0.3} & 0.8\pmstd{0.2} & 1.9\pmstd{0.3} & 0.6\pmstd{0.2} & 0.9\pmstd{0.2} & 1.4\pmstd{0.3} & 0.9\pmstd{0.2} & \textbf{0.4\pmstd{0.2}} \\
0.5 & 3.12 & 4.8\pmstd{0.5} & 2.8\pmstd{0.4} & 1.9\pmstd{0.3} & 2.8\pmstd{0.4} & 1.3\pmstd{0.3} & 1.2\pmstd{0.3} & 2.9\pmstd{0.4} & 1.0\pmstd{0.2} & \textbf{0.9\pmstd{0.2}} \\
0.2 & 1.25 & 10.2\pmstd{0.7} & 4.6\pmstd{0.5} & 4.5\pmstd{0.5} & 6.2\pmstd{0.6} & 3.0\pmstd{0.4} & \textbf{2.2\pmstd{0.4}} & 6.6\pmstd{0.6} & 2.8\pmstd{0.4} & 2.4\pmstd{0.4} \\
0.1 & 0.62 & 15.7\pmstd{0.8} & 10.1\pmstd{0.7} & 7.0\pmstd{0.6} & 10.6\pmstd{0.7} & 5.4\pmstd{0.5} & 4.4\pmstd{0.5} & 11.5\pmstd{0.7} & 5.4\pmstd{0.5} & \textbf{3.6\pmstd{0.4}} \\
0.05 & 0.31 & 23.2\pmstd{1.0} & 14.5\pmstd{0.8} & 11.7\pmstd{0.7} & 18.3\pmstd{0.9} & 9.2\pmstd{0.7} & 7.4\pmstd{0.6} & 19.5\pmstd{1.0} & 8.7\pmstd{0.7} & \textbf{6.4\pmstd{0.6}} \\
0.02 & 0.12 & 32.0\pmstd{1.1} & 25.4\pmstd{1.0} & 21.8\pmstd{1.0} & 30.2\pmstd{1.1} & 18.0\pmstd{0.9} & 14.8\pmstd{0.8} & 29.4\pmstd{1.1} & 20.4\pmstd{0.9} & \textbf{14.6\pmstd{0.8}} \\
0.01 & 0.06 & 35.8\pmstd{1.1} & 30.1\pmstd{1.1} & 27.2\pmstd{1.0} & 37.0\pmstd{1.2} & 27.1\pmstd{1.1} & 24.0\pmstd{1.0} & 37.8\pmstd{1.2} & 25.9\pmstd{1.1} & \textbf{23.1\pmstd{1.0}} \\
0.005 & 0.03 & 38.0\pmstd{1.1} & 34.6\pmstd{1.1} & 35.7\pmstd{1.1} & 42.5\pmstd{1.2} & 34.8\pmstd{1.2} & 33.6\pmstd{1.2} & 42.4\pmstd{1.2} & 35.1\pmstd{1.2} & \textbf{30.7\pmstd{1.1}} \\
\bottomrule
\end{tabular}
\end{table*}

\begin{table*}[t]
\centering
\setlength{\tabcolsep}{2.8pt}
\small
\caption{Per-template wMAPE (\%) with bootstrap SE at $\varepsilon = 0.1$ ($\bar{B} = 0.62$). Budget $B = (2k+1) \cdot \varepsilon$. Best per row in \textbf{bold}.}
\label{tab:rq1_2kplus1_per_tmpl_wmape}
\begin{tabular}{l|ccc|ccc|ccc}
\toprule
& \multicolumn{3}{c|}{\textbf{Exponential}} & \multicolumn{3}{c|}{\textbf{Bounded Laplace}} & \multicolumn{3}{c}{\textbf{Staircase}} \\
Template & All & Roots & Opt & All & Roots & Opt & All & Roots & Opt \\
\midrule
ANEMIA & 2.4\pmstd{0.1} & 1.1\pmstd{0.1} & 0.7\pmstd{0.0} & 1.3\pmstd{0.1} & 0.6\pmstd{0.0} & 0.4\pmstd{0.0} & 1.4\pmstd{0.1} & 0.6\pmstd{0.0} & \textbf{0.3\pmstd{0.0}} \\
FIB4 & 23.9\pmstd{1.8} & 10.1\pmstd{0.8} & 10.0\pmstd{0.9} & 13.7\pmstd{1.2} & 7.2\pmstd{0.7} & \textbf{4.6\pmstd{0.4}} & 15.2\pmstd{1.2} & 5.9\pmstd{0.5} & 5.0\pmstd{0.4} \\
AIP & 41.9\pmstd{4.1} & 25.3\pmstd{2.4} & 12.7\pmstd{1.4} & 28.7\pmstd{3.1} & 14.6\pmstd{1.6} & \textbf{7.6\pmstd{0.8}} & 35.3\pmstd{3.9} & 17.0\pmstd{2.0} & 7.8\pmstd{1.0} \\
CONICITY & 2.8\pmstd{0.2} & 1.2\pmstd{0.1} & 1.0\pmstd{0.1} & 1.4\pmstd{0.1} & 0.6\pmstd{0.0} & \textbf{0.5\pmstd{0.0}} & 1.5\pmstd{0.1} & 0.6\pmstd{0.0} & 0.6\pmstd{0.0} \\
VASCULAR & 6.3\pmstd{0.5} & 2.7\pmstd{0.2} & 2.6\pmstd{0.2} & 3.5\pmstd{0.2} & 1.4\pmstd{0.1} & 1.3\pmstd{0.1} & 3.2\pmstd{0.2} & 1.4\pmstd{0.1} & \textbf{1.2\pmstd{0.1}} \\
TYG & 4.6\pmstd{0.3} & 2.6\pmstd{0.2} & 2.0\pmstd{0.1} & 3.4\pmstd{0.3} & 1.4\pmstd{0.1} & \textbf{1.1\pmstd{0.1}} & 3.3\pmstd{0.3} & 1.6\pmstd{0.1} & 1.1\pmstd{0.1} \\
HOMA & 31.1\pmstd{4.4} & 17.3\pmstd{1.7} & 9.9\pmstd{1.2} & 18.5\pmstd{2.6} & 7.2\pmstd{1.0} & 6.1\pmstd{0.9} & 17.7\pmstd{2.3} & 6.7\pmstd{1.0} & \textbf{5.6\pmstd{0.7}} \\
NLR & 49.2\pmstd{4.3} & 31.2\pmstd{3.4} & 23.1\pmstd{1.5} & 41.4\pmstd{4.9} & 16.9\pmstd{2.0} & \textbf{12.5\pmstd{1.6}} & 46.6\pmstd{3.9} & 19.5\pmstd{2.6} & 12.8\pmstd{1.1} \\
\bottomrule
\end{tabular}
\end{table*}

\begin{table*}[t]
\centering
\setlength{\tabcolsep}{2.8pt}
\small
\caption{Aggregate wMAPE (\%) averaged across 8 templates with bootstrap SE. Budget $B = (3k+1) \cdot \varepsilon$. Best mean per row in \textbf{bold}.}
\label{tab:rq1_3kplus1_wmape}
\begin{tabular}{cc|ccc|ccc|ccc}
\toprule
& & \multicolumn{3}{c|}{\textbf{Exponential}} & \multicolumn{3}{c|}{\textbf{Bounded Laplace}} & \multicolumn{3}{c}{\textbf{Staircase}} \\
$\varepsilon$ & $\bar{B}$ & All & Roots & Opt & All & Roots & Opt & All & Roots & Opt \\
\midrule
5.0 & 44.38 & 0.6\pmstd{0.0} & 0.4\pmstd{0.0} & 0.4\pmstd{0.0} & 0.5\pmstd{0.0} & 0.4\pmstd{0.0} & 0.4\pmstd{0.0} & 0.4\pmstd{0.0} & \textbf{0.4\pmstd{0.0}} & 0.4\pmstd{0.0} \\
2.0 & 17.75 & 1.5\pmstd{0.1} & 0.5\pmstd{0.0} & 0.4\pmstd{0.0} & 0.8\pmstd{0.0} & 0.4\pmstd{0.0} & 0.4\pmstd{0.0} & 0.8\pmstd{0.1} & 0.4\pmstd{0.0} & \textbf{0.4\pmstd{0.0}} \\
1.0 & 8.88 & 3.0\pmstd{0.2} & 0.9\pmstd{0.1} & 0.6\pmstd{0.0} & 1.6\pmstd{0.1} & 0.6\pmstd{0.0} & 0.5\pmstd{0.0} & 1.5\pmstd{0.1} & 0.4\pmstd{0.0} & \textbf{0.4\pmstd{0.0}} \\
0.5 & 4.44 & 5.9\pmstd{0.3} & 1.8\pmstd{0.1} & 1.2\pmstd{0.1} & 3.0\pmstd{0.2} & 1.0\pmstd{0.1} & 0.7\pmstd{0.0} & 3.0\pmstd{0.2} & 0.9\pmstd{0.0} & \textbf{0.6\pmstd{0.0}} \\
0.2 & 1.78 & 12.2\pmstd{0.5} & 4.6\pmstd{0.3} & 3.3\pmstd{0.2} & 7.3\pmstd{0.4} & 2.2\pmstd{0.1} & \textbf{1.5\pmstd{0.1}} & 7.3\pmstd{0.4} & 2.3\pmstd{0.2} & 1.6\pmstd{0.1} \\
0.1 & 0.89 & 21.5\pmstd{1.0} & 8.7\pmstd{0.5} & 5.6\pmstd{0.3} & 15.9\pmstd{0.9} & 4.3\pmstd{0.2} & \textbf{2.8\pmstd{0.1}} & 13.8\pmstd{0.7} & 4.1\pmstd{0.2} & 3.1\pmstd{0.2} \\
0.05 & 0.44 & 33.7\pmstd{1.8} & 14.3\pmstd{0.7} & 11.2\pmstd{0.6} & 25.9\pmstd{1.1} & 8.7\pmstd{0.6} & \textbf{5.6\pmstd{0.3}} & 26.2\pmstd{1.2} & 8.4\pmstd{0.5} & 6.2\pmstd{0.3} \\
0.02 & 0.18 & 72.3\pmstd{5.1} & 27.9\pmstd{1.3} & 21.9\pmstd{1.0} & 50.0\pmstd{2.1} & 19.4\pmstd{1.0} & \textbf{14.5\pmstd{0.7}} & 48.1\pmstd{2.2} & 19.1\pmstd{0.9} & 14.8\pmstd{0.8} \\
0.01 & 0.09 & 129\pmstd{9.8} & 47.3\pmstd{2.6} & 37.3\pmstd{1.6} & 94.2\pmstd{9.2} & 35.7\pmstd{1.7} & \textbf{27.1\pmstd{1.2}} & 84.6\pmstd{6.1} & 34.9\pmstd{1.5} & 30.1\pmstd{2.9} \\
0.005 & 0.04 & 235\pmstd{26.4} & 81.0\pmstd{5.3} & 76.6\pmstd{5.1} & 193\pmstd{26.2} & 53.3\pmstd{2.4} & \textbf{45.4\pmstd{2.0}} & 176\pmstd{20.1} & 54.3\pmstd{2.4} & 50.6\pmstd{3.8} \\
\bottomrule
\end{tabular}
\end{table*}

\begin{table*}[t]
\centering
\setlength{\tabcolsep}{2.8pt}
\small
\caption{Aggregate Risk Class Error (\%) averaged across 8 templates with bootstrap SE. Budget $B = (3k+1) \cdot \varepsilon$. Best mean per row in \textbf{bold}.}
\label{tab:rq1_3kplus1_rce}
\begin{tabular}{cc|ccc|ccc|ccc}
\toprule
& & \multicolumn{3}{c|}{\textbf{Exponential}} & \multicolumn{3}{c|}{\textbf{Bounded Laplace}} & \multicolumn{3}{c}{\textbf{Staircase}} \\
$\varepsilon$ & $\bar{B}$ & All & Roots & Opt & All & Roots & Opt & All & Roots & Opt \\
\midrule
5.0 & 44.38 & 1.0\pmstd{0.2} & 0.4\pmstd{0.1} & 0.5\pmstd{0.2} & 0.5\pmstd{0.2} & \textbf{0.3\pmstd{0.1}} & 0.4\pmstd{0.2} & 0.6\pmstd{0.2} & 0.4\pmstd{0.2} & 0.4\pmstd{0.2} \\
2.0 & 17.75 & 1.8\pmstd{0.3} & 0.6\pmstd{0.2} & 0.8\pmstd{0.2} & 0.8\pmstd{0.2} & \textbf{0.4\pmstd{0.2}} & 0.6\pmstd{0.2} & 1.0\pmstd{0.2} & 0.5\pmstd{0.2} & 0.4\pmstd{0.2} \\
1.0 & 8.88 & 2.8\pmstd{0.4} & 1.1\pmstd{0.3} & 0.6\pmstd{0.2} & 1.8\pmstd{0.3} & 0.9\pmstd{0.2} & 0.7\pmstd{0.2} & 1.8\pmstd{0.3} & 0.8\pmstd{0.2} & \textbf{0.5\pmstd{0.2}} \\
0.5 & 4.44 & 3.9\pmstd{0.4} & 1.8\pmstd{0.3} & 1.8\pmstd{0.3} & 2.5\pmstd{0.4} & 1.1\pmstd{0.3} & 1.0\pmstd{0.2} & 2.3\pmstd{0.4} & 1.2\pmstd{0.2} & \textbf{0.9\pmstd{0.2}} \\
0.2 & 1.78 & 10.8\pmstd{0.7} & 4.2\pmstd{0.5} & 2.9\pmstd{0.4} & 6.4\pmstd{0.6} & 2.4\pmstd{0.4} & 2.0\pmstd{0.3} & 5.7\pmstd{0.5} & 1.8\pmstd{0.3} & \textbf{1.2\pmstd{0.3}} \\
0.1 & 0.89 & 17.5\pmstd{0.9} & 6.5\pmstd{0.6} & 5.3\pmstd{0.5} & 12.0\pmstd{0.8} & 4.1\pmstd{0.5} & \textbf{2.8\pmstd{0.4}} & 11.0\pmstd{0.7} & 3.7\pmstd{0.4} & 2.8\pmstd{0.4} \\
0.05 & 0.44 & 22.1\pmstd{1.0} & 12.3\pmstd{0.8} & 9.4\pmstd{0.7} & 17.3\pmstd{0.9} & 7.9\pmstd{0.6} & 5.3\pmstd{0.5} & 16.6\pmstd{0.9} & 6.2\pmstd{0.6} & \textbf{5.2\pmstd{0.5}} \\
0.02 & 0.18 & 30.8\pmstd{1.1} & 19.0\pmstd{0.9} & 16.2\pmstd{0.9} & 30.8\pmstd{1.1} & 14.0\pmstd{0.8} & \textbf{11.2\pmstd{0.7}} & 30.3\pmstd{1.1} & 13.9\pmstd{0.8} & 11.9\pmstd{0.8} \\
0.01 & 0.09 & 36.2\pmstd{1.1} & 26.5\pmstd{1.0} & 24.6\pmstd{1.0} & 36.0\pmstd{1.1} & 23.0\pmstd{1.0} & 19.0\pmstd{0.9} & 41.0\pmstd{1.2} & 23.7\pmstd{1.0} & \textbf{18.3\pmstd{0.9}} \\
0.005 & 0.04 & 38.8\pmstd{1.1} & 32.4\pmstd{1.1} & 34.2\pmstd{1.1} & 46.3\pmstd{1.3} & 30.8\pmstd{1.1} & \textbf{26.8\pmstd{1.1}} & 45.8\pmstd{1.2} & 33.9\pmstd{1.1} & 27.8\pmstd{1.1} \\
\bottomrule
\end{tabular}
\end{table*}

\begin{table*}[t]
\centering
\setlength{\tabcolsep}{2.8pt}
\small
\caption{Per-template wMAPE (\%) with bootstrap SE at $\varepsilon = 0.1$ ($\bar{B} = 0.89$). Budget $B = (3k+1) \cdot \varepsilon$. Best per row in \textbf{bold}.}
\label{tab:rq1_3kplus1_per_tmpl_wmape}
\begin{tabular}{l|ccc|ccc|ccc}
\toprule
& \multicolumn{3}{c|}{\textbf{Exponential}} & \multicolumn{3}{c|}{\textbf{Bounded Laplace}} & \multicolumn{3}{c}{\textbf{Staircase}} \\
Template & All & Roots & Opt & All & Roots & Opt & All & Roots & Opt \\
\midrule
ANEMIA & 2.6\pmstd{0.2} & 0.7\pmstd{0.0} & 0.5\pmstd{0.0} & 1.4\pmstd{0.1} & 0.4\pmstd{0.0} & 0.3\pmstd{0.0} & 1.3\pmstd{0.1} & 0.4\pmstd{0.0} & \textbf{0.2\pmstd{0.0}} \\
FIB4 & 26.1\pmstd{2.1} & 9.3\pmstd{0.9} & 5.9\pmstd{0.4} & 12.7\pmstd{1.3} & 4.5\pmstd{0.4} & \textbf{3.2\pmstd{0.2}} & 13.2\pmstd{1.0} & 3.9\pmstd{0.3} & 3.9\pmstd{0.5} \\
AIP & 52.9\pmstd{4.8} & 20.4\pmstd{2.3} & 10.3\pmstd{1.2} & 32.1\pmstd{3.5} & 10.6\pmstd{1.3} & 5.6\pmstd{0.6} & 30.3\pmstd{3.2} & 9.6\pmstd{1.1} & \textbf{5.5\pmstd{0.6}} \\
CONICITY & 2.7\pmstd{0.2} & 0.8\pmstd{0.1} & 0.7\pmstd{0.0} & 1.6\pmstd{0.1} & 0.5\pmstd{0.0} & \textbf{0.4\pmstd{0.0}} & 1.4\pmstd{0.1} & 0.4\pmstd{0.0} & 0.4\pmstd{0.0} \\
VASCULAR & 6.0\pmstd{0.4} & 2.0\pmstd{0.1} & 1.6\pmstd{0.1} & 3.5\pmstd{0.2} & 1.0\pmstd{0.1} & \textbf{0.9\pmstd{0.1}} & 3.0\pmstd{0.2} & 0.9\pmstd{0.1} & 1.0\pmstd{0.1} \\
TYG & 4.4\pmstd{0.3} & 2.0\pmstd{0.2} & 1.4\pmstd{0.1} & 3.1\pmstd{0.2} & 1.1\pmstd{0.1} & \textbf{0.7\pmstd{0.1}} & 3.3\pmstd{0.3} & 1.0\pmstd{0.1} & 0.8\pmstd{0.1} \\
HOMA & 30.7\pmstd{3.7} & 10.6\pmstd{1.2} & 7.9\pmstd{0.8} & 18.8\pmstd{2.5} & 5.4\pmstd{0.8} & \textbf{3.5\pmstd{0.5}} & 19.6\pmstd{2.7} & 5.4\pmstd{0.8} & 4.5\pmstd{0.4} \\
NLR & 46.4\pmstd{4.1} & 23.8\pmstd{2.6} & 16.5\pmstd{1.5} & 54.1\pmstd{5.7} & 10.8\pmstd{1.0} & \textbf{8.2\pmstd{0.6}} & 38.5\pmstd{3.8} & 11.1\pmstd{0.9} & 8.5\pmstd{0.9} \\
\bottomrule
\end{tabular}
\end{table*}